\newcommand{\paind}{p_{r_{\text{ind}}}}
\newcommand{\pfind}{p_{s_{\text{ind}}}}
\newcommand{\pastar}{p_{r^*}}
\newcommand{\pfstar}{p_{s^*}}
\newcommand{\prind}{p_{r_{\text{ind}}}}
\newcommand{\psind}{p_{s_{\text{ind}}}}
\newcommand{\prstar}{p_{r^*}}
\newcommand{\psstar}{p_{s^*}}
\newcommand{\pdagger}{p^\dagger}
\newcommand{\csstar}{c_{s^*}}
\newcommand{\pireq}{\pi_r^{\text{(eq)}}}
\newcommand{\piseq}{\pi_s^{\text{(eq)}}}
\newcommand{\pisell}{\pi_s^{(\ell)}}
\newcommand{\psstarell}{\psstar^{(\ell)}}
\newcommand{\alphasstar}{\alpha_{s^*}}
\newcommand{\alpharstar}{\alpha_{r^*}}
\newcommand{\alphardagger}{\alpha^\dagger_r}
\newcommand{\alphasdagger}{\alpha^\dagger_s}
\newcommand{\alphaopt}{\alpha_{s^*, r^*}}
\newcommand{\alphastar}{\alpha_*}
\newcommand{\alphastarell}{\alpha^{(o)}_*}
\newcommand{\alphamax}{\alpha_{\max}}
\newcommand{\Scal}{\mathcal{S}_p}
\newtheorem{lemma}{Lemma}
\newtheorem{proposition}{Proposition}
\newtheorem{corollary}{Corollary}
\newenvironment{proofsketch}{%
  \proof}{\endproof}
\title[A shared-revenue Bertrand game]{A shared-revenue Bertrand game}
\author{Raj Pabari}
\affiliation{%
\institution{Stanford}
\country{USA}
}
\email{rajpabari@stanford.edu}
\author{Udaya Ghai}
\affiliation{%
\institution{Amazon}
\country{USA}
}
\author{Dominique Perrault-Joncas}
\affiliation{%
\institution{Amazon}
\country{USA}
}
\author{Kari Torkkola}
\affiliation{%
\institution{Amazon}
\country{USA}
}
\author{Orit Ronen}
\affiliation{%
\institution{Amazon}
\country{USA}
}
\author{Dhruv Madeka}
\affiliation{%
\institution{Google}
\country{USA}
}
\author{Aviad Rubinstein}
\affiliation{%
\institution{Stanford}
\country{USA}
}
\author{Dean Foster}
\affiliation{%
\institution{Amazon}
\country{USA}
}
\author{Omer Gottesman}
\affiliation{%
\institution{Amazon}
\country{USA}
}
\begin{abstract}
We introduce and analyze a variation of the Bertrand game in which the revenue is shared between two players. This game models situations in which one economic agent can provide goods/services to consumers either directly or through an independent seller/contractor in return for a share of the revenue. We analyze the equilibria of this game, and show how they can predict different business outcomes as a function of the players' costs and the transferred revenue shares. Importantly, we identify game parameters for which independent sellers can simultaneously increase the original player's payoff while increasing consumer surplus. We then extend the shared-revenue Bertrand game by considering the shared revenue proportion as an action and giving the independent seller an outside option to sell elsewhere. This work constitutes a first step towards a general theory for how partnership and sharing of resources between economic agents can lead to more efficient markets and improve the outcomes of both agents as well as consumers.
\end{abstract}
\begin{document}

\begin{titlepage}

    \maketitle

    \vspace{1cm}
    \addtocontents{toc}{\protect\setcounter{tocdepth}{-1}}
    \tableofcontents
    \addtocontents{toc}{\protect\setcounter{tocdepth}{1}}

\end{titlepage}

\section{Introduction}
\subsection{Motivation}
\label{sec:motivation}

Many modern businesses partner with independent service providers to increase efficiency. One method through which this is done is for a company to let a separate company provide services/goods to customers in exchange for a share of the revenue. The independent service provider in turn may benefit from the wider visibility it gains through access to the original company's pool of customers. This is seen in a number of real-world scenarios, including but not limited to:
\begin{enumerate}[(A)]
    \item Consider a large store which allows independent sellers to use some of its shelf space to sell products in return for a portion of the selling price. The large store may of course obtain the same products from a vendor and sell them on its own, and the independent seller may set up its own store, but depending on the costs for each party, both parties might benefit from having the independent seller's products available in the larger store.
    \item Consider a construction and maintenance company, with a large pool of customers, providing a variety of services such as plumbing, repairs, painting services, etc. The company can either provide a specific service on its own, or refer the customer to a contractor who sets the price for the service, and pays the original company a referral fee.
    \item Consider a ridesharing or taxi platform. When a passenger requests a ride, they have the option of transporting the passenger with either (i) a self-driving car, owned by the ridesharing platform, or (ii) an independent human driver, whom the ridesharing platform charges a fee for each passenger they refer.
    \item A large airline could either operate a short regional flight with their own aircraft or allow a regional partner to operate the flight.
    \item A healthcare provider can either provide specialist services in-house or refer their patients to an external contractor, charging a referral fee.
\end{enumerate}
\subsection{Overview}

In this work we provide a game theoretic model for reasoning about the interaction between agents in scenarios such as these with a variation of the Bertrand pricing game \cite{bertrand1883review}. For concreteness, we will focus our intuition and analysis on scenario (A) from Section \ref{sec:motivation}, though our model can explain phenomena in domains far beyond retail sales. In accordance with scenario (A), the ``shared-revenue Bertrand game'' is played between a \emph{retailer (r)}, and an independent goods/services provider which we will refer to as the \emph{independent seller (seller, s)}. Both players sell an identical good/service, with each agent having a different cost. Under the revenue sharing program, whenever the independent seller fulfills demand, a portion of the revenue, called the \emph{referral fee $(\alpha)$}, is transferred from the seller to the retailer. Specifically, the game is as follows:

\label{game:shared-revenue-bertrand}
\begin{center} {\bf Shared-Revenue Bertrand Game}\smallskip
    \hrule
\end{center}{\bf Parameters}\\
Demand curve $q(p)$, retailer's cost $c_r$, seller's cost $c_s$\footnote{The leaving subgame of Section \ref{sec:leaving-subgame} is additionally parameterized by the outside option cost differential $\delta\in (0, c_r-c_s)$.}

\noindent {\bf Gameplay}\begin{enumerate}
    \item The retailer chooses a referral fee $\alpha \in (0, 1)$ (Section \ref{sec:alpha-game-setup})
    \item The independent seller chooses one of two options (Section \ref{sec:outside-option}):
          \subitem Stay in the retailer's revenue sharing program
          \subitem Leave the retailer's revenue sharing program and sell on their own
    \item The retailer and seller choose their selling prices:
          \subitem If the independent seller chose to stay, they play a staying subgame (Sections \ref{sec:preliminaries}-\ref{sec:refining-equilibria})
          \subitem If the independent seller chose to leave, they play a leaving subgame (Section \ref{sec:leaving-subgame})
\end{enumerate}\hrule\bigskip
\noindent The primary contribution of this work is in the half of the game tree where the independent seller stays in the revenue sharing program and the firms compete with coupled payoff functions. In Section \ref{sec:preliminaries}, we set up the staying game in more detail, then in Section \ref{sec:equilibria_analysis} we solve for the Nash equilibria of this subgame. In Section \ref{sec:refining-equilibria}, we interpret and analyze the equilibria of the staying subgame in depth, examining the effects of the revenue sharing program on the firms' payoffs and the end price to the customer.

The referral fee is a constant parameter of the staying subgame of Section \ref{sec:preliminaries}. In Section \ref{sec:alpha-game-setup} we proceed by analyzing the ``fee optimization game,'' which is the same as the the shared-revenue Bertrand game except the independent seller does not have the option to leave the revenue sharing program. The equilibria of the fee optimization game reveal that the retailer always prefers to allow the independent seller to fulfill demand in equilibrium, setting the referral fee strategically so as to maximize their payoff. There is a sweet spot for the optimal fee such that it is low enough to stimulate economic activity while high enough to maximize payoff. However, we still observe some unrealistic equilibria in which the retailer sets such a high fee that the independent seller must fulfill demand at their effective cost.

Finally, in Section \ref{sec:outside-option}, we solve the entire shared-revenue Bertrand game, reintroducing the ability for the independent seller to leave the revenue sharing program after observing the retailer's choice of referral fee. In Section \ref{sec:leaving-subgame}, we choose the leaving subgame to be one of standard Bertrand competition with asymmetric costs. Since the literature on direct price competition is extensive, we do not focus closely on choosing the leaving subgame \cite{bertrand1883review,shubik1980market,tirole1988theory}. With this, if the independent seller's threat of selling on their own is credible, the unrealistic equilibria of the fee optimization game will no longer be observed. The retailer must instead set the referral fee low enough to ensure the independent seller doesn't exercise their outside option.

Importantly, our analysis demonstrates that the retailer will often be incentivized to set a fee at which both the independent seller increases its payoff compared to selling independently, and the price to the customer is decreased relative to the price either the retailer or the independent seller would have set independently. Our results therefore suggest that such a revenue sharing mechanism can not only increase social welfare, but also simultaneously improve outcomes for both players as well as consumers.

\subsection{Related work}
Our model is an extension of the standard Bertrand game \cite{bertrand1883review} in which two sellers of an identical product with the same cost set their prices independently. As we justify in Appendix \ref{appendix:justification_of_setup}, we consider the regime in which the sellers' costs are asymmetric, which is a natural extension of the standard Bertrand game, studied for instance in \cite{blume2003, kartik2011, demuynck2019}. Prior works have pursued other modifications to the Bertrand game to better describe platform competition, for instance in \cite{rochettirole2003,angelini2024,armstrong2006}.

One field in which revenue sharing programs as described in Section \ref{sec:motivation} have been studied is in the independent contractor literature. Much of this literature analyzes relationships with independent contractors through the framework of principal agent theory \cite{coats2002, chang2014}. The revenue-sharing program can also be seen as a form of ``make-or-buy'' decision for the retailer \cite{coase1937, williamson1975markets, tadelis2002}. Some revenue sharing programs are implemented via two-part tariffs which involve a fixed base fee alongside the per-unit fee \cite{yin2004, griva2015}. A final field in which revenue sharing programs have been studied is optimal taxation theory, where the referral fee can be viewed as the retailer imposing a tax on the independent seller. Viewed in this light, our results about optimal referral fees resonate strongly with optimal taxation theories such as \cite{laffer2004,stern1976,stiglitz1987}.

Finally, the idea of an outside option that we introduce in Section \ref{sec:outside-option} is common in negotiation and bargaining theory \cite{binmore1989,osborne1990bargaining}, which makes sense because the participation of the independent seller in the retailer's revenue-sharing program can be seen as a negotiation of a contract. In fact, there is even an ``outside option principle'' \cite{watson2019}, which informally states than an outside option must be credible in order to have an effect on the Nash equilibria of the game; our results are consistent with this.

\section{Preliminaries}
\label{sec:preliminaries}
The core staying subgame is one in which two players -- the retailer ($r$), and the independent seller ($s$), sell an identical product. The actions in the game are the price each player chooses to charge, and we assume the players move simultaneously. Formally, we define the following game:

\label{game:staying-subgame}
\begin{center} {\bf Staying Subgame}\smallskip
    \hrule
\end{center}{\bf Parameters}\\
Demand curve $q(p)$, referral fee $\alpha$, retailer's cost $c_r$, seller's cost $c_s$

\noindent {\bf Gameplay}\begin{enumerate}
    \item The retailer and seller simultaneously choose prices, yielding a price configuration $(p_r, p_s)$
\end{enumerate}
\hrule\bigskip
The payoffs for the players are
\begin{align}
    \label{eq:a_payoff}
    \overline{\pi}_r & = (\overline{p}_r - \overline{c}_r) \overline{q}_r + \alpha \overline{p}_s \overline{q}_s \\
    \label{eq:f_payoff}
    \overline{\pi}_s & = (\overline{p}_s - \overline{c}_s - \alpha \overline{p}_s) \overline{q}_s
\end{align}
%
%
where $\overline{p}$, $\overline{c}$, $\overline{q}$ and $\alpha$ denote prices, costs, quantities sold and the referral fee, respectively, and subscripts denote the players. This game can be viewed as a modification of the Bertrand game, where the players have different costs, and a portion of the independent seller's revenue is shared with the retailer as part of the service provided by the retailer.

We will suppose that the demand curve $\overline{q}(\overline{p})$ is strictly concave, compactly supported, monotonically decreasing, twice-differentiable, and nonnegative\footnote{The only ``real" assumptions here are concavity and continuity. Compact support is imposed for convenience so we can normalize to $[0,1]$. If monotonicity and/or strictness are violated, the key prices and fees may become nonunique (connected intervals). If $q$ is only once-differentiable, we can replace derivatives with subderivatives.}. When both players set the same price, we also assume they split the market where the retailer fulfills $\beta \in [0, 1]$ of the demand and the independent seller fulfills the rest. This assumption is made for mathematical completeness, although in Proposition~\ref{thm:no_split_market} we show that no split market equilibrium exists, making the choice of $\beta$ irrelevant. Concretely, the retailer's demand curve is given by
\begin{align}
    \label{eq:demand_curve}
    \overline{q}_r(\overline{p}_r, \overline{p}_s) = \begin{cases}
                                                         \overline{q}(\overline{p}_r)       & \overline{p}_r < \overline{p}_s \\
                                                         \beta \overline{q}(\overline{p}_r) & \overline{p}_r = \overline{p}_s \\
                                                         0                                  & \overline{p}_r > \overline{p}_s
                                                     \end{cases}
\end{align}
The demand curve for the independent seller is analogous, but $\beta$ is replaced by $(1 - \beta)$. Because $q$ is compactly supported, there exists a maximum price $\overline{p}_{\max}\equiv \inf\{\overline{p}\mid\overline{q}(\overline{p})=0\}$. Throughout the rest of the paper we present the results in terms of unitless variables
\begin{align}
    \label{eq:normalization}
    p =  \frac{\overline{p}}{\overline{p}_{\max}}\ ; \quad c = \frac{\overline{c}}{\overline{p}_{\max}}\ ; \quad q(p)=\frac{\overline{q}(p)}{\overline{q}(0)}\ ; \quad \pi(p_r, p_s) = \frac{\overline{\pi}(p_r,p_s)}{\overline{q}(0)\overline{p}_{\max}}\ .
\end{align}
Under this normalization, the demand at price $p=0$ is 1, and the demand at price $p=1$ is $0$. Equations \eqref{eq:a_payoff}, \eqref{eq:f_payoff}, and \eqref{eq:demand_curve} remain unchanged if we remove bars. Note that under this normalization, $q$ inherits the properties of $\overline{q}$ (concavity, monotonicity, differentiability, and nonnegativity).

\subsection{Necessary assumptions}
\label{sec:assumptions}
In order for the game to be nontrivial, we must impose the following constraints:
\begin{enumerate}[(I)]
    \item There exists a price at which both the retailer and independent seller could achieve a nonzero payoff when fulfilling demand ($0 < c_s, c_r < 1$).
    \item The seller's cost is less than the retailer's cost ($c_s < c_r$).
    \item The referral fee takes some, but not all of the independent seller's revenue ($\alpha\in (0,1)$).
\end{enumerate}
In Appendix \ref{appendix:justification_of_setup}, we formally show that the game degenerates if these are violated, but the importance of each condition should be fairly intuitive. Overall, the revenue-sharing program should be thought of as for products that the retailer could conceivably sell themselves, but the independent seller is able to do so more efficiently.

%
\subsection{Important payoffs and prices}
\label{sec:additional_notation}


We now introduce several important prices which will be used throughout this paper. We also include a full notation table in Appendix \ref{appendix:notation_table}, and include derivations of the key prices and fees in Appendix \ref{appendix:key-prices}.

As we will later prove in Proposition~\ref{thm:no_split_market}, we only need to focus on cases where either the retailer or the independent seller fulfill all demand, and can ignore any split market scenario. When we write a payoff $\pi$ with two subscripts, the first denotes the the player whose payoff we refer to, and the second denotes the player who fulfills all demand. Thus, $\pi_{r, r}(p_r)$ and $\pi_{s, r}(p_r)$ are the retailer's and the independent seller's payoffs when the retailer fulfills all the demand, respectively. Similarly $\pi_{r, s}(p_s)$ and $\pi_{s, s}(p_s)$ are the retailer's and the independent seller's payoffs when the independent seller fulfills all the demand, respectively. Explicitly, it follows from \eqref{eq:a_payoff}, \eqref{eq:f_payoff}, and \eqref{eq:demand_curve} that
\begin{align}
     & \pi_{r,r}(p_r) =(p_r-c_r)q(p_r)\ ;   &  & \pi_{s, r}(p_r) = 0 \label{eq:retailer-fulfills-demand-payoffs} \\
     & \pi_{r,s}(p_s) = \alpha p_sq(p_s)\ ; &  & \pi_{s, s}(p_s) = ((1-\alpha)p_s - c_s)q(p_s)
    \label{eq:seller-fulfills-demand-payoffs}
\end{align}
In Figure~\ref{fig:payoff_example} we plot the payoffs of the players as a function of price for two choices of game parameters $(c_r, c_s, \alpha)$ in the special case that $q(p)=1-p$ is linear. All subsequent figures throughout the paper will also use this linear demand curve for consistency. Solid and dashed lines denote the players' payoffs when the retailer or independent seller fulfill all demand, respectively, while blue and red lines denote the payoffs to the retailer and the independent seller, respectively.

\begin{figure}
    \begin{center}
        \includegraphics[width=0.99
            \textwidth]{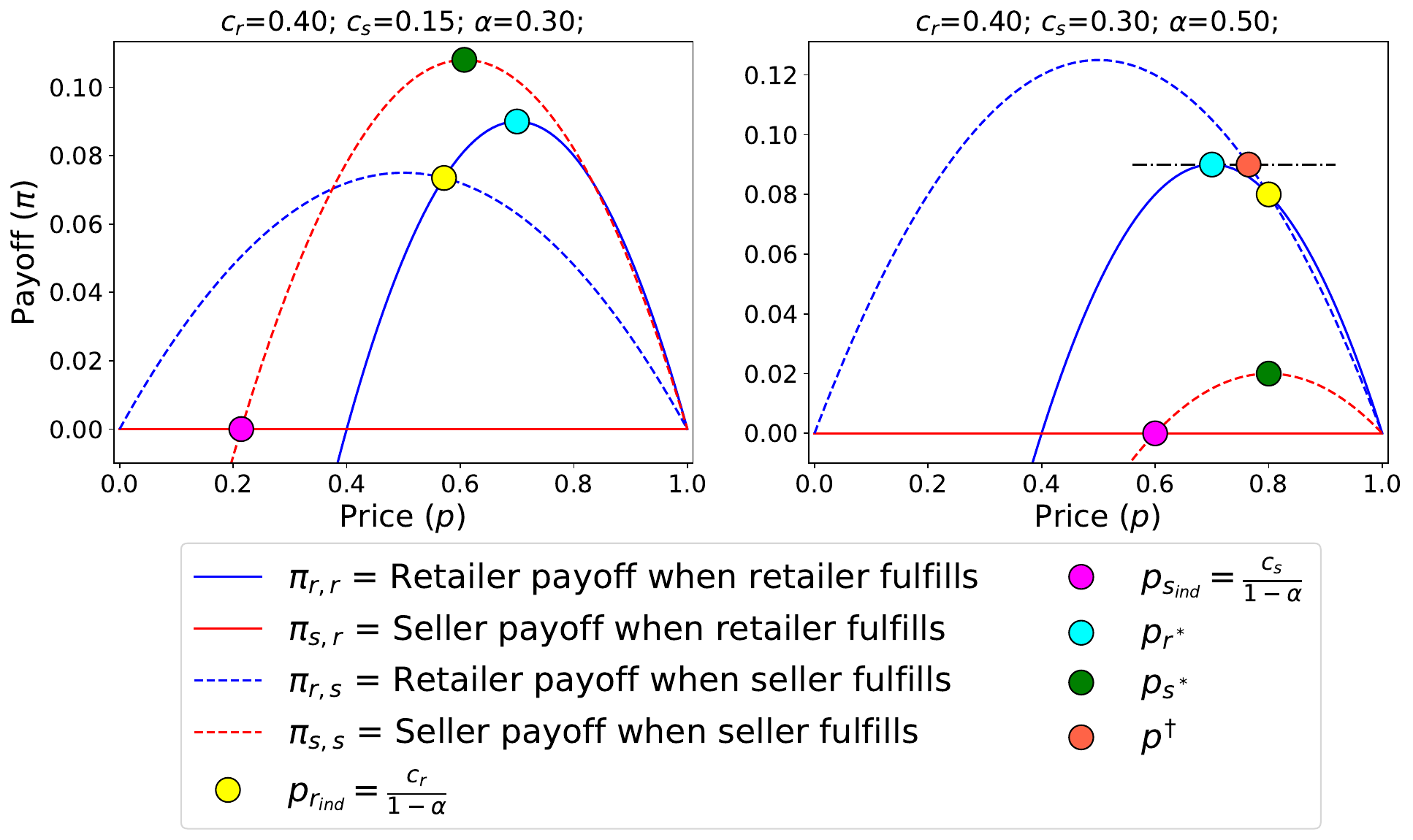}
        \caption{Sample payoff curves and important price points for linear demand curve $q(p) = 1-p$}
        \label{fig:payoff_example}
    \end{center}
\end{figure}

Figure~\ref{fig:payoff_example} also illustrates five important price points and their corresponding payoffs which will be used throughout this paper. We denote by $\pastar$ and $\pfstar$ the prices each player would set to maximize their own payoff if they fulfilled all demand, and call these the retailer's or independent seller's optimal selling price. Differentiating the players' single agent payoffs $\pi_{r,r}$ and $\pi_{s,s}$ and equating to zero gives first-order conditions for these prices in \eqref{eq:prstar-foc} and \eqref{eq:psstar-foc}.

We denote by $\paind$ and $\pfind$ the prices at which the retailer and the independent seller are indifferent to which player fulfills demand, i.e. $\pi_{r, r}(\paind) = \pi_{r, s}(\paind)$ and $\pi_{s, r}(\pfind) = \pi_{s, s}(\pfind)$. Solving these equations in \eqref{eq:prind-derivation} and \eqref{eq:psind-derivation} yields the following prices and payoffs --
\begin{align}
     & \paind \equiv \frac{c_r}{1 - \alpha}\ ; &  & \psind\equiv \frac{c_s}{1-\alpha}
\end{align}
We first note that $\pi_{s, s}(\pfind) = 0$, and therefore the independent seller's indifference price is also its breakeven price. More interestingly, while the retailer's breakeven price is still its cost, $c_r$, its indifference price is scaled by $(1 - \alpha)^{-1}$, but at that price its payoff is nonzero. This is the main feature which differentiates the shared-revenue Bertrand game from the classical Bertrand game, as the retailer can choose to fulfill demand on its own or allow the independent seller to fulfill demand, making revenue from the fee. As a result, this game gives rise to a more complex equilibrium structure which depends on the fees and costs of the players. Note also that because we assume $c_s < c_r$ (Section \ref{sec:assumptions}), we have $\pfind < \paind$.

Next, we introduce the price at which the retailer's payoff when the independent seller fulfills demand is equal to its optimal payoff had it fulfilled demand on its own, i.e. $\pdagger$ is the largest price such that
\begin{align}
    \pi_{r, r}(\pastar) = \pi_{r, s}(p^\dagger)
\end{align}
Note that $p^\dagger$ is only well-defined when $\alpha$ is sufficiently large; for instance it is not well-defined for the parameters in the left subplot of Figure \ref{fig:payoff_example} by inspection.
\section{Equilibrium analysis}
\label{sec:equilibria_analysis}
\subsection{Market split}
For simplicity, we will only consider pure strategies in this paper; thus, a strategy profile can be characterized fully by a tuple $(p_r, p_s)\in \Scal \equiv (0,1)\times (0,1)$ denoting the prices chosen by the retailer and independent seller. We first prove, in Proposition~\ref{thm:no_split_market}, that in all equilibria, one player takes the entire market.
\begin{proposition}
    \label{thm:no_split_market}
    Let $(p_r,p_s) = (p,p)$ be an equilibrium price configuration, with the equilibrium quantities sold by each player $q_r = \beta q(p)$ and $q_s = (1 - \beta) q(p)$. Then $\beta \in \{ 0, 1 \}$.
\end{proposition}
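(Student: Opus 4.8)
The plan is to argue by contradiction: suppose $(p,p)$ is an equilibrium with a genuine split $\beta\in(0,1)$, and show this forces the common price $p$ to equal two distinct indifference prices at once. Since $(p_r,p_s)\in\Scal$ we have $p\in(0,1)$, and because $q$ is strictly decreasing with $q(1)=0$ we may use $q(p)>0$ throughout. The central observation I would record first is that each player's payoff under the split is a convex combination of the two ``corner'' payoffs in which a single player takes the whole market:
\begin{align*}
\Pi_r &= \beta\,\pi_{r,r}(p) + (1-\beta)\,\pi_{r,s}(p),\\
\Pi_s &= (1-\beta)\,\pi_{s,s}(p) + \beta\,\pi_{s,r}(p) = (1-\beta)\,\pi_{s,s}(p),
\end{align*}
where I have used $\pi_{s,r}\equiv 0$ from \eqref{eq:retailer-fulfills-demand-payoffs}. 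Since $\beta\in(0,1)$, these are \emph{strict} convex combinations, which is the fact that will do the work.

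Next I would extract the necessary no-deviation inequalities from the two kinds of unilateral price changes. An upward deviation cedes the whole market and yields an \emph{exact} value: the retailer who raises its price collects only the fee $\pi_{r,s}(p)$, while the seller who raises its price collects $\pi_{s,r}(p)=0$; such deviations exist because $p<1$. A downward deviation instead captures the whole market, but since there is no ``largest price strictly below $p$'' I would handle it by continuity rather than by a maximizer: for every $p'\in(0,p)$ the deviating payoff must not exceed the equilibrium payoff, so $\pi_{r,r}(p')\le\Pi_r$ and $\pi_{s,s}(p')\le\Pi_s$, and letting $p'\to p^-$ gives $\pi_{r,r}(p)\le\Pi_r$ and $\pi_{s,s}(p)\le\Pi_s$ by continuity of $q$. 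Collecting everything, equilibrium forces $\Pi_r\ge\pi_{r,r}(p)$, $\Pi_r\ge\pi_{r,s}(p)$, $\Pi_s\ge\pi_{s,s}(p)$, and $\Pi_s\ge 0$.

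Finally I would invoke the convex-combination squeeze: a strict convex combination of two numbers that is simultaneously at least both of them must have those two numbers equal. Applied to $\Pi_r$ this gives $\pi_{r,r}(p)=\pi_{r,s}(p)$, i.e.\ $(p-c_r)q(p)=\alpha p\,q(p)$, which after dividing by $q(p)>0$ pins $p=\prind$; applied to $\Pi_s$ it gives $\pi_{s,s}(p)=0$, i.e.\ $(1-\alpha)p-c_s=0$, so $p=\psind$. But $\prind=c_r/(1-\alpha)$ and $\psind=c_s/(1-\alpha)$ are distinct because $c_s<c_r$ by Assumption~(II), a contradiction. Hence $\beta\in\{0,1\}$.

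I expect the only delicate point to be the downward-deviation step: one must resist asserting a ``best undercut'' and instead pass to the limit $p'\to p^-$, so that $\pi_{r,r}(p)\le\Pi_r$ (and its seller analogue) is obtained as a limit of genuinely valid no-deviation inequalities rather than from an attained optimum. Once that is set up correctly, the remainder is the short squeeze argument combined with the standing assumption $c_s<c_r$, which is precisely what separates the two indifference prices and yields the contradiction.
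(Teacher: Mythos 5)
Your proof is correct, and it rests on the same decomposition as the paper's: the split-market payoff is a strict convex combination of the two corner payoffs $\pi_{r,r}(p)$ and $\pi_{r,s}(p)$ (and, for the seller, of $\pi_{s,s}(p)$ and $0$). Where you diverge is in the endgame. The paper runs a three-way case analysis on $p$ versus $\prind$ and in each case constructs an explicit profitable deviation: the retailer overcuts when $p<\prind$, undercuts (with an explicit $\delta$ bound) when $p>\prind$, and in the boundary case $p=\prind$ the argument must switch players, using $\prind>\psind$ to let the seller undercut profitably. You instead collect all four no-deviation inequalities up front --- handling the undercutting side correctly by letting $p'\to p^-$ and invoking continuity rather than positing a best undercut --- and then apply the convex-combination squeeze to both players at once, forcing $p=\prind$ and $p=\psind$ simultaneously, so that the standing assumption $c_s<c_r$ delivers the contradiction. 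The two arguments use identical ingredients (the same deviations, and $c_s<c_r$ entering in the same essential role), but your packaging buys something: it treats the boundary case uniformly instead of requiring a player switch, dispenses with the paper's $\varepsilon$--$\delta$ bookkeeping, and surfaces the structural fact that a split could only survive at a price that is simultaneously both players' indifference price --- exactly what the cost asymmetry rules out. One point you rightly flag and should keep explicit: dividing by $q(p)$ requires $q(p)>0$ for $p\in(0,1)$, which follows from $q(1)=0$ together with nonnegativity, monotonicity, and strict concavity of $q$.
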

\begin{proofsketch}
    To prove the Proposition we show that when the market is split, the retailer's payoff is a weighted average of $\pi_{r, r}$ and $\pi_{r, s}$, and one of them is larger than the split market payoff. Thus, the retailer can always increase their payoff by either lowering or raising the price by an infinitesimal amount, fulfilling demand themselves or allowing the independent seller to do so respectively. See Appendix~\ref{appendix:no_split_market_proof} for the complete proof.
\end{proofsketch}
While this result is not surprising given that in nearly all standard non-symmetric game theory models, in equilibrium only one player fulfills all demand, this result is clearly not expected to hold in the real world. Game theory results may mimic the world more closely and arrive at more realistic equilibria in which players split demand by modeling factors such as incomplete information and brand loyalty, but these are beyond the scope of this work.

\subsection{Nash equilibrium derivation}
In light of Proposition~\ref{thm:no_split_market}, we can denote all equilibria in the game by either unequal prices $(p_r, p_s) \in \Scal$ or a shared price $(p, x)\in (0,1)\times \{r,s\}$, where $x$ denotes the player who fulfills the demand. In this subsection, we aim to provide an intuitive sketch of the Nash equilibria.

To narrow our focus, let's consider shared-price equilibria $(p,s)$ where the independent seller fulfills demand at price $p$ for now. In order for $(p, s)$ to be an Nash equilibrium, the following conditions must be satisfied:
\begin{enumerate}
    \item 0 = $\pi_{s, r}(p) \leq \pi_{s, s}(p)$
    \item $\sup_{p'\in [0, p) }\left[\pi_{s,s}(p')\right]\leq \pi_{s,s}(p)$
    \item $\sup_{p' \in [0, p)} \left[ \pi_{r, r} (p') \right] \leq \pi_{r, s}(p)$
\end{enumerate}
If condition 1 doesn't hold, the independent seller can achieve a higher payoff by increasing its price and letting the retailer fulfill demand. If condition 2 doesn't hold, the independent seller can increase its payoff by lowering their price and continuing to fulfill all demand. If condition 3 doesn't hold, then there is some price lower than $p$ the retailer can set to take the entire market and increase its payoff.

\begin{figure}
    \begin{center}
        \includegraphics[height=5.5cm]{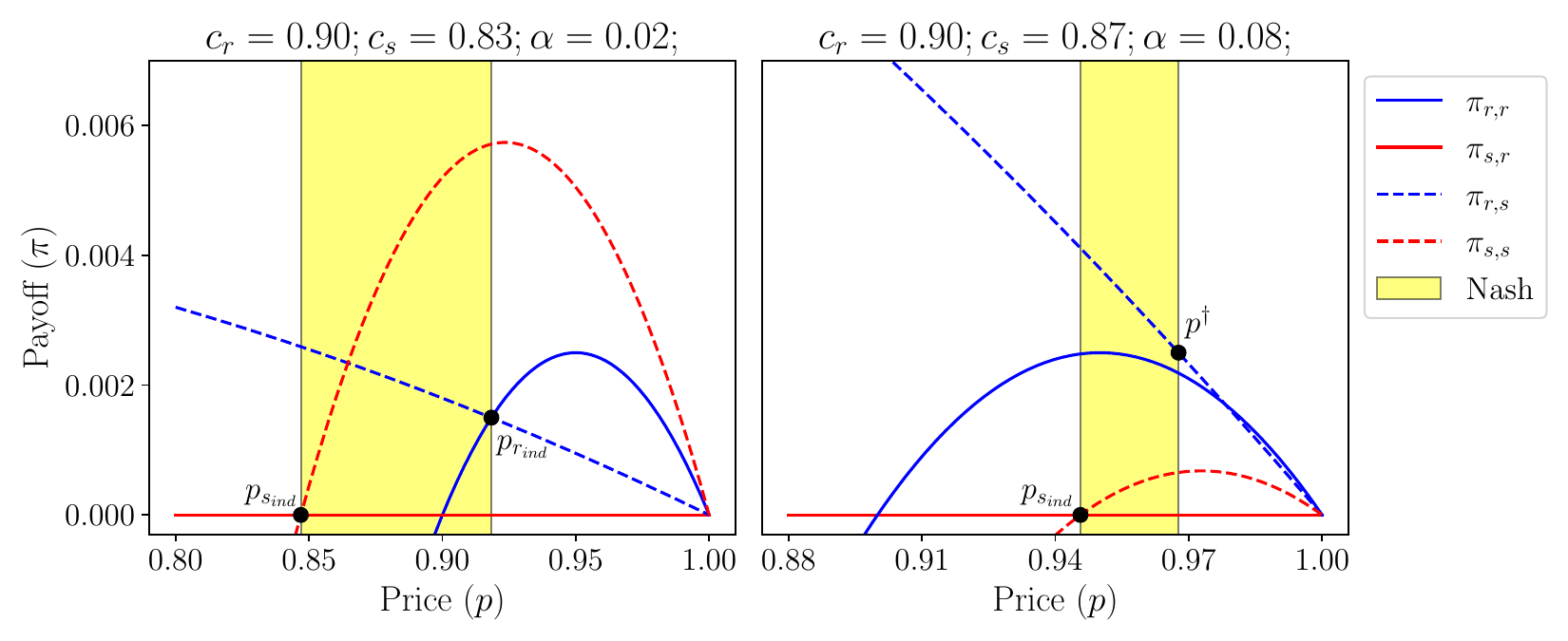}
        \caption{Shared-price Nash equilibria where the independent seller fulfills demand}
        \label{fig:nash-equilibrium-payoffs}
    \end{center}
\end{figure}
In Figure \ref{fig:nash-equilibrium-payoffs}, we highlight in yellow the Nash equilibrium prices where the independent seller fulfills demand. In the left subplot of Figure \ref{fig:nash-equilibrium-payoffs}, the Nash equilibria occur at prices $p\in [\psind, \prind]$, which satisfies all of the Nash equilibrium conditions:
\begin{enumerate}
    \item The seller achieves nonnegative payoff because $p\geq \psind$.
    \item The independent seller's payoff would decrease if they set a price $p'<p$ because $p\leq \psstar$.
    \item The retailer would not prefer to fulfill demand themselves because $p\leq \prind$.
\end{enumerate}
In the right subplot of Figure \ref{fig:nash-equilibrium-payoffs}, the Nash equilibria occur at prices $p\in [\psind, p^\dagger]$. Just as in the left subplot, conditions 1 and 2 are satisfied because $p\geq \psind$ and $p\leq \psstar$. Condition 3 is satisfied because $p\leq p^\dagger$. Notice that for the payoff curves in the right subplot, this is a stronger condition than $p\leq \prind$. This stronger condition is necessary because if $p> p^\dagger$, the retailer would prefer to set price $\prstar < p$ and achieve a payoff of
\[\pi_{r,r}(\prstar)=  \pi_{r,s}(p^\dagger)> \pi_{r,s}(p)\]
In Appendix \ref{appendix:shared-price-is}, we carry forth this intuition and translate conditions 1-3 into feasible intervals of prices for any $\alpha\in (0,1)$ rather than just for the specific referral fees in Figure \ref{fig:nash-equilibrium-payoffs}. For $(p,s)$ to be a Nash equilibrium, we find that the price must satisfy
\begin{equation}p\in \left[\psind, \min\{\prind,\prstar, \psstar\}\right]\cup \left[\max\{\prstar, \psind\}, \min \{p^\dagger, \psstar\}\right]\label{eq:price-conditions-nash-equilibrium-seller}\end{equation}
We acknowledge that it's still not clear what the equilibrium interval of prices look like for each possible choice of $\alpha$, namely what lower and upper bounds are attained for each possible referral fee -- this warrants a longer discussion which we defer to Section \ref{sec:nash-equilibrium-existence}. For instance, $p^\dagger$ is not well-defined for small $\alpha$, but we will see in \eqref{eq:alphardagger-introduction-again} and Lemma \ref{lemma:prind-prstar-pdagger-ordering} that whenever $\left[\max\{\prstar, \psind\}, \min \{p^\dagger, \psstar\}\right]$ is nonempty, $\alpha$ is high enough for $p^\dagger$ to be well-defined. For now, simply keep in mind that \eqref{eq:price-conditions-nash-equilibrium-seller} is sufficient only for shared-price Nash equilibria where the independent seller fulfills demand.

In Appendix \ref{appendix:shared-price-retailer}, we derive the conditions analogous to \eqref{eq:price-conditions-nash-equilibrium-seller} for shared-price equilibria $(p,r)$ where the retailer fulfills demand and find the following:
\begin{equation}
    p\in \left[\prind, \min\{ \psind, \prstar, \psstar\}\right]\label{eq:price-conditions-nash-equilibrium-retailer}
\end{equation}
Upon further inspection of \eqref{eq:price-conditions-nash-equilibrium-retailer}, we notice a subtle problem. Recall that $\psind < \prind$ always because $c_s < c_r$. Thus, the interval $\left[\prind, \min\{ \psind, \prstar, \psstar\}\right]$ is empty for all $\alpha$, which implies \emph{there are no shared price-equilibria where the retailer fulfills demand!}

This completes the first step of our analysis for shared-price equilibria, but we also need to consider unequal-price equilibria. Indeed, we consider equilibria where $p_r <p_s$ in Appendix \ref{appendix:unequal-price-retailer} and equilibria where $p_s <p_r$ in Appendix \ref{appendix:unequal-price-is}. Interested readers can find conditions similar to \eqref{eq:price-conditions-nash-equilibrium-seller} and \eqref{eq:price-conditions-nash-equilibrium-retailer} in the corresponding appendices; we also include a summary of all Nash equilibria with both equal and unequal prices in Section \ref{sec:summary-of-nash-equilibria}.

\subsection{Important fees for transitions between Nash equilibria}
\label{sec:nash-equilibrium-existence}
In this section, we answer the aforementioned question of what the interval in \eqref{eq:price-conditions-nash-equilibrium-seller} looks like for different values of $\alpha$. To do this, we will introduce five threshold fees that delineate the transitions between when different Nash equilibria can occur (skip ahead to Figure \ref{fig:param_space} for a preview). Similar analysis as in this section for unequal price equilibria can be found in Appendices \ref{appendix:unequal-price-retailer} and \ref{appendix:unequal-price-is}.

Consider first the left interval of \eqref{eq:price-conditions-nash-equilibrium-seller}, $p\in \left[\psind, \min\{\prind,\prstar, \psstar\}\right]$. We know that $\psind <\prind$ always. We have $\psind \leq \psstar$ so long as $\psind \leq 1$, which occurs when $\alpha\leq 1-c_s$. The condition for $\psind\leq \prstar$ is not quite as clear, so we introduce the retailer optimal price feasibility fee $\alpharstar$ (derived in \eqref{eq:alpharstar-derivation}) --
\begin{align}
     & \psind\leq \prstar \iff \alpha \leq \alpharstar\ ; &  & \text{where }\alpharstar \equiv 1-\frac{c_s}{\prstar}\label{eq:alpharstar-introduction}
\end{align}
The name comes from the fact that $\prstar\leq \psind$ is necessary for any equilibrium where the retailer fulfills demand at their optimal price, which we prove in Appendix \ref{appendix:unequal-price-retailer}. Note also that $\alpharstar\leq 1-c_s$ since $\prstar \in (0,1)$. Thus, the left interval of \eqref{eq:price-conditions-nash-equilibrium-seller} is nonempty if and only if $\alpha \leq \alpharstar$.

Now, we turn our attention to the value of $\min\{\prind,\prstar, \psstar\}$, the upper bound of the left interval of \eqref{eq:price-conditions-nash-equilibrium-seller}. This motivates the optimality switching fee $\alphaopt$ (derived in \eqref{eq:alphaopt-derivation}), the seller optimal price feasibility fee $\alphasstar$ (derived in \eqref{eq:alphasstar-derivation}), and the retailer $p^\dagger$ relevance fee $\alphardagger$ (derived in \eqref{eq:alphardagger-derivation}) --
\begin{align}
     & \psstar\leq \prstar \iff \alpha \leq \alphaopt\ ;     &  & \text{where }\alphaopt \equiv 1-\frac{c_s}{c_r} \label{eq:alphaopt-introduction}            \\
     & \prind\leq \psstar \iff \alpha \leq \alphasstar\ ;    &  & \text{where }\alphasstar \equiv 1-\frac{c_r}{\psstar}\label{eq:alphasstar-introduction}     \\
     & \prind \leq \prstar \iff \alpha \leq \alphardagger\ ; &  & \text{where }\alphardagger \equiv 1-\frac{c_r}{\prstar}\label{eq:alphardagger-introduction}
\end{align}
Analogously to $\alpharstar$, we will soon find that $\alpha\geq \alphasstar$ is necessary for an equilibrium where the independent seller fulfills demand at $\psstar$. Intersecting the inequalities in \eqref{eq:alpharstar-introduction}, \eqref{eq:alphaopt-introduction}, \eqref{eq:alphasstar-introduction}, and \eqref{eq:alphardagger-introduction} is sufficient to determine what the left interval of \eqref{eq:price-conditions-nash-equilibrium-seller} looks like in each region of parameter space. We will omit the algebra here and instead opt to summarize this in Section \ref{sec:summary-of-nash-equilibria}. However, throughout the course of the derivation in Appendix \ref{appendix:equilibria_derivation}, an important condition emerges that determines the relative ordering of $\alphaopt$, $\alphasstar$, and $\alphardagger$. We show in \eqref{eq:selleroptcost-derivation-i} and \eqref{eq:selleroptcost-derivation-ii} that this seller optimal feasibility cost $\csstar$ satisfies the following --
\begin{align}
     & \alphasstar\leq \alphardagger \iff \alphardagger \leq \alphaopt\iff c_s\leq \csstar\ ; &  & \text{where }\csstar \equiv \frac{c_r^2}{\prstar}
\end{align}
Finally, let's turn our attention to the right interval of \eqref{eq:price-conditions-nash-equilibrium-seller}, $p\in  \left[\max\{\prstar, \psind\}, \min \{p^\dagger, \psstar\}\right]$. Fortunately, we have already determined the relative ordering of each of the prices in the right interval, except for $p^\dagger$. For these conditions involving $p^\dagger$, we introduce the $p^\dagger$ relevance fees for the retailer and seller (derived in \eqref{eq:alphardagger-derivation} and Corollary \ref{cor:alphasdagger-ordering}):
\begin{align}
     & p^\dagger\leq \prstar \iff \alpha \leq \alphardagger\ ;                                                  &  & \text{where }\alphardagger \equiv 1-\frac{c_r}{\prstar}\label{eq:alphardagger-introduction-again} \\
     & \psind\leq p^\dagger \iff \alpha \leq \alphasdagger\ ;\footnotemark \label{eq:alphasdagger-introduction} &  &
\end{align}
\footnotetext{This equivalence only holds when $\alpha$ is sufficiently high (see Corollary \ref{cor:alphasdagger-ordering} for details). Because of this, instead of defining $\alphasdagger$ via the equivalence \eqref{eq:alphasdagger-introduction}, we instead construct $\alphasdagger$ as the largest fee such that $\pi_{r,s}(\psind, \alphasdagger) = \pi_{r,r}(\prstar)$.}
Note that we already introduced $\alphardagger$ in \eqref{eq:alphardagger-introduction}. The notation arises because we will show that $\alphardagger \leq \alpha \leq \alphasdagger$ is necessary for any equilibrium where demand is fulfilled at $p^\dagger$. We cannot find a closed form for $\alphasdagger$ in general, though in Appendix \ref{appendix:relative-ordering-fees} we are still able to produce the bound $\alphasdagger\in [\max\{\alpharstar,\alphardagger, \alphaopt\}, 1-c_s]$. Finally, the relative ordering of $\psstar$ and $p^\dagger$ is complex, and there unfortunately is no special $\hat{\alpha}$ for which $\psstar\leq p^\dagger\iff \alpha\leq\hat{\alpha}$. Thus, we will leave that condition unsimplified for now.

\subsection{Summary of Nash equilibria}
\label{sec:summary-of-nash-equilibria}
Considering all possible values of $\alpha$ and all possible price configurations, we summarize the Nash equilibria of the game below. We denote sets of equilibrium price configurations as $(S_r, S_s)\subset \Scal$ and singletons without the set notation, specifically $x\equiv \{x\}$. Recall there are no shared-price equilibria where the retailer fulfills demand, thus all sets of shared-price equilibria $(S_r, S_s) = (p_s, [\cdot, \cdot])$ implicitly assume that the independent seller is fulfilling demand at price $p_s\in [\cdot, \cdot]$, not the retailer.

\begin{center} {\bf Equilibrium Price Configurations of \hyperref[game:staying-subgame]{Staying Subgame}}\smallskip
    \hrule
\end{center}{\bf Notation}\\
We denote Nash equilibrium strategies by price configurations $(p_r, p_s)\in (S_r, S_s)\subset\Scal$.

\noindent\textbf{Case (i)}, $c_s \leq \csstar$:
\begin{equation}
    (p_r, p_s)\in \begin{cases}
        (p_s, \left[\psind, \prind\right])                                        & \alpha
        \leq \alphasstar                                                                                                                                                         \\
        (\left[\psstar,1\right],\psstar)\cup (p_s, \left[\psind,\psstar\right])   & \alphasstar\leq \alpha \leq \alphaopt                                                        \\
        (\left[\psstar, 1\right], \psstar)\cup (p_s, \left[\psind,\psstar\right]) & \left[\alphaopt\leq \alpha\leq \alphasdagger\right]\land \left[\psstar \leq p^\dagger\right] \\
        (p_s,\left[\psind,p^\dagger\right])                                       & \left[\alphaopt\leq \alpha \leq \alphasdagger\right]\land \left[p^\dagger\leq \psstar\right] \\
        (\prstar, \left[p^\dagger, 1\right])                                      & \alpharstar\leq \alpha
    \end{cases}
    \label{eq:nash-equilibria-i}
\end{equation}

\noindent\textbf{Case (ii)}, $c_s \geq \csstar$:
\begin{equation}
    (p_r, p_s)\in \begin{cases}
        (p_s, \left[\psind, \prind\right])                                       & \alpha
        \leq \alphardagger                                                                                                                                                            \\
        (\left[\psstar, 1\right], \psstar)\cup (p_s,\left[\psind,\psstar\right]) & \left[\alphardagger \leq \alpha \leq \alphasdagger\right]\land \left[\psstar \leq p^\dagger\right] \\
        (p_s, \left[\psind,p^\dagger\right])                                     & \left[\alphardagger\leq \alpha \leq \alphasdagger\right]\land \left[p^\dagger\leq \psstar\right]   \\
        (\prstar, \left[p^\dagger, 1\right])                                     & \alpharstar\leq \alpha
    \end{cases}
    \label{eq:nash-equilibria-ii}
\end{equation}\hrule

\section{Refining and interpreting equilibria}
\label{sec:refining-equilibria}
In \eqref{eq:nash-equilibria-i} and \eqref{eq:nash-equilibria-ii} of the last section, we found a proliferation of Nash equilibrium price configurations in $\Scal$ that could be observed given the costs and the value of $\alpha$. However, to help us interpret the Nash equilibria, we would like to discern which price configurations are the ``most likely'' to occur.

To this end, we introduce some criteria to distinguish Nash equilibria:
\begin{enumerate}
    \item \textit{Admissibility} - A Nash equilibrium in which any player plays a weakly dominated strategy\footnote{Player $x$'s strategy $s_x$ is weakly dominated by strategy $s_x'$ if (a) for all strategies of other players $s_y$, $\pi_x(s_x', s_y)\geq \pi_x(s_x, s_y)$, and (b) there exists a strategy of the other players $s_y$ such that $\pi_x(s_x', s_y)> \pi_x(s_x, s_y)$ \cite{rasmusen1989games}.} is inadmissible \cite{govindan2016}.
    \item \textit{Relative Pareto optimality} - We prefer Nash equilibria that are Pareto optimal\footnote{Nash equilibrium $X$ is Pareto suboptimal relative to Nash equilibrium $Y$ if (a) all players weakly prefer, and (b) at least one player strictly prefers $X$ over $Y$ (where ``prefer'' means achieving a higher payoff) \cite{rasmusen1989games}.} relative to all other Nash equilibria.\footnote{Note that a relatively Pareto optimal Nash equilibrium is not necessarily on the Pareto frontier of the game, because it need not be Pareto optimal in the set of all strategies, only in the set of Nash equilibria.}
\end{enumerate}
After these two refinements, we will still be left with some intervals of equilibrium prices. However, most of these are inconsequential; in Section \ref{sec:effective-equilibria} we will see that these refinements allow us to identify who will fulfill demand and the price at which they will sell in equilibrium.

\subsection{Admissibility}
The admissibility criterion allows us to rule out any weakly dominated strategies for the retailer and independent seller. For instance, in the left subplot of Figure \ref{fig:nash-equilibrium-payoffs}, all Nash equilibrium prices $p_r'\in [\psind, \prind)$ are weakly dominated by $\prind$ for the retailer. Indeed, for all $p_r'< \prind$, we have $\pi_{r,s}(p_r')>\pi_{r,r}(p_r')$ by construction of $\prind$. Thus, if the seller sets a price $p_s\in (p_r', \prind)$, then
\begin{equation}\pi_{r}(p_r',p_s) = \pi_{r,r}(p_r') < \pi_{r,r}(\prind)=\pi_{r,s}(\prind) < \pi_{r,s}(p_s)  = \pi_{r}(\prind, p_s)\end{equation}
which implies that the retailer would have preferred to set price $\prind$ rather than $p_r'$. Thus, the only admissible Nash equilibrium in the left subplot of Figure \ref{fig:nash-equilibrium-payoffs} is $p=\prind$.

In Appendix \ref{appendix:admissibility}, we more thoroughly consider the retailer and independent seller's strategies, and find the strategies that are not weakly dominated are
\begin{equation}p_r\in \left[\min\left\{\prind, \prstar\right\}, \max\left\{\prind, \prstar\right\}\right];\quad p_s \in \left[\psind, \psstar\right]\label{eq:admissible-prices-body}\end{equation}
The intuition for the independent seller and the retailer is the same -- we can eliminate all prices that are not between their indifference and optimal prices -- however, for the retailer we do not always have that $\prind \leq \prstar$. Given the admissible strategies in \eqref{eq:admissible-prices-body}, we can simply intersect the sets of admissible prices with the Nash equilibrium price configurations we found in \eqref{eq:nash-equilibria-i} and \eqref{eq:nash-equilibria-ii}. For instance, in the right subplot of Figure \ref{fig:nash-equilibrium-payoffs}, we find that the admissible Nash equilibria are
\[\left[\psind, p^\dagger\right]\cap \left[\prstar,\prind\right] \cap \left[\psind, \psstar\right] =  \left[\prstar, p^\dagger\right]\]
We perform these intersections throughout all of parameter space in Appendix~\ref{appendix:admissibility}, and include a complete summary of admissible equilibria in \eqref{eq:admissible-equilibria-shared}, \eqref{eq:admissible-equilibria-i}, and \eqref{eq:admissible-equilibria-ii}.

\subsection{Relative Pareto optimality}
\label{sec:pareto optimality}

The relative Pareto optimality criterion allows us to further eliminate one admissible equilibrium, which occurs in the following case:
\begin{align}&(p_r,p_s)\in \left(\prstar, \left[p^\dagger, \psstar\right]\right) \cup \left(p_s, \left[\psind, p^\dagger\right]\right)\ ; &&\text{if } \alpharstar \leq \alpha\leq \alphasdagger\text{ and } p^\dagger\leq \psstar\nonumber\end{align}
The equilibrium where the retailer fulfills demand at $\prstar$ is Pareto suboptimal relative to the shared-price equilibrium $p^\dagger$ where the independent seller fulfills demand. Indeed, by construction of $p^\dagger$, in both cases the retailer achieves the same payoff of $\pi_{r,r}(\prstar) = \pi_{r,s}(p^\dagger)$ so they are indifferent. However, because $\alpharstar \leq \alpha \leq \alphasdagger$ implies that $\psind \leq p^\dagger$, the independent seller achieves a positive payoff when fulfilling demand at $p^\dagger$ while they achieve a payoff of $0$ when allowing the retailer to fulfill demand at $\prstar$. Thus, they strictly prefer fulfilling demand at $p^\dagger$ rather than allowing the retailer to fulfill demand at $\prstar$.

As a remark, we note that a reasonable argument could be made for preferring the equilibrium where the retailer fulfills demand at $\prstar$. In this case $\prstar< p^\dagger$, so despite the Pareto optimality argument, the customer will be able to buy the product at a lower price if the retailer fulfills demand at $\prstar$ instead of the independent seller fulfilling demand at $p^\dagger$. With this lowest price criterion, $\prstar$ would in fact be the \textit{most likely} equilibrium in this case, as $\prstar < \psind$ as well. Whatever we choose does not substantially affect the results, and the relative Pareto optimality criterion will continue to be sensible in the larger shared-revenue Bertrand game, thus we proceed with this assumption.

\subsection{Equilibrium outcomes}
\label{sec:effective-equilibria}
Even after applying these two refinements, inspecting \eqref{eq:admissible-equilibria-shared}, \eqref{eq:admissible-equilibria-i}, and \eqref{eq:admissible-equilibria-ii} we can still see that we are left with many intervals of prices. However, in order to interpret the real-world implications of the equilibria, it is prudent to consider the equilibrium outcomes: at what price the customer will purchase the product and from whom they will buy. Indeed, letting the prices be $p\in P\subset [0,1]$ and the fulfiller of demand be $x\in \{r,s\}$, we find the following equilibria:

\begin{center} {\bf Equilibrium Outcomes of \hyperref[game:staying-subgame]{Staying Subgame}}\smallskip
    \hrule
\end{center}{\bf Notation}\\
We denote an equilibrium outcome by $(P,x)\subset (0,1)\times \{r,s\}$, indicating that there is an admissible, Pareto optimal Nash equilibrium where player $x$ fulfills demand at any price $p\in P\subset (0,1)$.

\noindent\textbf{Case (i)}, $c_s \leq \csstar$:
\begin{equation}
    (P,x) = \begin{cases}
        (\prind, s)                                                                                           & \alpha
        \leq \alphasstar                                                                                                                                \\
        (\psstar, s)                                                                                          & \alphasstar\leq \alpha \leq \alphaopt   \\
        \left(\left[\max\left\{\prstar, \psind\right\},\min\left\{\psstar, p^\dagger\right\}\right], s\right) & \alphaopt\leq \alpha \leq \alphasdagger \\
        (\prstar, r)                                                                                          & \alphasdagger\leq \alpha
    \end{cases}
    \label{eq:body-effective-equilibria-i}
\end{equation}

\noindent\textbf{Case (ii)}, $c_s \geq \csstar$:
\begin{equation}
    (P,x) =   \begin{cases}
        (\prind,s)                                                                                            & \alpha
        \leq \alphardagger                                                                                                                                  \\
        \left(\left[\max\left\{\prstar, \psind\right\},\min\left\{\psstar, p^\dagger\right\}\right], s\right) & \alphardagger\leq \alpha \leq \alphasdagger \\
        (\prstar, r)                                                                                          & \alphasdagger\leq \alpha
    \end{cases}
    \label{eq:body-effective-equilibria-ii}
\end{equation}\hrule\bigskip
\noindent Viewing the equilibria in the light of \eqref{eq:body-effective-equilibria-i} and \eqref{eq:body-effective-equilibria-ii} is incredibly helpful for intepretation; the remainder of Section \ref{sec:refining-equilibria} is dedicated to analyzing the outcomes rather than the price configurations.

A quick glance at \eqref{eq:body-effective-equilibria-i} and \eqref{eq:body-effective-equilibria-ii} shows that when $\alpha$ is sufficiently high, the independent seller is ``priced out'' of the market and the retailer will fulfill demand at their single agent price as if the independent seller did not exist. When $\alpha$ is small, the independent seller sells at the retailer's indifference price, illustrating that they have to prioritize preventing the retailer from undercutting them over maximizing their own payoff. However, when $\alpha$ is in a certain sweet spot, the independent seller may be able to sell at their optimal price in equilibrium. We will see soon that everybody benefits in this regime -- if the retailer gets to choose the referral fee, they would prefer to choose a referral fee in this sweet spot rather than one that is too high or too low.


\subsection{Fee and cost structure analysis}
\label{sec:param_space_analysis}

Our results in \eqref{eq:body-effective-equilibria-i} and \eqref{eq:body-effective-equilibria-ii} show that given a point in the parameter space $(c_r, c_s, \alpha)$, we can predict which player will fulfill demand and the price(s) at which they will fulfill demand in an admissible, Pareto optimal Nash equilibrium. In this section, we build intuition about how the parameter space is divided into regions based on these possible equilibria and discuss the business implications for the retailer, both in terms of the retailer's payoff and in terms of the effect on the prices. Importantly, we will contrast these metrics with the alternative of not having the revenue sharing option available to outside sellers -- i.e. compare them with the price the retailer would have set and its payoff in the absence of an independent seller.
\begin{figure}
    \begin{center}
        \includegraphics[height=5.5cm]{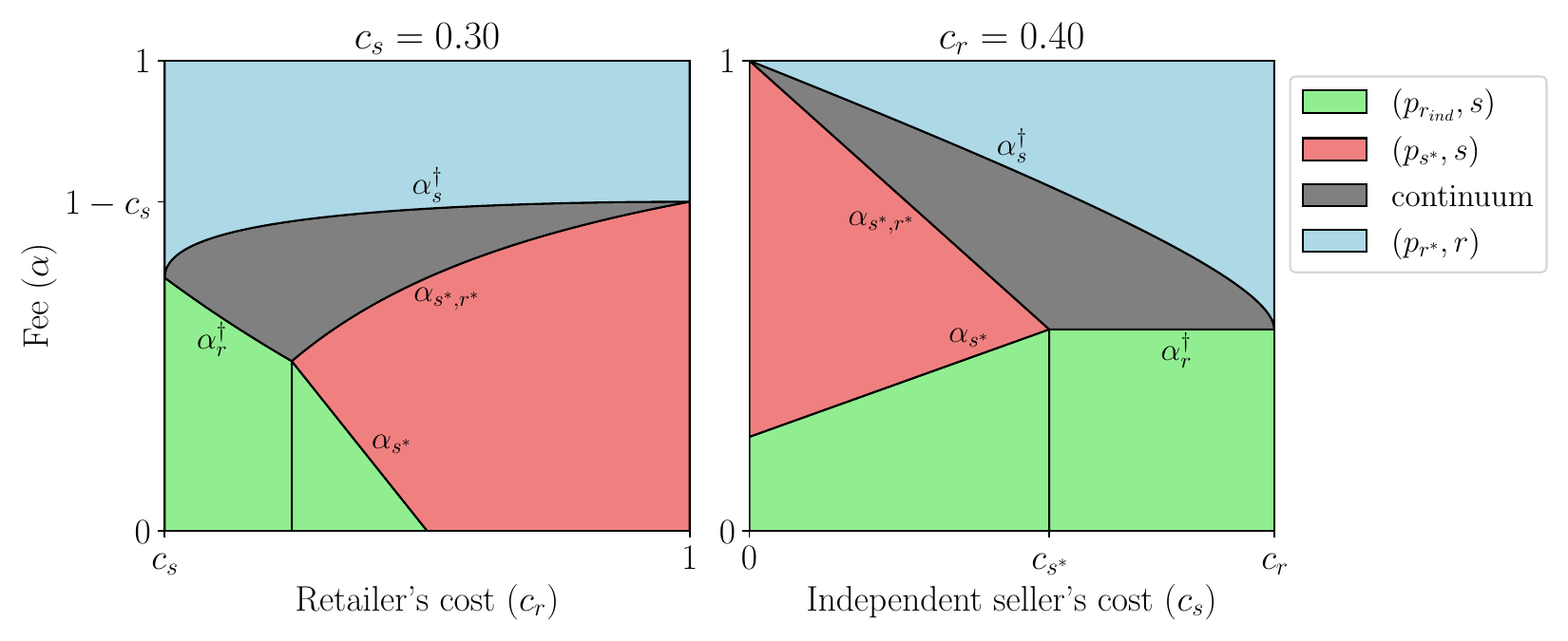}
        \caption{Equilibrium outcomes as a function of $\alpha$ vs $c_r$ (left) and $\alpha$ vs $c_s$ (right).}
        \label{fig:param_space}
    \end{center}
\end{figure}

In Figure~\ref{fig:param_space} we plot two cross sections of the parameter space, one plotting $c_r$ vs $\alpha$ at a fixed $c_s = 0.4$ and one with $c_s$ vs $\alpha$ at a fixed $c_r = 0.6$. The different colors correspond to the admissible, Pareto optimal equilibrium for each set of parameters, with the ``continuum'' corresponding to the interval of prices $\left(p \in\left[\max\left\{\prstar, \psind\right\},\min\left\{\psstar, p^\dagger\right\}\right], s\right)$.

Notice that when $c_s\geq \csstar$ there is no value of $\alpha$ that yields an equilibrium $(\psstar, s)$ in the red region -- this is where the $\csstar$ notation comes from. Additionally, notice that we may have that $\alphasstar< 0$, in which case the green region is not feasible as we cannot have $\alpha \leq \alphasstar< 0$.

\subsubsection{$(\pastar, r)$ - The retailer fulfills demand at its single agent price}

When $\alpha > \alphasdagger$ (blue region), the fee is too high for the independent seller to fulfill demand, and the game reduces to the retailer ``playing'' alone and charging its single agent price. This is the least interesting equilibrium if we wish to consider the retailer's interaction with the independent sellers, but is interesting to be contrasted with all other equilibria as it is the only equilibrium at which the retailer fulfills demand.

\subsubsection{A continuum of equilibria in which the independent seller fulfills demand at a price higher than the retailer's single agent price}
\label{sec:continuum-analysis}

When $\alpha \in \left[\max\left\{\alphaopt, \alphardagger\right\}, \alphasdagger\right]$, there are a continuum of equilibrium prices at which the independent seller could fulfill demand (gray region). Note that the independent seller's price is at least $\prstar$ in this region of parameter space, thus the price to the end customer increases relative to a world in which the revenue sharing program did not exist. However, in this region, the retailer achieves at least as much payoff as they would achieve fulfilling demand on their own without the revenue sharing program. Intuitively, this equilibrium is ``safe'' for the retailer because they will never make less than their single agent payoff, but it comes at the expense of potentially charging customers a higher price.

\subsubsection{$(\paind, s)$ - The independent seller fulfills demand at the retailer's indifference price, $\paind$}

When $\alpha \leq \min\left\{\alphasstar, \alphardagger\right\}$, the independent seller fulfills demand at $p = \paind$ (green region). In this region there is a trade-off between our two metrics -- the retailer's payoff is lower, but the price for the customer is also lower compared to the retailer selling in a single agent game. Furthermore, the existence of this equilibrium implies that the retailer's option of selling directly to customer protects the customer from the independent seller charging too high a price when the fee is low.

\subsubsection{$(\pfstar, s)$ - The independent seller fulfills demand at their single agent price, $\pfstar$}
\label{sec:psstar-analysis}

When $\alpha \in \left[\alphasstar, \alphaopt\right]$, we have an equilibrium in which the independent seller fulfills demand at its optimal price $\pfstar$ (red region). In this regime, we always have $\psstar\leq \prstar$, thus the customers are able to purchase at a lower price than without the revenue sharing program. The independent seller prefers this equilibrium because they are able to sell at their optimal price. Finally, there exist some $\alpha$ in the red region such that the retailer achieves a payoff higher than their optimal single agent payoff, a result which benefits the customers, independent seller, and retailer.

\section{Optimizing the referral fee}
\label{sec:alpha-game-setup}
In Sections \ref{sec:continuum-analysis} and \ref{sec:psstar-analysis}, we identified two equilibrium regions where the retailer's payoff may increase relative to their optimal single agent payoff (without the revenue sharking program). In this section, we allow the retailer to choose the referral fee before playing the staying subgame, which is equivalent to determining which equilibrium is most preferable to the retailer for a given cost configuration. Precisely, we study the following sequential game:

\label{game:fee-optimization}
\begin{center} {\bf Fee Optimization Game}\smallskip
    \hrule
\end{center}{\bf Parameters}\\
Demand curve $q(p)$, retailer's cost $c_r$, seller's cost $c_s$

\noindent {\bf Gameplay}\begin{enumerate}
    \item The retailer chooses the referral fee $\alpha \in (0, 1)$
    \item The retailer and the independent seller play the \hyperref[game:staying-subgame]{staying subgame} with parameters $(q, \alpha, c_r, c_s)$
\end{enumerate}\hrule\bigskip
\noindent Examining Figure \ref{fig:param_space}, we can think of fixing a cost on the horizontal axis and taking a slice, for instance drawing a vertical line corresponding to $c_r = 0.9$ on the left plot. With this, the retailer now gets to choose whatever value of $\alpha$ they'd like, which is a point along the vertical line $c_r = 0.9$, effectively getting to choose which type of equilibrium they would like to observe. We illustrate such a slice in Figure \ref{fig:payoff_example_alpha} by coloring the background of the plot according to which equilibrium in Figure \ref{fig:param_space} is observed.

\subsection{Refined equilibria of fee optimization game}

We will impose the same refinements on the Nash equilibria of the sequential game as we did in Section \ref{sec:refining-equilibria} -- namely, that they must be admissible and Pareto optimal relative to other equilibria. Admissibility has no effect on the equilibria of the sequential game, and Pareto optimality yields us the natural refinement that the retailer will choose the minimum $\alpha$ necessary to attain their maximum possible equilibrium payoff. More formal justification can be found in Appendix \ref{appendix:sequential_game_refinements}.

We will impose the further refinement of subgame perfection \cite{rasmusen1989games}, which is equivalent to backward induction in this game of perfect, complete information. In order to achieve this, the retailer and independent seller must choose a price configuration that is an admissible, Pareto optimal equilibrium of the staying subgame. However, there are many choices of such price configurations. We will thus define an equilibrium strategy profile $\rho$ for the subgame. Formally, we require that $\rho$ is a strategy profile with the property that for every possible $\alpha, c_r, c_s$, $\rho(\alpha, c_r, c_s) \in \Scal$ is an admissible, Pareto optimal equilibrium price configuration of the staying subgame as outlined in \eqref{eq:admissible-equilibria-shared}, \eqref{eq:admissible-equilibria-i}, and \eqref{eq:admissible-equilibria-ii}\footnote{This function $\rho$ is necessary for subgame perfection and could come from anywhere. Intuitively, we could think of the retailer and independent seller agreeing in advance on what prices they will set for each choice of $\alpha$, given common knowledge of their prices $c_r, c_s$. We could also think of the retailer as more powerful and give them the ability to choose $\rho$, requiring that the independent seller agree to choose the prices outlined in $\rho$ as a condition of the revenue sharing program.}.

To give some concrete examples, some natural choices for $\rho$ are:
\begin{enumerate}[(I)]
    \item $\underline{\rho}(\alpha)$ -- for each $\alpha$, $\underline{\rho}(\alpha)\in \Scal$ is an admissible, Pareto optimal price configuration where the equilibrium outcome has demand being fulfilled at the \emph{minimum} possible price
    \item $\overline{\rho}(\alpha)$ -- for each $\alpha$, $\overline{\rho}(\alpha)\in \Scal$ is an admissible, Pareto optimal price configuration where the equilibrium outcome has demand being fulfilled at the \emph{maximum} possible price
\end{enumerate}
By inspection of \eqref{eq:body-effective-equilibria-i} and \eqref{eq:body-effective-equilibria-ii}, the only region in which $\rho$ makes a nontrivial choice is in the gray continuum region of Figure \ref{fig:param_space}, which allows demand to be fulfilled at $p\in \left[\max\left\{\prstar, \psind\right\}, \min\left\{\psstar, p^\dagger\right\}\right]$ when $\alpha \in \left[\max\left\{\alphardagger, \alphaopt\right\}, \alphasdagger\right]$. Continuing our examples, we would have that
\begin{enumerate}[(I)]
    \item For all $\alpha\in \left[\max\left\{\alphardagger, \alphaopt\right\}, \alphasdagger\right]$, $\underline{\rho}(\alpha) \equiv \left(\max\left\{\prstar, \psind\right\}, \max\left\{\prstar, \psind\right\}\right)\in \Scal$
    \item For all $\alpha\in \left[\max\left\{\alphardagger, \alphaopt\right\}, \alphasdagger\right]$, $\overline{\rho}(\alpha) \equiv \left(\min\left\{\psstar, p^\dagger\right\},\min\left\{\psstar, p^\dagger\right\}\right)\in \Scal $
\end{enumerate}
For all $\alpha\notin \left[\max\left\{\alphardagger, \alphaopt\right\}, \alphasdagger\right]$, there is only one admissible, Pareto optimal equilibrium outcome as specified by  \eqref{eq:body-effective-equilibria-i} and \eqref{eq:body-effective-equilibria-ii}, thus we will always observe the same outcome irrespective of our choice of $\rho$. For any $\rho$, the subgame perfect, admissible, Pareto optimal equilibria are as follows:

\begin{center} {\bf Equilibrium of \hyperref[game:fee-optimization]{Fee Optimization Game}}\smallskip
    \hrule
\end{center}{\bf Given}\\
$\rho(\alpha, c_r, c_s)\in \Scal$ -- an admissible, Pareto optimal equilibrium strategy profile of the \hyperref[game:staying-subgame]{staying subgame} for every set of parameters

\noindent {\bf Equilibrium}
\begin{enumerate}
    \item The retailer sets $\alphastar(c_r, c_s, \rho)$, the minimum referral fee that maximizes equilibrium payoff
    \item The retailer and independent seller set the equilibrium price configuration $\rho(\alphastar, c_r, c_s)$
\end{enumerate}\hrule
\subsection{Interpreting equilibria}
\label{sec:interpreting-for-overline-underline-rho}
To complete our analysis, our natural next step is to find $\alphastar$. We'll denote as $\pireq(\rho(\alpha), \alpha, c_r, c_s)$ the payoff function in equilibrium of the staying subgame for the retailer with strategy $\rho$ and $\piseq$ the analogous function for the independent seller. With this notation, it's clear that
\begin{equation}\alphastar(c_r, c_s, \rho) \equiv \min_\alpha\left[\arg\max_\alpha \pireq(\rho(\alpha), \alpha, c_r, c_s)\right]\footnotemark\label{eq:alpha-star-definition}\end{equation}
\footnotetext{Formally, we should use $\inf$ and $\sup$ in \eqref{eq:alpha-star-definition}; we can easily construct a $\rho$ such that uncountably many $\alpha$ maximize $\pireq(\rho)$.}However, the functional dependence of $\alphastar$ on $\rho$ makes this optimization problem complicated in general. To build intuition, we will begin by considering only $\underline{\rho}$ and $\overline{\rho}$. In Appendix \ref{appendix:upper-bounding-retailer-payoff}, we show the simple result that
\[\alphastar(c_r,c_s,\underline{\rho}) = \alpharstar \]
In other words, if the retailer and independent seller agree to play strategy $\underline{\rho}$, the retailer will always choose referral fee $\alpharstar$ in equilibrium of the \hyperref[game:fee-optimization]{fee optimization game}.

The optimal fee for $\overline{\rho}$ is slightly more complicated than for $\underline{\rho}$, but we show in Appendix \ref{appendix:lower-bounding-retailer-payoff} that
\begin{equation}\alphastar(c_r,c_s,\overline{\rho}) = \begin{cases}
        \min_\alpha[\arg\max_\alpha \pi_{r,s}(\psstar, \alpha)]\equiv \overline{\alpha} & \exists \alpha \text{ s.t. } \pi_{r,s}(\psstar, \alpha)> \pi_{r,r}(\prstar)\footnotemark \\
        \alphardagger                                                                   & \text{else}
    \end{cases}\label{eq:alphastar-cases-underline-rho}\end{equation}\footnotetext{We show in Lemma \ref{lemma:csstar-implies} that $c_s \leq \csstar\implies \alphastar(\overline{\rho}) = \overline{\alpha}$.}\begin{figure}
    \begin{center}
        \includegraphics[height=5.5cm]{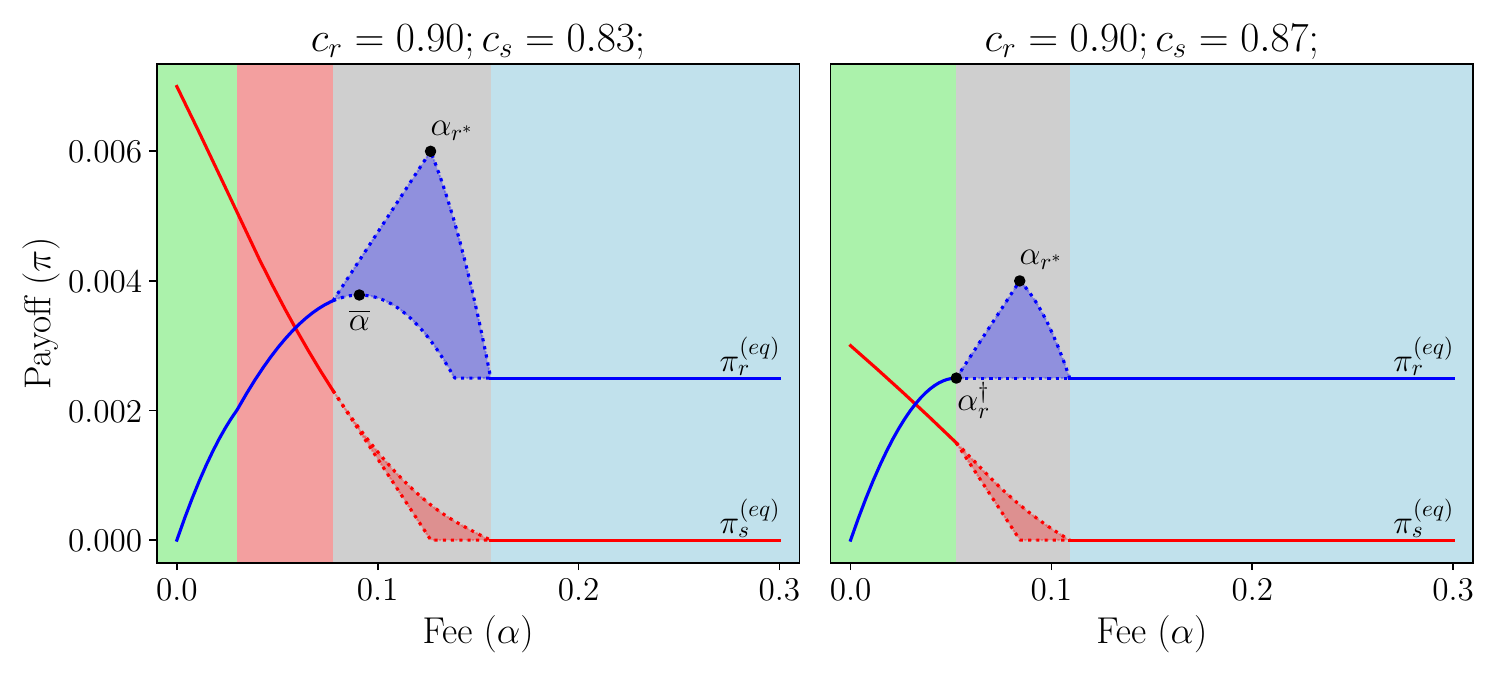}
        \caption{Equilibrium payoff curves}
        \label{fig:payoff_example_alpha}
    \end{center}
\end{figure}To illustrate $\alphastar$, we plot typical payoff curves in Figure \ref{fig:payoff_example_alpha}. The background of the plot corresponds to the equilibria in Figure \ref{fig:param_space}. In the left plot, there does exist an $\alpha$ such that $\pi_{r,s}(\psstar, \alpha)> \pi_{r,r}(\prstar)$, so $\alphastar(\overline{\rho})=\overline{\alpha}$. In the right plot, for all $\alpha$ we have $\pi_{r,s}(\psstar, \alpha)\leq \pi_{r,r}(\prstar)$, so  $\alphastar(\overline{\rho})=\alphardagger$.
\subsection{General equilibrium strategy profiles}

In Section \ref{sec:interpreting-for-overline-underline-rho}, we interpreted the equilibria of the \hyperref[game:fee-optimization]{fee optimization game} under specifically the equilibrium strategy profiles $\underline{\rho}$ and $\overline{\rho}$. However, as a result of monotonicity of the payoff functions in the gray region (the only region where the choice of $\rho$ is nontrivial), we can bound the payoffs for any $\rho$ by the payoffs for $\underline{\rho}$ and $\overline{\rho}$. Thus, the shaded areas in the gray region of Figure \ref{fig:payoff_example_alpha} correspond to the range of possible payoffs for the retailer and seller, with the exact values depending on the choice of $\rho$. Slightly more formally, in Appendix \ref{appendix:equilibrium-strategy} we prove the following:
\setcounter{corollary}{2}
\begin{corollary}
    For any strategy profile $\rho$, the retailer's (seller's) equilibrium payoff is upper bounded (lower bounded) by their equilibrium payoff under $\underline{\rho}$, and the retailer's (seller's) equilibrium payoff is lower bounded (upper bounded) by their equilibrium payoff under $\overline{\rho}$.
    \label{cor:bounded-equilibrium-payoffs}
\end{corollary}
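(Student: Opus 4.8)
The plan is to collapse the problem onto the single ``gray'' continuum region, since inspection of \eqref{eq:body-effective-equilibria-i} and \eqref{eq:body-effective-equilibria-ii} shows this is the only place where $\rho$ makes a nontrivial choice; for every $\alpha\notin\left[\max\{\alphardagger,\alphaopt\},\alphasdagger\right]$ all profiles induce the same equilibrium outcome and hence identical payoffs. The engine is a per-$\alpha$ monotonicity claim: for fixed $\alpha$ in the gray region, as the equilibrium price $p$ ranges over $\left[\max\{\prstar,\psind\},\min\{\psstar,p^\dagger\}\right]$, the retailer's payoff $\pi_{r,s}(p)=\alpha p q(p)$ is strictly decreasing while the seller's payoff $\pi_{s,s}(p)=((1-\alpha)p-c_s)q(p)$ is strictly increasing. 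First I would prove this from concavity: $pq(p)$ is concave with interior maximizer $p_{\mathrm{rev}}$, and since the first-order condition for $\prstar$ gives $q(\prstar)+\prstar q'(\prstar)=c_r q'(\prstar)<0$, we have $\prstar>p_{\mathrm{rev}}$, so $\pi_{r,s}$ is past its peak throughout the interval; symmetrically $\pi_{s,s}$ is concave with maximizer $\psstar$ and the interval lies weakly left of $\psstar$, so $\pi_{s,s}$ is still rising.

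The retailer bounds are then immediate. For every $\alpha$ the profile $\underline\rho$ selects the smallest gray-region price, hence the largest retailer payoff, and $\overline\rho$ the largest price, hence the smallest; outside the gray region the payoffs agree. Thus $\pireq(\overline\rho(\alpha),\alpha)\le \pireq(\rho(\alpha),\alpha)\le \pireq(\underline\rho(\alpha),\alpha)$ holds pointwise in $\alpha$, and since the equilibrium payoff is $\max_\alpha$ of this quantity and $\max$ is monotone, we get $\pireq(\overline\rho)\le \pireq(\rho)\le \pireq(\underline\rho)$, which is the retailer half of the statement.

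The seller's lower bound is also clean: using $\alphastar(c_r,c_s,\underline\rho)=\alpharstar$ (Appendix \ref{appendix:upper-bounding-retailer-payoff}) together with the fact that $\psind=\prstar$ exactly at $\alpha=\alpharstar$, the profile $\underline\rho$ fulfills demand at $\max\{\prstar,\psind\}=\psind$, the seller's breakeven price, so $\piseq(\underline\rho)=0$; since the seller can always guarantee a nonnegative payoff in any equilibrium, $\piseq(\rho)\ge 0=\piseq(\underline\rho)$.

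The seller's upper bound $\piseq(\rho)\le\piseq(\overline\rho)$ is the main obstacle, because the seller does not choose $\alpha$, so $\piseq(\rho)$ is evaluated at the retailer-optimal fee $\alphastar(\rho)$, which varies with $\rho$; a naive pointwise-in-$\alpha$ comparison fails, since the seller's payoff is generally not maximized at the retailer's preferred fee (e.g. it is large at small $\alpha$ in the green region, a fee the retailer never selects). To handle this I would invoke the surplus identity $\pi_{r,s}(p)+\pi_{s,s}(p)=(p-c_s)q(p)\equiv T(p)$, valid whenever the seller fulfills demand, and write $\piseq(\rho)-\piseq(\overline\rho)=\big[T(p^{\rho})-T(p^{\overline\rho})\big]-\big[\pireq(\rho)-\pireq(\overline\rho)\big]$, where $p^{\rho},p^{\overline\rho}$ are the respective equilibrium prices. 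The second bracket is nonnegative by the retailer bound, so it remains to show $T(p^{\rho})\le T(p^{\overline\rho})$, i.e. that the equilibrium reached under $\overline\rho$ maximizes total firm surplus among all admissible equilibria reachable under any $\rho$. I expect this inequality to be the crux: it requires the explicit forms of $\alphastar(\overline\rho)$ from \eqref{eq:alphastar-cases-underline-rho}, the monotonicity of $\psstar$ and $p^\dagger$ in $\alpha$, and a region-by-region check, treating the blue region (where $\piseq(\overline\rho)=0$ forces the seller's payoff to vanish under every $\rho$) separately.
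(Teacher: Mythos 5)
Your core monotonicity engine is exactly the paper's proof, and it already settles the entire corollary; the genuine gap is that you then misread what the seller half asserts and manufacture an obstacle that you never close. As formalized in Corollary~\ref{thm:bounded-equilibrium-payoffs} of Appendix~\ref{appendix:equilibrium-strategy}, the statement is a \emph{pointwise-in-$\alpha$} comparison of the staying-subgame equilibrium payoff curves: for each fixed $\alpha$ in the gray region, any equilibrium profile satisfies $\rho(\alpha)\in\left[\max\{\prstar,\psind\},\min\{\psstar,\pdagger\}\right]\subset\left[\tilde{p},\psstar\right]$ (Lemma~\ref{lemma:ptilde-ordering}), an interval on which $\pi_{r,s}$ is decreasing and $\pi_{s,s}$ is increasing in price, while outside the gray region every $\rho$ induces the same outcome. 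Your first paragraph proves precisely this, and it immediately delivers all four bounds --- \emph{including the seller's upper bound}: $\overline{\rho}(\alpha)$ is the largest admissible price and $\pi_{s,s}$ is increasing up to $\psstar$, so $\piseq(\rho(\alpha))\leq\piseq(\overline{\rho}(\alpha))$ for every $\alpha$. The comparisons at the fee-optimization equilibrium are downstream consequences, not the corollary itself: for the retailer one takes $\max_\alpha$ of the pointwise bound, as you do, but the paper never asserts the cross-fee inequality $\piseq(\rho(\alphastar(\rho)))\leq\piseq(\overline{\rho}(\alphastar(\overline{\rho})))$ that you set out to prove; it bounds $\pi_s$ by separate arguments and explicitly describes the resulting seller bound as loose (end of Appendix~\ref{appendix:lower-bounding-retailer-payoff}).

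Because of this misreading, your last two paragraphs go astray in two ways. First, your seller lower bound compares against $\piseq(\underline{\rho})=0$ at $\alphastar(\underline{\rho})=\alpharstar$; under the pointwise reading this proves the wrong inequality, since for gray-region fees with $\psind<\prstar$ we have $\piseq(\underline{\rho}(\alpha),\alpha)=\pi_{s,s}(\prstar,\alpha)>0$ --- though the correct pointwise bound again falls out of your own monotonicity claim. Second, and more seriously, the step you defer as the crux is simply unproven, so as written the proposal does not establish the seller's upper bound under your interpretation; worse, that target is strictly stronger than the corollary, and your surplus-identity reduction to $T(p^{\rho})\leq T(p^{\overline{\rho}})$ with $T(p)=(p-c_s)q(p)$ is not obviously true: $T$ is concave and maximized at the joint-monopoly price, which need not lie on a predictable side of the two equilibrium prices, and $\alphastar(\rho)$ itself moves with $\rho$ (compare the two cases of \eqref{eq:alphastar-cases-underline-rho}), so the comparison is genuinely two-dimensional and a region-by-region check is not guaranteed to succeed. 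If you restate the target as the pointwise bound the paper actually proves, your first two paragraphs constitute a complete proof that is essentially identical to the paper's, and the remainder can be deleted.
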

In light of Corollary \ref{cor:bounded-equilibrium-payoffs}, we can bound the equilibrium payoff of the fee optimization game $\pi_r$ for any strategy profile $\rho$ --
\[\max_\alpha \pireq(\overline{\rho}(\alpha), \alpha, c_r, c_s) \leq \pireq(\rho(\alphastar), \alphastar, c_r, c_s) = \pi_r(\rho(\alpha),c_r,c_s) \leq \max_\alpha \pireq(\underline{\rho}(\alpha), \alpha, c_r, c_s)\]
Bounds on $\pi_s$ and $\alphastar$ follow similarly by considering $\pireq$ under the strategy profiles $\underline{\rho}$ and $\overline{\rho}$. Indeed, we show in Appendix \ref{appendix:lower-bounding-retailer-payoff} and \ref{appendix:upper-bounding-retailer-payoff} that for any equilibrium strategy profile $\rho$, the admissible, Pareto optimal, subgame perfect Nash equilibria of the sequential game satisfy:
\begin{equation}\pi_r \in  \begin{cases}
        \left[\pi_{r,s}(\psstar, \overline{\alpha}), \pi_{r,s}\left(\prstar, \alpharstar\right)\right] & \exists \alpha \text{ s.t. } \pi_{r,s}(\psstar, \alpha)> \pi_{r,r}(\prstar) \\
        \left[\pi_{r,r}(\prstar), \pi_{r,s}\left(\prstar, \alpharstar\right)\right]                    & \text{else}
    \end{cases}
    \label{eq:retailer-payoff-bounds-sequential-game}
\end{equation}
\begin{equation}
    \pi_s \in \left[0, \pi_{s,s}\left(\psstar, \min\left\{\alphasstar, \alphardagger\right\}\right)\right]\ ;\qquad \alphastar \in \left[\min\left\{\alphasstar, \alphardagger\right\}, \alphasdagger\right]
    \label{eq:seller-payoff-bounds-sequential-game}
\end{equation}

One important takeaway from \eqref{eq:retailer-payoff-bounds-sequential-game} and \eqref{eq:seller-payoff-bounds-sequential-game} is that the equilibrium $\alphastar$ is bounded away from $1$. The existence of such an optimal fee is echoed by optimal taxation theories such as the Laffer curve \cite{laffer2004} -- it is not in the retailer's interest to take all the independent seller's revenue.

Importantly, examining Figure \ref{fig:payoff_example_alpha}, we see that the retailer will always induce an equilibrium where the independent seller fulfills demand at some $p\in\left[\max\left\{\psind, \prstar\right\}, \left\{\psstar, p^\dagger\right\}\right]$ depending on $\rho$ (red or gray regions). Interestingly, the socially suboptimal green region where the independent seller fulfills demand at $\prind$ is \emph{never} observed in equilibrium of the fee optimization game. The retailer's payoff always increases relative to a world where the revenue sharing program did not exist.

We do also find that the retailer never sets the fee high enough such that they fulfill demand themselves at price $\prstar$ in equilibrium (blue region). However, this is less interesting, because it is an artifact of the relative Pareto optimality criterion that we applied when refining equilibria in Section \ref{sec:pareto optimality}; we argued there that the retailer would prefer for the independent seller to fulfill demand at $p^\dagger$ rather fulfilling demand themselves at $\prstar$.
\section{An outside option for the independent seller}
\label{sec:outside-option}
In the \hyperref[game:fee-optimization]{fee optimization game}, the retailer was free to choose any fee they'd like, and the independent seller would stay in the revenue sharing program irrespective of the fee chosen. To complete our analysis of the \hyperref[game:shared-revenue-bertrand]{shared-revenue Bertrand game}, we now model the independent seller's choice to leave the revenue sharing program as a single-step negotiation with the retailer. Indeed, in real scenarios the independent seller often has a compelling alternative (outside option) -- selling their product independently rather than participating in the revenue sharing program.

\subsection{Leaving subgame}
\label{sec:leaving-subgame}
To finish defining the \hyperref[game:shared-revenue-bertrand]{shared-revenue Bertrand game}, we must outline the structure of the leaving subgame. When not clear from context, we will use superscripts $\cdot^{(\ell)}$ to denote relevant quantities in the leaving subgame and $\cdot^{(o)}$ to denote relevant quantities in the shared-revenue Bertrand game with the outside option. There are many reasonable specifications for this leaving subgame \cite{shubik1980market,tirole1988theory}, and more sophisticated models may better capture reality, but as a first step we choose a very simple model: if the independent seller chooses to leave, they will play a standard Bertrand game against the retailer \cite{bertrand1883review}.

With the constraint that $\alpha > 0$, it seems obvious that the independent seller would always leave to avoid paying the referral fee. Thus, we will posit that if they choose to sell on their own, the effective cost to the independent seller will be $c_s + \delta$. This additive $\delta$ factor could represent any number of costs that the independent seller might incur if selling without the retailer such as advertising, shipping, or storing inventory. It could even represent an opportunity cost that the smaller independent seller incurs by not exposing their product to the larger retailer's customers.

Whatever it represents, we will assume that there exists a $\delta > 0$ such that the independent seller does not always prefer to leave. We now have a standard Bertrand game with potentially asymmetric costs depending on the value of $\delta$, for which the equilibria have been derived in \cite{blume2003,kartik2011} for instance. However, if $\delta \geq c_r - c_s$ such that $c_s + \delta\geq c_r$, the independent seller achieves an equilibrium payoff of $0$ if they exercise their outside option, making leaving the revenue sharing program a non-credible threat. This can be seen by considering the equilibrium in the regime where $c_r \leq c_s + \delta$ instead of $c_s + \delta < c_r$. Thus, with the constraint that $\delta \in (0, c_r-c_s)$, the admissible equilibrium outcomes of the leaving subgame are
\[(p,x) = \left(\min \left\{c_r, \psstarell(\delta)\right\}, s\right)\]
where $\psstarell(\delta)$ is the maximizer of the independent seller's payoff function in the leaving subgame. Importantly, because the retailer never fulfills demand in equilibrium and there is no revenue sharing, they always achieve an equilibrium payoff of $0$ if the independent seller chooses to leave. Because the retailer achieves nonzero payoff in equilibrium otherwise, they would always prefer to choose an $\alpha$ such that the independent seller stays over an $\alpha$ such that the independent seller leaves. Furthermore, notice that the leaving subgame itself is entirely independent of $\alpha$.

\subsection{Refined equilibria of shared-revenue Bertand game}

The analysis in Appendix \ref{appendix:sequential_game_refinements} about refinements for the \hyperref[game:fee-optimization]{fee optimization game} equilibria applies to the \hyperref[game:shared-revenue-bertrand]{shared-revenue Bertrand game} with outside option: a strategy of the sequential game is admissible if $\rho$ is admissible, and Pareto optimality implies that the retailer will choose the lowest fee possible to achieve their maximum payoff. Furthermore, because the retailer achieves nonzero payoff if the independent seller chooses to stay and $0$ payoff if they leave, Pareto optimality also tells us that if the independent seller is indifferent between staying and leaving, they will choose to stay. To summarize, the admissible, Pareto optimal, subgame perfect equilibrium strategies are:

\begin{center} {\bf Equilibrium of \hyperref[game:shared-revenue-bertrand]{Shared-Revenue Bertrand Game}}\smallskip
    \hrule
\end{center}{\bf Given}\\
$\rho(\alpha, c_r, c_s)\in \Scal$ -- an admissible, Pareto optimal equilibrium strategy profile of the \hyperref[game:staying-subgame]{staying subgame} for every set of parameters

\noindent {\bf Equilibrium}
\begin{enumerate}
    \item The retailer sets $\alphastarell(\delta, c_r, c_s, \rho)$, the minimum referral fee that maximizes equilibrium payoff and does not cause the independent seller to strictly prefer leaving
    \item The independent seller stays in the revenue sharing program
    \item The retailer and independent seller play the \hyperref[game:shared-revenue-bertrand]{shared-revenue Bertrand game} and simultaneously set the equilibrium price $\rho(\alphastarell, c_r, c_s)$
\end{enumerate} \hrule\bigskip
\noindent To be precise, $\alphastarell$ must satisfy the following property similar to \eqref{eq:alpha-star-definition}:
\begin{equation}
    \alphastarell(\delta, c_r, c_s, \rho) \equiv \min_\alpha \left[\arg \max_\alpha\left\{\pireq(\alpha, \rho) \mid \piseq(\alpha, \rho)\geq \pisell(\delta)\right\}\right]
    \label{eq:alphastarell-general}
\end{equation}
\begin{figure}
    \begin{center}
        \includegraphics[height=5.5cm]{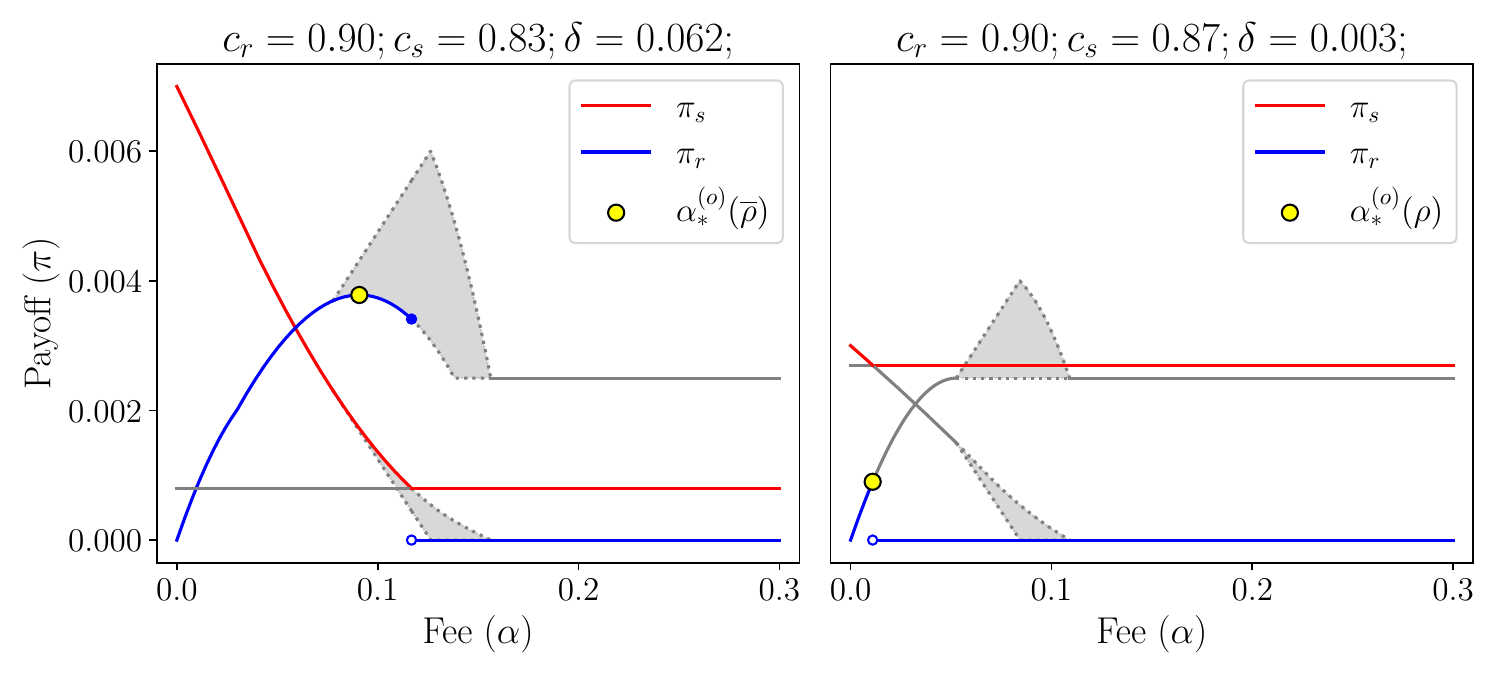}
        \caption{Equilibrium payoff curves with an outside option}
        \label{fig:payoff_example_delta}
    \end{center}
\end{figure}As before, because of the functional dependence of $\alphastarell$ on $\rho$, this is complicated in general. Just as in Section \ref{sec:interpreting-for-overline-underline-rho}, we'll build intuition by analyzing simpler cases first. In Figure \ref{fig:payoff_example_delta}, we plot the same equilibrium payoff curves as Figure \ref{fig:payoff_example_alpha}, however we now include an outside option with $\delta = 0.062$ in the left subplot and $\delta = 0.003$ in the right subplot. For simplicity, in the left subplot we assume that $\rho=\overline{\rho}$. If the retailer sets the fee above some threshold, the independent seller would prefer to leave the revenue sharing program and sell on their own instead. It will be helpful to give this threshold a name; let's define
\begin{equation}\alphamax(\delta,c_r,c_s,\rho) \equiv \sup \left\{\alpha \mid \piseq(\alpha, \rho)\geq \pisell(\delta)\right\}\label{eq:alphamax-definition}\end{equation}
If the retailer chooses a fee $\alpha >\alphamax$, they will achieve a payoff of $0$ in equilibrium of the \hyperref[game:shared-revenue-bertrand]{shared-revenue Bertrand game} because the independent seller will leave. The independent seller, on the other hand, will achieve a payoff of $\pisell(\delta)$ for $\alpha> \alphamax$, which is therefore a lower bound on their equilibrium payoff. Additionally, in equilibrium of the \hyperref[game:shared-revenue-bertrand]{shared-revenue Bertrand game} the retailer will choose a fee $\alpha\leq \alphamax$, which bounds the maximum fee further away from $1$.

Interestingly, the left subplot of Figure \ref{fig:payoff_example_delta} demonstrates that the retailer will not necessarily choose $\alphastarell = \alphamax$, indeed their payoff under $\overline{\rho}$ is maximized by choosing $\overline{\alpha}\in (0, \alphamax)$. However, if we simply switch the equilibrium strategy profile from $\overline{\rho}$ to $\underline{\rho}$, the retailer will choose $\alphastarell = \alphamax$, which serves as a good reminder that the choice of $\rho$ may have a significant effect on the equilibrium outcome in general.

The right subplot of Figure \ref{fig:payoff_example_delta} shows another interesting case where $\delta$ is low enough such that the independent seller achieves a higher payoff than the retailer in equilibrium. In fact, the independent seller is so competitive with the retailer that the retailer achieves a significantly lower payoff in equilibrium of the \hyperref[game:shared-revenue-bertrand]{shared-revenue Bertrand game} than they would if the independent seller did not exist. Additionally, notice that $\alphamax< \alphardagger$ is not in the continuum region. Because of this, the choice of $\rho$ does not matter, which should be clear graphically because the entire continuum region is ``grayed out'' for both the retailer and independent seller. In fact, we can say more generally that
\begin{equation}\alphastarell = \min\left\{\alphastar, \alphamax\right\}\quad \text{ if } \alphamax \leq \min\left\{\alphardagger, \alphasstar\right\}\end{equation}
which is a simple special case of \eqref{eq:alphastarell-general}.


\section{Conclusion and future work}

In this work, we presented a game theory model of a shared-revenue Bertrand game to analyze the dynamics and incentives induced by revenue sharing programs. We model the revenue sharing program as a negotiation where the retailer chooses the referral fee then the independent seller chooses whether to participate or leave and sell on their own. Our findings resonate strongly with existing results from duopolistic competition and optimal taxation theory. In equilibrium, both players are able to achieve positive payoff, and the referral fee is set low enough to stimulate economic activity while high enough to maximize payoff.

However, this work is just a starting point, and there are still many future research directions in which this model can be extended with even more potential implications for understanding the dynamics of revenue sharing:
\begin{itemize}
    \item We have assumed throughout that this is a game of perfect, complete information -- modeling uncertainty or forecasting of the costs or demand curves are interesting next steps.
    \item We do not closely consider the manufacturing of the product, but we would expect the dynamics to differ in the cases that (a) there is one primary manufacturer of the product, from whom both the retailer and indepedent seller buy, or (b) there are many manufacturers of the product, in which case the margins are low.
    \item We could impose capacity constraints on the players, allowing us to reason about strategies for stocking both retailers' and independent sellers' products in the same warehouse.
    \item We look only at a duopolistic interaction, but more sophisticated models of entry and exit from a shared-revenue Bertrand marketplace may help make the game more realistic.
    \item A large retailer or ridesharing service may not want to choose a different fee for each individual independent seller on its platform, instead preferring to choose an ``aggregate fee'' for a homogeneous group of independent sellers.
\end{itemize}
Ultimately, this work marks an important first step towards understanding cooperative business practices between economic agents from first principles.
\newpage
%
\begin{acks}
    We would like to thank Dirk Bergemann for his insightful comments and suggestions.
\end{acks}

\bibliographystyle{ACM-Reference-Format}
\bibliography{main}
\newpage
\appendix
\section{Notation table}
\label{appendix:notation_table}
\newlength\q
\setlength\q{\dimexpr .65\textwidth  -2\tabcolsep}
\noindent\begin{longtable}{l l p{\q}}
    \toprule
    \textbf{Notation}                            & \textbf{Value(s)}                        & \textbf{Description}                                                                                                                                                                                               \\
    \midrule
    $c_x$                                        & $\in (0,1)$                              & Marginal cost for player $x$                                                                                                                                                                                       \\
    $p_x$                                        & $\in (0,1)$                              & Price set by player $x$                                                                                                                                                                                            \\
    $\alpha$                                     & $\in (0,1)$                              & Referral fee                                                                                                                                                                                                       \\
    $\beta$                                      & $\in [0,1]$                              & Market split proportion if prices are equal                                                                                                                                                                        \\
    $q(p)$                                       & $\in [0,1]$                              & Quantity demanded at price $p$                                                                                                                                                                                     \\
    $\Scal$                                      & $(0,1)\times (0,1)$                      & Space of price configurations, eg. $(p_r, p_s)\in \Scal$                                                                                                                                                           \\
    $(p, x)$                                     & $\in (0,1)\times\{r,s\}$                 & Outcome where player $x$ fulfills demand at price $p$                                                                                                                                                              \\
    $\delta$                                     & $\in (0, c_r- c_s)$                      & Additional marginal cost if independent seller leaves                                                                                                                                                              \\
    $\rho(\alpha, c_r, c_s)$                     & $\in \Scal$                              & Equilibrium strategy profile of \hyperref[game:staying-subgame]{staying subgame}                                                                                                                                   \\
    $\overline{\rho}(\alpha, c_r, c_s)$          & $\in \Scal$                              & Equilibrium strategy profile with maximum prices                                                                                                                                                                   \\
    $\underline{\rho}(\alpha, c_r, c_s)$         & $\in \Scal$                              & Equilibrium strategy profile with minimium prices                                                                                                                                                                  \\
    $\pi_x(\cdot)$                               & $\in [0,1)$                              & Player $x$'s payoff                                                                                                                                                                                                \\
    $\pi_x^{\text{(eq)}}(\rho,\alpha, c_r, c_s)$ & $\in [0,1)$                              & Payoff in equilibrium of staying subgame                                                                                                                                                                           \\
    $\pi_x^{(\ell)}(\delta, c_r, c_s)$           & $\in [0,1)$                              & Payoff in equilibrium of leaving subgame                                                                                                                                                                           \\
    $\pi_{x,y}(p_y, \alpha)$                     & $\in [0,1)$                              & Player $x$'s payoff when $y$ fulfills demand at $p_y$                                                                                                                                                              \\
    $\csstar$                                    & $\frac{c_r^2}{\prstar}$                  & Seller optimal feasibility cost; $c_s\leq \csstar$ is necessary for $(\psstar, s)$ to be a potential equilibrium                                                                                                   \\
    $\psstar$                                    & $\arg\max_p\pi_{s,s}(p)$                 & Seller optimal single agent price, the maximizer of $\pi_{s,s}$                                                                                                                                                    \\
    $\psstar^{(\ell)}$                           &                                          & Seller optimal single agent price in leaving subgame                                                                                                                                                               \\
    $\prstar$                                    & $\arg\max_p\pi_{r,r}(p)$                 & Retailer optimal single agent price, the maximizer of $\pi_{r,r}$                                                                                                                                                  \\
    $\prind$                                     & $\frac{c_r}{1 - \alpha}$                 & Retailer indifference price, satisfying $\pi_{r,r}(\prind) = \pi_{r,s}(\prind)$                                                                                                                                    \\
    $\psind$                                     & $\frac{c_s}{1 - \alpha}$                 & Seller indifference price, which is also their breakeven price as it satisfies $\pi_{s,s}(\psind) = \pi_{s,r}(\psind) =0$                                                                                          \\
    $p^\dagger$                                  &                                          & Retailer optimality equivalence price with $\pi_{r,s}(p^\dagger) = \pi_{r,r}(\prstar)$                                                                                                                             \\
    $\tilde{p}$                                  & $\arg\max_p\pi_{r,s}(p)$                 & Revenue maximizing price                                                                                                                                                                                           \\
    $\alphasdagger$                              &                                          & Seller $p^\dagger$ relevance fee, below which selling at $p^\dagger$ becomes a potential best response                                                                                                             \\
    $\alphardagger$                              & $1-\frac{c_r}{\prstar}$                  & Retailer $p^\dagger$ relevance fee, above which allowing the independent seller to sell at $p^\dagger$ becomes a potential best response                                                                           \\
    $\alphasstar$                                & $1-\frac{c_r}{\psstar}$                  & Seller optimal price feasibility fee, above which $(\psstar, s)$ becomes a potential equilibrium                                                                                                                   \\
    $\alpharstar$                                & $1-\frac{c_s}{\prstar}$                  & Retailer optimal price feasibility fee, above which $(\prstar, r)$ becomes a potential equilibrium                                                                                                                 \\
    $\alphaopt$                                  & $1-\frac{c_s}{c_r}$                      & Optimality switching fee satisfying $\psstar(\alphaopt) = \prstar$                                                                                                                                                 \\
    $\alphastar$                                 & \eqref{eq:alpha-star-definition}         & Retailer equilibrium alpha in the \hyperref[game:fee-optimization]{fee optimization game}                                                                                                                          \\
    $\overline{\alpha}$                          & \eqref{eq:alphastar-cases-underline-rho} & Retailer equilibrium fee with strategy profile $\overline{\rho}$                                                                                                                                                   \\
    $\alphastarell$                              & \eqref{eq:alphastarell-general}          & Retailer equilibrium fee for the \hyperref[game:shared-revenue-bertrand]{shared-revenue Bertrand game}                                                                                                             \\
    $\alphamax$                                  & \eqref{eq:alphamax-definition}           & Maximum fee for which the independent seller prefers to stay in the revenue sharing program rather than exercise their outside option in the \hyperref[game:shared-revenue-bertrand]{shared-revenue Bertrand game} \\
    \bottomrule
    \caption{Notation table}
    \label{tab:notation}
\end{longtable}
\noindent Table \ref{tab:notation} summarizes the symbols used throughout the paper along with their value or range of possible values. We use $r$ to denote the retailer and $s$ to denote the seller, with $x,y\in \{r,s\}$ placeholders in quantities that are defined analogously for both players. When clear from context throughout the paper, we drop some of the explicit arguments of the functions -- note for instance that the prices and fees implicitly depend on the costs.

\section{Justification of game setup}
\label{appendix:justification_of_setup}
\subsection{The retailer and seller can achieve nonzero payoff}
We assume in Section \ref{sec:preliminaries} that $c_s < c_r < 1$. Indeed, suppose on the other hand that $c_r\geq 1$. Then, the game becomes trivial:
\begin{lemma}
    In a staying subgame with $c_r\geq 1$, the only admissible Nash equilibrium is $(p_r = 1, p_s = \min\left\{\psstar, 1\right\})$.
\end{lemma}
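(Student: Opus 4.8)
The plan is to exploit the fact that $c_r \geq 1$ makes selling directly a losing proposition for the retailer, so that the staying subgame collapses to the seller acting as a monopolist. First I would record the elementary inequality that drives everything: since $q(p_r) > 0$ for $p_r \in (0,1)$ and $p_r - c_r \leq p_r - 1 < 0$, the retailer's single-agent payoff $\pi_{r,r}(p_r) = (p_r - c_r)q(p_r)$ is strictly negative on $(0,1)$ and equals $0$ only at the choke price $p_r = 1$ where $q = 0$. Hence fulfilling demand never yields positive payoff; the retailer weakly prefers to cede the market to the seller and collect the referral fee $\pi_{r,s}(p_s) = \alpha p_s q(p_s) \geq 0$.

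Next I would use admissibility to pin the retailer's price to the choke price. The claim is that $p_r = 1$ weakly dominates every $p_r' \in (0,1)$, which I would verify by comparing payoffs across the three orderings of the seller's price $p_s$. When $p_s < p_r'$ the seller already takes the whole market under both retailer prices, so the payoffs coincide. When $p_s > p_r'$, choosing $p_r'$ forces the retailer to undercut and fulfill demand at the strict loss $\pi_{r,r}(p_r') < 0$, whereas $p_r = 1$ still hands the market to the seller for the nonnegative fee --- this is the strict improvement that weak dominance requires. The tie $p_s = p_r'$ is the only delicate case: there the split-market payoff $\beta(p_r' - c_r)q(p_r') + \alpha p_r'(1-\beta)q(p_r')$ is weakly worse than $\alpha p_r' q(p_r')$, since their difference equals $\beta q(p_r')\,[\,c_r - (1-\alpha)p_r'\,] \geq 0$ because $c_r \geq 1 > (1-\alpha)p_r'$. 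Consequently every interior price is weakly dominated and any admissible equilibrium must have $p_r = 1$.

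With $p_r = 1$ fixed, the retailer never undercuts, so the seller is effectively a monopolist and best-responds by maximizing $\pi_{s,s}(p_s) = ((1-\alpha)p_s - c_s)q(p_s)$ over $p_s \in (0,1]$. I would note that this payoff is strictly concave wherever the margin $(1-\alpha)p_s - c_s$ is positive --- that is, for $p_s > \psind$ --- since there $\pi_{s,s}''(p_s) = 2(1-\alpha)q'(p_s) + ((1-\alpha)p_s - c_s)q''(p_s) < 0$ using $q' < 0$ and $q'' < 0$. When $\psstar \leq 1$ this gives the unique interior maximizer $\psstar$; when the seller cannot profit at all ($c_s \geq 1 - \alpha$, so $\psind \geq 1$ and $\pi_{s,s} \leq 0$ on $(0,1)$), the best the seller can do is the choke price, and the arg-max convention places $\psstar \geq 1$. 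Both cases are summarized by $p_s = \min\{\psstar, 1\}$. To finish I would check that $(1, \min\{\psstar, 1\})$ is genuinely a Nash equilibrium and that the seller's price is itself undominated --- in particular that the degenerate ``nobody sells'' profile is ruled out, since $p_s = 1$ is weakly dominated by $\psstar$ whenever $\psstar < 1$ --- which yields uniqueness.

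I expect the main obstacle to be twofold. The bookkeeping in the weak-dominance step at the tie $p_s = p_r'$ must be done carefully, and one must remember that weak dominance demands a strict gain for at least one opponent action, which is supplied here only by the $p_s > p_r'$ case. The second, more cosmetic, issue is that the declared strategy space is the open square while the equilibrium lives on the boundary $p = 1$; I would resolve this by working on the closure $[0,1]$, reading $p = 1$ as the limiting choke price at which demand and all payoffs vanish, so that the supremum in the seller's problem (in the subcase $c_s \geq 1 - \alpha$) is genuinely attained.
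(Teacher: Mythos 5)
Your proposal is correct and follows essentially the same route as the paper: show that $p_r = 1$ weakly dominates every interior retailer price (the paper's cases $p_s \leq p_r$ and $p_s \in (p_r, 1]$ correspond to yours), then conclude the seller best-responds with $p_s = \min\{\psstar, 1\}$. You are in fact somewhat more careful than the paper, which silently folds the split-market tie $p_s = p_r'$ into its first case, whereas your explicit computation $\beta q(p_r')\left[c_r - (1-\alpha)p_r'\right] \geq 0$ handles it for arbitrary $\beta$, and your closing checks (the strict-gain requirement for weak dominance, undominatedness of the seller's price, and the boundary issue with $\Scal = (0,1)^2$) tighten steps the paper leaves implicit.
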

\begin{proof}
    The retailer's dominant strategy is setting $p_r = 1$. Consider a price $p_r < 1$. If $p_s\leq p_r$,
    \[\pi_r(p_r, p_s) = \pi_{r,s}(p_s) = \pi_r(1, p_s)\]
    On the other hand, if $p_s \in (p_r,1]$, we have
    \[\pi_r(p_r, p_s) = \pi_{r,r}(p_r) < 0 < \pi_{r,s}(p_s) = \pi_r(1, p_s)\]
    With this, the independent seller's best response is clearly $p_s = \min\left\{1,\psstar\right\}$.
\end{proof}

Enforcing this requirement allows us to avoid such trivial equilibria where one or both players abstain from the market because they will simply never fulfill demand at any price.

\subsection{Independent seller's cost is lower}
We also assume in Section \ref{sec:preliminaries} that $c_s < c_r$. Suppose on the other hand that $c_r \leq c_s$. Then, the game becomes trivial:
\begin{lemma}
    In a staying subgame with $c_r< c_s$, there are no Nash equilibria where the independent seller fulfills demand.
\end{lemma}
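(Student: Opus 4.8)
The plan is to argue by contradiction. Suppose some equilibrium has the independent seller fulfilling a positive quantity of demand at price $p_s$ (so $q(p_s) > 0$); this means either $p_s < p_r$, with the seller capturing the whole market, or $p_s = p_r$ with market-split parameter $\beta < 1$. I will show that in either case the retailer has a strictly profitable deviation to a slightly lower price, contradicting equilibrium. The driver of the contradiction will be that, under $c_r < c_s$, the retailer is actually the cheaper producer and can always profitably undercut.

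First I would establish the bound $p_s \geq \psind$. In equilibrium the seller would never sell at a loss: were $\pi_{s,s}(p_s) < 0$, the seller could raise their price strictly above the retailer's, cede the market, and earn $0 = \pi_{s,r} > \pi_{s,s}(p_s)$, a profitable deviation. Hence $\pi_{s,s}(p_s) = ((1-\alpha)p_s - c_s)q(p_s) \geq 0$, and since $q(p_s) > 0$ this forces $p_s \geq \psind = c_s/(1-\alpha)$.

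The key use of the hypothesis $c_r < c_s$ is that it reverses the usual price ordering, yielding $\prind = c_r/(1-\alpha) < c_s/(1-\alpha) = \psind \leq p_s$. Then $\pi_{r,r}(p_s) - \pi_{r,s}(p_s) = q(p_s)(1-\alpha)(p_s - \prind) > 0$, so $\pi_{r,r}(p_s) > \pi_{r,s}(p_s)$. Consequently the retailer's equilibrium payoff --- which is $\pi_{r,s}(p_s)$ when $p_s < p_r$, and the convex combination $\beta\pi_{r,r}(p_s) + (1-\beta)\pi_{r,s}(p_s)$ when $p_s = p_r$ with $\beta < 1$ --- is in both cases strictly below $\pi_{r,r}(p_s)$. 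Since $\pi_{r,r}$ is continuous and $q$ monotone (so $q(p_s-\epsilon) \geq q(p_s) > 0$), for small enough $\epsilon > 0$ the payoff $\pi_{r,r}(p_s - \epsilon)$ exceeds the retailer's equilibrium payoff; undercutting to $p_s - \epsilon$ captures the whole market and is strictly profitable, the desired contradiction.

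The argument is short; the one point requiring care is to treat the two configurations (strict undercut versus tied price with $\beta < 1$) uniformly, which the observation $\pi_{r,r}(p_s) > \pi_{r,s}(p_s)$ handles at once, since both the referral-fee payoff and any convex combination of it with $\pi_{r,r}(p_s)$ stay strictly below $\pi_{r,r}(p_s)$. The essential new ingredient relative to the main-text equilibrium analysis is only the reversed ordering $\prind < \psind$ forced by $c_r < c_s$, in contrast to $\psind < \prind$ under the standing assumption $c_s < c_r$.
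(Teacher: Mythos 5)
Your proof is correct and takes essentially the same route as the paper's: both pin down $p_s \geq \psind$ from the seller's breakeven constraint, invoke the reversed ordering $\prind < \psind$ forced by $c_r < c_s$, and derive the contradiction from the retailer's profitable undercut (the paper packages this as the incompatible pair of necessary conditions $p_s \geq \psind$ and $p_s \leq \prind$). The only difference, a welcome one, is that you explicitly handle the tied-price split case $\beta < 1$ via the convex-combination observation $\beta\pi_{r,r}(p_s) + (1-\beta)\pi_{r,s}(p_s) < \pi_{r,r}(p_s)$, which the paper leaves implicit.
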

\begin{proof}
    Suppose for contradiction there exists an equilibrium $(p_r\geq p_s, p_s)$ where the independent seller fulfills demand. Then, we claim the following must be satisfied:
    \begin{enumerate}[(I)]
        \item $p_s\geq \psind$
        \item $p_s \leq \prind$
    \end{enumerate}
    Indeed, to show (I) suppose for contradiction $p_s \leq p_r < \psind$. Then, we have
    \[\pi_s(p_r,p_s) = \pi_{s,s}(p_s)<0=\pi_{s,r}(p_r) = \pi_s(p_r, \psind)\]
    Similarly, if $p_s < \psind\leq p_r$, we have $\pi_s(p_r, \psind) = \pi_{s,s}(\psind) = 0$.

    To show (II), suppose for contradiction $\prind<p_s\leq p_r$. Then, by construction of $\prind$, we have
    \[\pi_r(p_r, p_s) = \pi_{r,s}(p_s) < \pi_{r,r}(p_s)\]
    by continuity of $\pi_{r,r}$, this implies that there exists some $\varepsilon>0$ such that $\pi_{r,r}(p_s - \varepsilon)> \pi_{r,s}(p_s)$, and therefore $p_r$ yields a lower payoff than $p_s-\varepsilon$ so the retailer is not best responding.

    However, both (I) and (II) cannot be simultaneously satisfied, because under the assumption that $c_r<c_s$ we have that
    \[\prind = \frac{c_r}{1-\alpha}< \frac{c_s}{1-\alpha} = \psind\]
    Thus, such an equilibrium cannot exist.
\end{proof}
Because we are interested in modeling the dynamics between the retailer and independent seller, if there are no equilibria where the independent seller fulfills demand our model is entirely uninteresting. Intuitively, if the retailer can fulfill demand themselves for less cost per unit, they have no incentive to initiate a revenue sharing program in the first place.

This implies that the revenue sharing program is best thought of as permitting the sale of niche items or services that the retailer does not specialize in. Instead of specializing in this offering themselves, the model aims to find the conditions under which it would be beneficial for the retailer to partner with the specialist independent seller.

\section{Key prices and lemmas}
\label{appendix:key-prices}

\subsection{No split market equilibria}
\label{appendix:no_split_market_proof}
In this section, we present the full proof of Proposition \ref{thm:no_split_market}, reprised here for convenience --
\setcounter{proposition}{0}
\begin{proposition}
    \label{thm:no_split_market_reprised}
    Let $(p_r, p_s) = (p, p)$, with the equilibrium quantities sold by each player $q_r = \beta q(p)$ and $q_s = (1 - \beta) q(p)$. Then $\beta \in \{ 0, 1 \}$.
\end{proposition}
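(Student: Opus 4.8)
The plan is to argue by contradiction: assume a shared-price configuration $(p,p)$ with $\beta\in(0,1)$ is an equilibrium and then exhibit a profitable unilateral deviation for one of the two players. First I would record the relevant payoffs. Since the strategy space is $(0,1)\times(0,1)$ we have $p\in(0,1)$ and hence $q(p)>0$. Substituting $q_r=\beta q(p)$ and $q_s=(1-\beta)q(p)$ into \eqref{eq:a_payoff}--\eqref{eq:f_payoff}, and using \eqref{eq:retailer-fulfills-demand-payoffs}--\eqref{eq:seller-fulfills-demand-payoffs} (in particular $\pi_{s,r}=0$), gives
\[\pi_r(p,p)=\beta\,\pi_{r,r}(p)+(1-\beta)\,\pi_{r,s}(p),\qquad \pi_s(p,p)=(1-\beta)\,\pi_{s,s}(p).\]
Thus the retailer's split-market payoff is a \emph{strict} convex combination of its two single-fulfiller payoffs (strict because $\beta\in(0,1)$), and the seller's payoff is a strict fraction of $\pi_{s,s}(p)$.

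Next I would catalogue the two marginal moves available to the retailer. Raising the price to any $p_r'\in(p,1)$ hands the entire market to the seller and yields \emph{exactly} $\pi_{r,s}(p)$; lowering it to $p-\varepsilon$ captures the entire market and yields $\pi_{r,r}(p-\varepsilon)$, which tends to $\pi_{r,r}(p)$ as $\varepsilon\to0^+$. The key fact is that a strict convex combination of two \emph{unequal} numbers is strictly smaller than their maximum. Hence, if $\pi_{r,s}(p)>\pi_{r,r}(p)$ then raising the price is strictly profitable; and if $\pi_{r,r}(p)>\pi_{r,s}(p)$ then $\pi_{r,r}(p)>\pi_r(p,p)$, so by continuity of $\pi_{r,r}$ there is a small $\varepsilon>0$ with $\pi_{r,r}(p-\varepsilon)>\pi_r(p,p)$ and lowering the price is strictly profitable. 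In either subcase the retailer deviates, contradicting equilibrium.

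This dispatches everything except the knife-edge $\pi_{r,r}(p)=\pi_{r,s}(p)$, i.e. $p=\prind$, where the retailer is exactly indifferent and neither marginal move helps; I expect this to be the main obstacle. The plan is to resolve it by switching to the \emph{seller}. This tie can only occur when $\prind\in(0,1)$, so $q(\prind)>0$; and because $c_s<c_r$ (Assumption (II)) we have $(1-\alpha)\prind-c_s=c_r-c_s>0$, whence $\pi_{s,s}(\prind)>0$. Then $\pi_s(p,p)=(1-\beta)\pi_{s,s}(\prind)<\pi_{s,s}(\prind)$ since $\beta>0$, and by continuity of $\pi_{s,s}$ there is a small $\varepsilon>0$ with $\pi_{s,s}(\prind-\varepsilon)>\pi_s(p,p)$; the seller thus profits strictly by undercutting to $\prind-\varepsilon$ and taking the whole market, again contradicting equilibrium.

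Having shown that each of the exhaustive cases ($\pi_{r,s}>\pi_{r,r}$, $\pi_{r,r}>\pi_{r,s}$, and $\pi_{r,r}=\pi_{r,s}$) admits a profitable deviation, the assumption $\beta\in(0,1)$ is untenable, so $\beta\in\{0,1\}$. I would note that the only real ingredients are continuity of the payoffs and the cost asymmetry $c_s<c_r$: continuity is what lets the ``lower the price'' deviations beat the limiting value (the ``raise'' deviations attain it exactly), and the strict inequality $\psind<\prind$ is precisely what guarantees $\pi_{s,s}(\prind)>0$ so that the seller's deviation closes the tie case. The mild bookkeeping points to watch are ensuring $p-\varepsilon>0$ and $p+\varepsilon<1$ for small $\varepsilon$ (both immediate from $p\in(0,1)$), and confirming that the equality case never arises when $\prind\notin(0,1)$, in which event only the first two cases apply.
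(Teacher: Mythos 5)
Your proposal is correct and follows essentially the same route as the paper's proof: the same decomposition $\pi_r(p,p)=\beta\pi_{r,r}(p)+(1-\beta)\pi_{r,s}(p)$, the same trichotomy on $p$ versus $\prind$ (raise price when $\pi_{r,s}(p)>\pi_{r,r}(p)$, undercut by continuity when $\pi_{r,r}(p)>\pi_{r,s}(p)$), and the same resolution of the knife-edge $p=\prind$ via the seller's undercut, which works because $\psind<\prind$ forces $\pi_{s,s}(\prind)>0$. Your explicit computation $(1-\alpha)\prind-c_s=c_r-c_s>0$ just makes that last positivity slightly more transparent than the paper's phrasing.
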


\begin{proof}
    Suppose for contradiction $\beta\in (0,1)$. For such an equilibrium, the retailer's payoff is
    \[ \pi_r(p) = \beta\pi_{r,r}(p) + (1-\beta)\pi_{r,s}(p)\]
    If $p<\prind$, then the retailer clearly prefers to increase their price to $p + \varepsilon$ and allow the independent seller to fulfill demand at price $p$, achieving payoff $\pi_{r,s}(p)$ rather than the weighted average.

    If $p>\prind$, then $\pi_{r,r}(p)>\pi_{r,s}(p)$, so by continuity of $q(p)$ there exists some $\varepsilon>0$ such that the retailer prefers to achieve payoff $\pi_{r}(p-\varepsilon)=\pi_{r,r}(p-\varepsilon)$. Specifically, suppose that $\pi_{r,r}(p-\varepsilon)=\pi_{r,r}(p)-\delta$. Indeed, we need to find a $\delta$ satisfying
    \[\pi_{r,r}(p)-\delta = \pi_{r,r}(p-\varepsilon)> \beta\pi_{r,r}(p) + (1-\beta)\pi_{r,s}(p)\iff \delta< (1-\beta)(\pi_{r,r}(p)-\pi_{r,s}(p))\]
    which always exists by continuity of $\pi_{r,r}(p)$ and the fact that $\pi_{r,r}(p)-\pi_{r,s}(p)>0$.

    Finally, if $p=\prind$, we have $p>\psind$, so a similar argument as the previous case for the independent seller implies that they would prefer to increase their price to $p-\varepsilon$, allowing them to achieve a payoff of
    \[\pi_{r,s}(p-\varepsilon)>(1-\beta)\pi_{r,s}(p) \]
    by continuity of $\pi_{r,s}(p)$. Thus, if $\beta \notin \{ 0, 1 \}$, the proposed price and market split does not constitute an equilibrium.
\end{proof}

\subsection{Payoff function properties}

First, note that the revenue function $pq(p)$ is strictly concave because
\[\frac{\partial^2}{\partial p^2}pq(p) =\frac{\partial}{\partial p}(q(p)+pq'(p)) = (1+p)q''(p) + q'(p) <0\]
thus, it is uniquely maximized on $(0,1)$ and the price $\tilde{p}\equiv \arg\max_p pq(p)$ is well-defined. Notice $\tilde{p}$ is trivially also the maximizer of $\pi_{r,s}(p) = (1-\alpha)pq(p)$. For the payoff function $\pi_{r,r}(p) = pq(p)-c_rq(p)$, using the above result we have
\[\frac{\partial^2}{\partial p^2}\pi_{r,r}(p) = \frac{\partial^2}{\partial p^2}(pq(p)-c_rq(p)) = (1-c_r+p)q''(p)+q'(p)\]
and since $c_r<1$, $\pi_{r,r}$ is also strictly concave and uniquely maximized which implies $\prstar$ is well-defined. Similarly, $\pi_{s,s}(p)$ is strictly concave for all prices whenever $c_s\leq1-\alpha$, so $\psstar$ is well-defined. For $c_s>1-\alpha$, there is no price at which the independent seller can achieve positive payoff, so $\psstar$ will not be relevant. We now remark about the relative ordering of these maximizers --
\begin{lemma}
    Let $\tilde{p}\equiv \arg\max_p\pi_{r,s}(p)$. Then, $\prstar, \psstar, p^\dagger>\tilde{p}$, and $\psstar>\psind$ for $\alpha<1-c_s$.
    \label{lemma:ptilde-ordering}
\end{lemma}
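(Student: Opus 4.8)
The plan is to reduce all three comparisons to the single revenue function $pq(p)$, whose unique maximizer is $\tilde{p}$ and which satisfies the first-order condition $q(\tilde p) + \tilde p\, q'(\tilde p) = 0$; since $q$ is strictly decreasing we have $q'(\tilde p) < 0$. The guiding observation is that $\prstar$ and $\psstar$ are maximizers of $pq(p)$ penalized by a positive multiple of $q(p)$, while $p^\dagger$ is an intersection point on the strictly concave curve $\pi_{r,s} = \alpha\, pq(p)$; in each case a first-order comparison at $\tilde p$ pins down which side of $\tilde p$ the price lies on.

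For $\prstar$ I would write $\pi_{r,r}(p) = pq(p) - c_r q(p)$ and differentiate, obtaining $\pi_{r,r}'(\tilde p) = [q(\tilde p) + \tilde p q'(\tilde p)] - c_r q'(\tilde p) = -c_r q'(\tilde p) > 0$. Since $\pi_{r,r}$ is strictly concave (already established), its derivative is strictly decreasing and vanishes at $\prstar$; a positive derivative at $\tilde p$ therefore forces $\prstar > \tilde p$. For $\psstar$ I would use $(1-\alpha)^{-1}\pi_{s,s}(p) = pq(p) - \psind\, q(p)$, so the same computation gives derivative $-\psind\, q'(\tilde p) > 0$ at $\tilde p$, i.e. the penalized revenue is still increasing there.

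The mild subtlety is that $\pi_{s,s}$ need not be concave for $p < \psind$, so I would first localize $\psstar$. Whenever $\psstar$ is relevant we have $\alpha < 1-c_s$, hence $\psind < 1$; then $\pi_{s,s}(p) = [(1-\alpha)p - c_s]q(p)$ is strictly negative on $(0,\psind)$, zero at $\psind$ and at $1$, and strictly positive on $(\psind,1)$, so $\psstar \in (\psind,1)$. This simultaneously proves the last claim $\psstar > \psind$. On $[\psind,1]$ both terms of $\pi_{s,s}''$ are nonpositive (the first strictly negative), so $\pi_{s,s}$ is strictly concave there; combining this with the positive derivative at $\tilde p$ (or, if $\tilde p \le \psind$, with $\psstar > \psind \ge \tilde p$ directly) yields $\psstar > \tilde p$.

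For $p^\dagger$ the argument is geometric. Since $\pi_{r,s}(p) = \alpha\, pq(p)$ is strictly concave with unique peak at $\tilde p$ and $\pi_{r,s}(1)=0 < \pi_{r,r}(\prstar) = \pi_{r,s}(p^\dagger)$, the equation $\pi_{r,s}(p) = \pi_{r,r}(\prstar)$ has no root in $(p^\dagger,1]$ and the curve stays strictly below its value at $p^\dagger$ there, so $p^\dagger$ lies on the decreasing branch and $p^\dagger \ge \tilde p$. Strictness holds exactly when $\pi_{r,r}(\prstar)$ is strictly below the peak $\pi_{r,s}(\tilde p) = \alpha\, \tilde p\, q(\tilde p)$, which is the case for every $\alpha$ strictly above the threshold at which $p^\dagger$ first becomes well-defined; the single knife-edge value where the two curves are tangent (so $p^\dagger = \tilde p$) forces $\max\{\prstar,\psind\} > p^\dagger$ and thus an empty equilibrium interval, so it is harmless. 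The main obstacle is this $p^\dagger$ analysis -- handling the ``largest root'' definition together with the decreasing-branch and strictness bookkeeping -- whereas the $\prstar$ and $\psstar$ cases follow immediately once the shared first-order identity at $\tilde p$ is in place.
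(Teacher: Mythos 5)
Your proof is correct, and while it rests on the same core idea as the paper's proof --- a first-order comparison at $\tilde{p}$ --- the execution differs in ways worth recording. For $\prstar$ and $\psstar$ the paper compares the first-order conditions $q'(p)/q(p) = 1/(c-p)$ directly, observing that shifting the right-hand side by the cost moves the intersection rightward, and reads $\psstar>\psind$ off the sign of the right-hand side; your derivative-sign computation at $\tilde{p}$ is the same argument in differential form. Where you genuinely add value is at the two soft spots of the paper's own proof. First, the paper asserts $\pi_{s,s}$ is globally strictly concave, but its second-derivative formula is miscomputed: the correct expression is $2(1-\alpha)q'(p) + ((1-\alpha)p - c_s)q''(p)$, whose second term is \emph{positive} for $p<\psind$, so concavity can genuinely fail below the breakeven price. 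Your sign-based localization of $\psstar$ to $(\psind,1)$ --- which simultaneously delivers the claim $\psstar>\psind$ --- patches this cleanly (the paper's unimodality result, Corollary \ref{cor:unimodality}, would serve the same purpose, and implicitly is what rescues the paper's argument). Second, for $\pdagger$ the paper argues by contradiction from $\pdagger<\tilde{p}$ together with the intermediate value theorem and maximality of $\pdagger$, which only establishes the weak inequality $\pdagger\geq\tilde{p}$: at the knife-edge fee where $\alpha\,\tilde{p}\,q(\tilde{p}) = \pi_{r,r}(\prstar)$ exactly, the curves are tangent and $\pdagger=\tilde{p}$, so the strict inequality in the lemma statement fails there. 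You are right to flag this, and your observation that at tangency $\pdagger=\tilde{p}<\prstar$ forces the interval $\left[\max\{\prstar,\psind\},\min\{\pdagger,\psstar\}\right]$ to be empty correctly shows the edge case is harmless --- every downstream use (Lemma \ref{lemma:prind-prstar-pdagger-ordering}, the unequal-price derivations) only ever needs $\tilde{p}\leq\pdagger$. Net assessment: your route buys rigor at exactly the two places the paper's proof glosses over, at the cost of slightly heavier case bookkeeping; the only deference in your write-up is the citation of strict concavity of $\pi_{r,r}$ ``as established,'' which inherits the paper's miscomputation, though it is repairable either by restricting to $p\geq c_r$ (where $2q'+(p-c_r)q''<0$ and where $\prstar$ must lie) or by invoking unimodality instead.
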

\begin{proof}
    Consider the first order conditions satisfied by $\tilde{p}$ and $\prstar$ --
    \begin{align}
        \frac{q'(\tilde{p})}{q(\tilde{p})}
         & = -\frac{1}{\tilde{p}}
        \label{eq:ptilde-foc}        \\
        \frac{q'(\prstar)}{q(\prstar)}
         & = \frac{1}{c_r - \prstar}
        \label{eq:prstar-foc}
    \end{align}
    Notice that the left side is the same in both equations, while the function on the right side is shifted by $-c_r$. Thus, the price at which the two functions intersect also shifts right, implying that $\prstar>\tilde{p}$. Now, consider the first order condition for $\psstar$ --
    \begin{equation}
        \frac{q'(\psstar)}{q(\psstar)} = \frac{1}{\frac{c_s}{1-\alpha}-\psstar}
        \label{eq:psstar-foc}
    \end{equation}
    Clearly, the same argument suffices to imply $\psstar> \tilde{p}$. Furthermore, notice that $\frac{q'(p)}{q(p)}<0$, so we must have $\psstar>\frac{c_s}{1-\alpha} = \psind$ in order for the right side to be negative as well and for there to be an intersection. However, when $\alpha\geq 1-c_s\iff \psind \geq 1$, the independent seller cannot achieve positive payoff at any price, thus their optimal price would be $\psstar=1\leq \psind$ (because $q(1)=\pi_{s,s}(1)=0$ per our normalization choices).

    Finally, we turn our attention to showing $\pdagger>\tilde{p}$. The only nontrivial case is when $\pdagger<\min\{\psstar, \prstar\}$. Recall that $\pdagger$ was defined as the greatest price such that $\pi_{r,r}(\prstar) = \pi_{r,s}(\pdagger)$. Suppose for contadiction $\pdagger<\tilde{p}$; this implies $\pi_{r,s}(\tilde{p})>\pi_{r,r}(\prstar)$. However, since $\pi_{r,s}$ is continuous with $\pi_{r,s}(1)=0$, this implies there must exist some ${\pdagger}'\in(\tilde{p}, 1)$ such that $\pi_{r,s}({\pdagger}')=\pi_{r,r}(\prstar)$. However, this implies ${\pdagger}'>\tilde{p}>\pdagger$, contradicting the maximality of $\pdagger$ and concluding the proof.
\end{proof}
\setcounter{corollary}{0}
\begin{corollary}
    The payoff functions $\pi_{r,r}$, $\pi_{s,s}$, and $\pi_{r,s}$ are \emph{unimodal} on $[0,1]$, in the sense that they have only one critical point.
    \label{cor:unimodality}
\end{corollary}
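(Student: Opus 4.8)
The plan is to reduce the claim to the strict concavity already established just before the statement for two of the three functions, and to handle the remaining one by localizing its critical points. The underlying observation I would use throughout is that a differentiable, strictly concave function on an interval has a strictly decreasing derivative, which can vanish at most once, so it has at most one critical point; existence of an interior maximizer then upgrades ``at most one'' to ``exactly one.''

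First I would dispatch $\pi_{r,s}$ and $\pi_{r,r}$, for which nothing new is needed. Since $\pi_{r,s}(p) = (1-\alpha)pq(p)$ is a positive multiple of the revenue function $pq(p)$, which was shown strictly concave, it is itself strictly concave and so has the single critical point $\tilde{p}$. Likewise $\pi_{r,r}(p) = pq(p) - c_rq(p)$ was shown strictly concave (its second derivative is negative because $c_r < 1$), giving the unique critical point $\prstar$. These two cases follow immediately from the second-derivative computations preceding the statement.

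The main obstacle is $\pi_{s,s}$, which is \emph{not} globally concave for general concave demand: one computes $\pi_{s,s}''(p) = 2(1-\alpha)q'(p) + ((1-\alpha)p - c_s)q''(p)$, and the second term is positive when $p < \psind$ (there $(1-\alpha)p - c_s < 0$ while $q'' < 0$), so the sign of $\pi_{s,s}''$ is indefinite on $(0,\psind)$. Rather than forcing global concavity, the key idea is to localize the critical points. Setting $\pi_{s,s}'(p) = 0$ and rearranging yields the first-order condition $q'(p)/q(p) = 1/(\psind - p)$, which is exactly \eqref{eq:psstar-foc}. The left-hand side is strictly negative everywhere (since $q' < 0 < q$ on the interior), whereas the right-hand side is negative only for $p > \psind$; hence every critical point of $\pi_{s,s}$ must lie in $(\psind, 1)$.

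Finally I would close the argument on $(\psind, 1)$: there both terms of $\pi_{s,s}''$ are negative, so $\pi_{s,s}$ is strictly concave on that subinterval and admits at most one critical point, which combined with the previous paragraph gives at most one critical point on all of $[0,1]$. In the relevant regime $\alpha < 1 - c_s$, Lemma \ref{lemma:ptilde-ordering} supplies the interior maximizer $\psstar \in (\psind, 1)$, so the single critical point exists and the proof is complete. (When $\alpha \geq 1 - c_s$ the maximizer sits at the boundary and is flagged as irrelevant in the setup, so the ``exactly one interior critical point'' conclusion is only asserted in the regime where $\psstar$ is well-defined.)
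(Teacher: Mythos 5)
Your treatment of $\pi_{s,s}$ is correct and genuinely sharper than the paper's own preliminaries, but it makes your handling of $\pi_{r,r}$ internally inconsistent, and that step fails as written. You accept the paper's assertion that $\pi_{r,r}$ is globally strictly concave ``because $c_r<1$,'' but the paper's second-derivative formula there is an algebra slip: differentiating $\pi_{r,r}(p)=(p-c_r)q(p)$ correctly gives $\pi_{r,r}''(p)=2q'(p)+(p-c_r)q''(p)$, and for $p<c_r$ the term $(p-c_r)q''(p)$ is positive, so the sign is indefinite --- this is \emph{exactly} the phenomenon you diagnosed for $\pi_{s,s}$, which has the identical structure $\pi_{s,s}(p)=(1-\alpha)\left(p-\psind\right)q(p)$ with $c_r$ replaced by $\psind$. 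Concretely, for $q(p)=1-p^2$ one computes $\pi_{r,r}''(p)=-6p+2c_r>0$ on $\left[0,\tfrac{c_r}{3}\right)$, so for $c_r$ near $1$ the function is convex on a nontrivial initial segment and your ``strictly concave, hence unique critical point'' step is simply false for valid demand curves. The repair is your own localization argument verbatim: for $p\le c_r$ we have $\pi_{r,r}'(p)=q(p)+(p-c_r)q'(p)>0$, so every critical point lies in $(c_r,1)$, where both terms of $\pi_{r,r}''$ are negative; strict concavity there, plus the sign change of $\pi_{r,r}'$ between $c_r$ and $1$, yields exactly one critical point. (The scalar in front of $\pi_{r,s}$ --- $\alpha$ per \eqref{eq:seller-fulfills-demand-payoffs} versus the $(1-\alpha)$ you copied from the appendix --- is immaterial: either way it is a positive multiple of the genuinely strictly concave revenue $pq(p)$, since $(pq)''=2q'+pq''<0$.)

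It is also worth noting that your route differs from the paper's, which sidesteps concavity entirely: the paper observes that in each first-order condition \eqref{eq:ptilde-foc}, \eqref{eq:prstar-foc}, \eqref{eq:psstar-foc} the left side $q'(p)/q(p)$ is strictly decreasing (as $\frac{\partial}{\partial p}\left(\frac{q'}{q}\right)=\frac{qq''-(q')^2}{q^2}<0$) while the right side is strictly increasing on the relevant branch, so each FOC has at most one solution --- one uniform argument covering all three payoffs, non-concave regions included. Your decomposition (global concavity where it truly holds, localization where it does not) is more work, but once the $\pi_{r,r}$ step is patched as above it is complete and more informative: it correctly exposes that the paper's preliminary claim that $\pi_{s,s}$ is strictly concave whenever $c_s\le 1-\alpha$ rests on the same faulty algebra, and your restriction of the conclusion to the regime $\alpha<1-c_s$, where $\psstar$ is an interior critical point, is consistent with the paper's own caveat that $\psstar$ is irrelevant otherwise.
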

\begin{proof}
    This follows from the first order conditions above; in each, the function on the right side is strictly increasing in $p$, and the function on the left is strictly decreasing in $p$ because
    \[\frac{\partial }{\partial p}\left(\frac{q'(p)}{q(p)}\right) = \frac{q(p)q''(p)-(q'(p))^2}{q^2(p)}<0\]
    Thus, the intersection price must be unique which is equivalent to the payoff function being unimodal.
\end{proof}
\subsection{Key prices and relative ordering}
\label{appendix:key-prices-relative-ordering}

In this section we derive the key prices and fees introduced in Sections \ref{sec:additional_notation} and \ref{sec:nash-equilibrium-existence}. We begin with the indifference prices --
\begin{align}
    \pi_{r,r}(\prind)
     & = \pi_{r,s}(\prind)\nonumber                      \\
    (\prind - c_r)q(\prind)
     & = \alpha\prind q(\prind)\nonumber                 \\
    \prind
     & = \frac{c_r}{1-\alpha}\label{eq:prind-derivation}
\end{align}
\begin{align}
    \pi_{s,s}(\psind)
     & = \pi_{s,r}(\psind)\nonumber                      \\
    ((1-\alpha)\psind - c_s)q(\psind)
     & = 0\nonumber                                      \\
    \psind
     & = \frac{c_s}{1-\alpha}\label{eq:psind-derivation}
\end{align}

Next, we will derive the fees that demarcate the changes in ordering between the prices. Note that there is no closed form for $\alphasdagger$. The simplest are $\alpharstar$ and $\alphasstar$ --
\begin{equation}
    \psind = \frac{c_s}{1-\alpha}
    \leq \prstar \iff  1-\alpha\geq \frac{c_s}{\prstar}\iff \alpha \leq 1-\frac{c_s}{\prstar} \equiv \alpharstar
    \label{eq:alpharstar-derivation}
\end{equation}
\begin{equation}
    \psind = \frac{c_s}{1-\alpha}
    \leq \psstar \iff  1-\alpha\geq \frac{c_s}{\psstar}\iff \alpha \leq 1-\frac{c_s}{\psstar} \equiv \alphasstar
    \label{eq:alphasstar-derivation}
\end{equation}
Next we find an expression for $\alphaopt$. Indeed, considering the functional form of the first order conditions \eqref{eq:prstar-foc} and \eqref{eq:psstar-foc}, we clearly have
\begin{equation}\prstar = \psstar \iff \frac{c_s}{1-\alphaopt} = c_r \iff \alphaopt = 1-\frac{c_s}{c_r}\label{eq:alphaopt-derivation}\end{equation}
To determine the ordering, note that as $\alpha\downarrow 0$, $\psind \downarrow c_s$, which implies by the first order conditions that $\psstar\leq \prstar$ when $\alpha=0$. Thus, the ordering must be $\psstar\leq \prstar\iff \alpha\leq \alphaopt$.

Finally, it remains to find a closed form for $\alphardagger$. Note that $\alphardagger$ determines the relative ordering of both $\prind$ and $\prstar$ as well as $\pdagger$ and $\prstar$. Thus in order to show $\alphardagger$ is well-defined, we begin by proving the following lemma --
\begin{lemma}
    $\prind\leq \prstar\iff \pdagger \leq \prstar$.
    \label{lemma:prind-prstar-pdagger-ordering}
\end{lemma}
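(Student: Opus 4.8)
The plan is to collapse the claimed equivalence into a single chain of elementary equivalences assembled from two facts that are already in hand. The first fact is the exact characterization of which retailer payoff curve dominates at a given price: since $q(p)>0$ on $(0,1)$, the definitions \eqref{eq:retailer-fulfills-demand-payoffs}, \eqref{eq:seller-fulfills-demand-payoffs} and \eqref{eq:prind-derivation} give $\pi_{r,r}(p)\geq \pi_{r,s}(p)\iff (p-c_r)\geq \alpha p\iff (1-\alpha)p\geq c_r\iff p\geq \prind$. The second fact is that both $\prstar$ and $\pdagger$ lie strictly to the right of $\tilde{p}=\arg\max_p\pi_{r,s}(p)$ by Lemma \ref{lemma:ptilde-ordering}, and hence both sit on the strictly decreasing branch of the unimodal curve $\pi_{r,s}$ (Corollary \ref{cor:unimodality}), on which $\pi_{r,s}$ is a strictly decreasing, injective function of price.

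With these in place I would argue as follows. By the defining property of $\pdagger$ we have $\pi_{r,s}(\pdagger)=\pi_{r,r}(\prstar)$. Because $\pdagger$ and $\prstar$ both lie on the decreasing branch of $\pi_{r,s}$, that map reverses their order, so $\pdagger\leq \prstar\iff \pi_{r,s}(\pdagger)\geq \pi_{r,s}(\prstar)$. Substituting the defining equality converts the right-hand side into $\pi_{r,r}(\prstar)\geq \pi_{r,s}(\prstar)$, and the first fact evaluated at $p=\prstar$ turns this into $\prstar\geq \prind$. Reading the chain end to end gives exactly $\pdagger\leq \prstar\iff \prind\leq \prstar$, which is the lemma.

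The one point requiring care, and the step I expect to be the only nontrivial one, is the well-definedness of $\pdagger$, which enters the chain implicitly. In the regime $\prind>\prstar$ (the false side of the equivalence) one has $\pi_{r,r}(\prstar)<\pi_{r,s}(\prstar)\leq \max_p\pi_{r,s}(p)$ while $\pi_{r,s}(1)=0<\pi_{r,r}(\prstar)$, so the intermediate value theorem produces a crossing price in $(\prstar,1)$; thus $\pdagger$ is well-defined and, by the monotonicity argument above, satisfies $\pdagger>\prstar$, consistent with the equivalence. In the complementary regime $\prind\leq \prstar$ the equality $\pi_{r,s}(\pdagger)=\pi_{r,r}(\prstar)$ forces $\pdagger\leq \prstar$ whenever $\pdagger$ exists; the only degenerate subcase is $\pi_{r,r}(\prstar)>\max_p\pi_{r,s}(p)$, where $\pdagger$ fails to exist, and this can occur only on the side $\prind\leq \prstar$, so it does not disturb the stated equivalence. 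The algebraic core is the one-line chain; the existence and monotonicity bookkeeping is the only place where a short separate check is needed.
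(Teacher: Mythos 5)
Your proof is correct, and it takes a somewhat different route than the paper's, which proves the two directions separately by contradiction: for the forward direction the paper evaluates the indifference characterization at $\pdagger$ (if $\pdagger > \prstar \geq \prind$ then $\pi_{r,r}(\pdagger) > \pi_{r,s}(\pdagger) = \pi_{r,r}(\prstar)$, contradicting that $\prstar$ uniquely maximizes $\pi_{r,r}$), while its reverse direction uses exactly your monotonicity step ($\tilde{p}\leq \pdagger\leq \prstar$ forces $\pi_{r,s}(\prstar)\leq \pi_{r,s}(\pdagger)=\pi_{r,r}(\prstar)$, contradicting $\prstar<\prind$). You instead fold both directions into a single chain of equivalences built from just two ingredients --- the pointwise characterization $\pi_{r,r}(p)\geq \pi_{r,s}(p)\iff p\geq \prind$ evaluated only at $p=\prstar$, and strict order reversal of $\pi_{r,s}$ on the decreasing branch past $\tilde{p}$ --- so you never need the optimality of $\prstar$ for $\pi_{r,r}$ at all. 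What your version buys beyond economy is the explicit treatment of well-definedness of $\pdagger$, which the paper's proof tacitly assumes: the paper only establishes existence \emph{after} the lemma (at $\alpha=\alphardagger$, where $\prind=\prstar$ makes $\prstar$ itself a crossing, and then for $\alpha>\alphardagger$ via monotonicity of $\pi_{r,s}$ in $\alpha$), whereas your intermediate-value argument shows directly that $\pdagger$ exists whenever $\prind>\prstar$, and that the only failure mode, $\pi_{r,r}(\prstar)>\max_p \pi_{r,s}(p)$, can occur only on the $\prind\leq \prstar$ side --- precisely because $\prind>\prstar$ would give $\max_p\pi_{r,s}(p)\geq \pi_{r,s}(\prstar)>\pi_{r,r}(\prstar)$. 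This makes your write-up self-contained on a point the paper handles loosely, at no cost in length.
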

\begin{proof}
    For the forward direction suppose $\prind \leq \prstar$ and suppose for contradiction $p^\dagger > \prstar \geq \prind$. This implies $\pi_{r,r}(\pdagger)> \pi_{r,s}(\pdagger) = \pi_{r,r}(\prstar)$. However, this is clearly a contradiction since $\prstar$ is the unique maximizer of $\pi_{r,r}$.

    For the reverse direction, suppose $\pdagger \leq \prstar$ and suppose for contradiction $\prind > \prstar$. Note $\pi_{r,s}(\pdagger) = \pi_{r,r}(\prstar)$, and since we have $\tilde{p}\leq \pdagger \leq \prstar$ by Lemma \ref{lemma:ptilde-ordering}, it follows that
    \[\pi_{r,s}(\prstar)\leq \pi_{r,s}(p^\dagger) = \pi_{r,r}(\prstar)\]
    However, this contradicts the definition of $\prind$ since $\prstar<\prind$ but $\pi_{r,s}(\prstar)\leq \pi_{r,r}(\prstar)$. This concludes the proof of the reverse direction.
\end{proof}

In light of the lemma, it's sufficient to define $\alphardagger$ as follows --
\begin{equation}\prind \leq \prstar \iff \frac{c_r}{1-\alpha} \leq \prstar \iff 1-\alpha \leq \frac{c_r}{\prstar}\iff \alpha \leq 1-\frac{c_r}{\prstar} \equiv \alphardagger
    \label{eq:alphardagger-derivation}\end{equation}

Since at $\alpha = \alphardagger$ we have $\pdagger$ is well-defined per Lemma \ref{lemma:prind-prstar-pdagger-ordering} (indeed, it is equal to $\prstar$), it immediately follows that $\pdagger$ is well-defined for all $\alpha>\alphardagger$ since $\pi_{r,s}$ is strictly increasing in $\alpha$.

\subsection{Relative ordering of fees}
\label{appendix:relative-ordering-fees}

For some of the fees, the relative ordering follows immediately from the closed form expressions, for instance
\begin{equation}\alphaopt<\alpharstar\iff 1-\frac{c_s}{c_r}<1-\frac{c_s}{\prstar}\iff c_r<\prstar
    \label{eq:alphaopt-alpharstar}
\end{equation}
\begin{equation}\alphardagger<\alpharstar\iff 1-\frac{c_r}{\prstar}<1-\frac{c_s}{\prstar}\iff c_s<c_r
    \label{eq:alphardagger-alpharstar}
\end{equation}

As motivated above, some of the fees' ordering depends on the seller optimal feasibility cost $\csstar$. Indeed, we have that
\begin{equation}\alphasstar\leq \alphardagger \iff 1-\frac{c_r}{\prstar}\leq 1-\frac{c_s}{c_r} \iff c_s \leq \frac{c_r^2}{\prstar}\equiv \csstar \label{eq:selleroptcost-derivation-i}\end{equation}
\begin{equation}\alphardagger\leq \alphaopt \iff  1-\frac{c_r}{\prstar}\leq 1-\frac{c_s}{c_r} \iff c_s \leq \frac{c_r^2}{\prstar}= \csstar\label{eq:selleroptcost-derivation-ii}\end{equation}

Finally, we turn our attention to $\alphasdagger$ which we define as the largest fee satisfying $\pi_{r,s}(\psind, \alphasdagger) = \pi_{r,r}(\prstar)$. Indeed, notice that we can write
\[\pi_{r,s}(\psind(\alpha)) = \alpha\psind q(\psind) = \left(1-\frac{c_s}{\psind}\right)\psind q(\psind) = (\psind - c_s)q(\psind) \]
where we now view $\psind(\alpha) = \frac{c_s}{1-\alpha}$ as a strictly increasing function of $\alpha$, that attains values
\[\psind(\alphaopt) = \frac{c_s}{1-(1-\frac{c_s}{c_r})} = c_r;\qquad \psind(1-c_s) = 1\]
By continuity every price in the interval $[c_r, 1]$ is attained by $\psind(\alpha)$ as $\alpha$ varies from $\alphaopt$ to $1-c_s$. With this in mind, notice that for all $p\in [c_r, 1]$ we have
\[\pi_{r,r}(p)= (p-c_r)q(p)<(p-c_s)q(p)= \pi_{r,s}(\psind(\alpha)=p)\]
In particular, this implies $\pi_{r,s}(\psind(\alpha))$ is a continuous function with values ranging from $0$ to $\pi_{r,s}(\psind(\alpha)=\prstar)>\pi_{r,r}(\prstar)$, which implies $\alphasdagger\in [0, 1-c_s]$ is well-defined.

We can further lower bound $\alphasdagger$ by recalling from Corollary \ref{cor:unimodality} that $\pi_{r,s}$ is unimodal. Thus, for any $\hat{\alpha}$ such that $\pi_{r,s}(\psind(\hat{\alpha}))>\pi_{r,r}(\prstar)$, we can conclude $\alphasdagger>\hat{\alpha}$. Since by construction we have $\psind(\alpharstar)=\prstar$, this therefore implies $\alphasdagger>\alpharstar$. Per \eqref{eq:alphaopt-alpharstar} and \eqref{eq:alphardagger-alpharstar}, this further implies $\alphasdagger>\alphaopt, \alphardagger$, and indeed if $c_s\leq \csstar$ we also have $\alphasdagger>\alphasstar$. By continuity of $\pi_{r,s}$, the following implication is immediate --

\begin{corollary}
    Suppose $\hat{\alpha}$ satisfies $\pi_{r,s}(\psind(\hat{\alpha}))>\pi_{r,r}(\prstar)$. Then, for $\alpha\in [\hat{\alpha}, 1]$, we have $\alpha\leq \alphasdagger\iff \psind\leq \pdagger$.
    \label{cor:alphasdagger-ordering}
\end{corollary}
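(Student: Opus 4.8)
The plan is to show that on $[\hat\alpha,1]$ both sides of the claimed equivalence are individually equivalent to the single payoff inequality $\pi_{r,s}(\psind,\alpha)\geq \pi_{r,r}(\prstar)$, and then chain the two equivalences. Throughout I use the identity established just above, $\pi_{r,s}(\psind(\alpha),\alpha)=(\psind-c_s)q(\psind)=:g(\psind)$, and I write $p_g\equiv\arg\max_p g(p)$. Since $c_s<1$, the same second-derivative computation that makes $\pi_{r,r}$ strictly concave shows $g$ is strictly concave, hence unimodal and increasing on $[0,p_g]$; and the first-order-condition shift argument of Lemma~\ref{lemma:ptilde-ordering} (replacing $c_r$ by $c_s$) gives $p_g>\tilde{p}$. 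I also record that for every $\alpha$ the map $p\mapsto\pi_{r,s}(p,\alpha)=\alpha p q(p)$ is unimodal (Corollary~\ref{cor:unimodality}) with peak at $\tilde{p}$, and that its maximum value $\alpha\tilde{p}q(\tilde{p})$ is increasing in $\alpha$. For $\alpha\geq\hat\alpha$ this maximum is at least $\pi_{r,s}(\psind(\hat\alpha),\hat\alpha)>\pi_{r,r}(\prstar)$, so $\pdagger$ is well-defined, satisfies $\pdagger>\tilde{p}$, and lies on the strictly decreasing branch of $\pi_{r,s}(\cdot,\alpha)$.

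First I would prove the fee equivalence: for $\alpha\in[\hat\alpha,1]$, $\alpha\leq\alphasdagger\iff g(\psind(\alpha))\geq\pi_{r,r}(\prstar)$. Because $g$ is unimodal and $\psind(\cdot)$ is continuous and strictly increasing, the composite $\alpha\mapsto g(\psind(\alpha))$ is unimodal, so its superlevel set at height $\pi_{r,r}(\prstar)$ is a closed interval $[\alpha_1,\alpha_2]$. Since $g(\psind(\hat\alpha))>\pi_{r,r}(\prstar)$ strictly, this interval has nonempty interior, and $\alphasdagger$---the largest fee at which equality holds---is exactly its right endpoint $\alpha_2$; moreover $g(\psind(\alpha))\to 0$ as $\alpha\uparrow 1-c_s$, so no crossing occurs beyond $\alphasdagger$. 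Hence on $[\hat\alpha,1]$ membership in the superlevel set is equivalent to $\alpha\leq\alphasdagger$, which is the desired equivalence.

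Next I would prove the price equivalence: for $\alpha\in[\hat\alpha,1]$, $\psind\leq\pdagger\iff\pi_{r,s}(\psind,\alpha)\geq\pi_{r,r}(\prstar)$. If $\psind\geq\tilde{p}$, both $\psind$ and $\pdagger$ lie on the decreasing branch where $\pi_{r,s}(\cdot,\alpha)$ is injective with $\pi_{r,s}(\pdagger,\alpha)=\pi_{r,r}(\prstar)$, so $\psind\leq\pdagger$ is equivalent to $\pi_{r,s}(\psind,\alpha)\geq\pi_{r,r}(\prstar)$ directly. If instead $\psind<\tilde{p}<\pdagger$, then $\psind\leq\pdagger$ holds automatically, and I must check that the payoff inequality also holds so that the equivalence reads ``true $\iff$ true'': here $\psind(\hat\alpha)<\psind(\alpha)<\tilde{p}<p_g$, and since $g$ is increasing on $[0,p_g]$ I get $g(\psind(\alpha))\geq g(\psind(\hat\alpha))>\pi_{r,r}(\prstar)$. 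Chaining this with the fee equivalence yields $\alpha\leq\alphasdagger\iff\psind\leq\pdagger$ on $[\hat\alpha,1]$.

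I expect the main obstacle to be the subcase $\psind<\tilde{p}$ in the price equivalence: there the geometric ``decreasing-branch'' argument breaks down, and one must instead use the anchoring hypothesis $g(\psind(\hat\alpha))>\pi_{r,r}(\prstar)$ together with the monotonicity of $g$ below its peak and of $\psind(\cdot)$ in $\alpha$. The only other points needing care are the bookkeeping that justifies $\pdagger>\tilde{p}$ and the well-definedness of $\pdagger$ on the whole interval $[\hat\alpha,1]$, both of which follow from monotonicity of the peak value $\alpha\tilde{p}q(\tilde{p})$ in $\alpha$.
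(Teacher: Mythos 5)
Your proposal is correct and follows essentially the same route the paper takes: it is built on exactly the ingredients the paper assembles before declaring the corollary ``immediate by continuity'' --- the identity $\pi_{r,s}(\psind(\alpha),\alpha)=(\psind-c_s)q(\psind)$, the strict monotonicity of $\psind(\cdot)$ in $\alpha$, and the unimodality of the payoff functions from Corollary~\ref{cor:unimodality}. Your write-up simply makes rigorous what the paper leaves implicit, in particular the superlevel-set characterization of $\alphasdagger$ and the increasing-branch subcase $\psind<\tilde{p}$, which you correctly resolve via the anchoring hypothesis $\pi_{r,s}(\psind(\hat{\alpha}))>\pi_{r,r}(\prstar)$ and the monotonicity of $(p-c_s)q(p)$ below its peak.
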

When deriving Nash equilibria, we will find ourselves in scenarios like Corollary \ref{cor:alphasdagger-ordering} where we have an appropriate lower bound on $\alpha$, thus for all intents and purposes we will be able to think of the relation $\alpha\leq \alphasdagger \iff \psind\leq \pdagger$ as true.

\section{Detailed equilibria derivation}
\label{appendix:equilibria_derivation}

This section is dedicated to deriving the Nash equilibria of \eqref{eq:nash-equilibria-i} and \eqref{eq:nash-equilibria-ii}. For clarity of exposition, we outline conditions that any Nash equilibrium price configuration $(p_r, p_s)\in \Scal$ must satisfy. By definition of a Nash equilibrium, the following must hold:
\begin{enumerate}[(I)]
    \item $\sup_{p_r'<p_r}\pi_r(p_r', p_s) \leq \pi_r(p_r, p_s)$
    \item $\sup_{p_r'>p_r}\pi_r(p_r', p_s) \leq \pi_r(p_r, p_s)$
    \item $\sup_{p_s'<p_s}\pi_s(p_r, p_s') \leq \pi_s(p_r, p_s)$
    \item $\sup_{p_s'>p_s}\pi_s(p_r, p_s') \leq \pi_s(p_r, p_s)$
\end{enumerate}

\subsection{Shared-price equilibria $p_r = p = p_s$ in which the retailer fulfills demand}
\label{appendix:shared-price-retailer}

Condition IV always holds, as the independent seller achieves a payoff of zero whether they set some price $p' > p$ or they set price $p$ and allow the retailer to fulfill all demand.

Condition I requires the retailer would not prefer to set a price $p'<p$. Because $\pi_{r,r}$ is concave, it is increasing on $[0, \prstar]$, so condition I holds if and only if $p \leq \prstar$.

Condition II is equivalent to saying that the retailer would not prefer to increase their price, effectively allowing the independent seller to fulfill demand at price $p$ instead of the retailer fulfilling demand at price $p$, which by construction is equivalent to $p\geq \prind$.

Condition III posits that the independent seller would not prefer to undercut the retailer and fulfill demand themselves at some price $p'< p$. However, recall that in an equilibrium where the retailer fulfills demand, the independent seller achieves a payoff of $0$. Thus, we can rewrite condition III as:
\begin{align}
    \label{eq:a_sells_equilibria_condition_3}
    \sup_{p' \in [0, p)} \left[ \pi_{s, s} (p') \right] & \leq \pi_{s, r}(p) \\ \nonumber
    \sup_{p' \in [0, p)} \left[ \pi_{s, s} (p') \right] & \leq 0
\end{align}
Because $\pi_{s,s}$ is a concave unimodal function uniquely maximized at $\psstar$, we have that
\[\sup_{p'\in [0, p)}\left[ \pi_{s, s} (p') \right] =\begin{cases}
        \pi_{s, s} (p)       & p \leq \psstar \\
        \pi_{s, s} (\psstar) & p > \psstar
    \end{cases}\]
If $\psstar < p <  1$, then $0 < \pi_{s, s}(p) \leq \sup_{p' \in [0, p)} \left[ \pi_{s, s} (p') \right]$ and the condition never holds. On the other hand, if $p \leq \psstar$, then \eqref{eq:a_sells_equilibria_condition_3} reduces to
\begin{equation}
    \pi_{s, s} (p)                 \leq 0 =\pi_{s,r}(p)\iff p \leq \frac{c_s}{1 - \alpha} = \psind
\end{equation}
where we ignored the $1 \leq p$ because $p \in [0, 1]$.

Combining the conditions, we have shown that an equilibrium price in which the retailer fulfills demand must satisfy $p\in \left[\max\{0, \prind\}, \min\{1, \psind, \prstar, \psstar\}\right]$. However, because $c_s< c_r$, we have
\begin{equation}\psind = \frac{c_s}{1-\alpha} < \frac{c_r}{1-\alpha} = \prind\label{eq:psind-less-than-prind}\end{equation}
so the interval is always empty and there exist no equilibria in this case.

\subsection{Shared-price equilibria $p_r = p = p_s$ in which the independent seller fulfills demand}
\label{appendix:shared-price-is}

Condition II always holds, as the retailer does not fulfill any demand whether they set some price $p' > p$ or price $p$ and thus achieve the same payoff either way.

Condition III requires that the independent seller would not prefer to fulfill demand at $p'< p$. Because $\pi_{s,s}$ is a concave unimodal function maximized at $\psstar$, condition III is equivalent to $p \leq \psstar$.

Condition IV requires that the independent seller would not prefer that the retailer fulfill demand, in other words $p\geq \psind$.

Finally, condition I posits that the retailer would not prefer to undercut the independent seller and fulfill demand themselves at some price $p' < p$. Formally, we can rewrite condition I as
\[\sup_{p' \in [0, p)} \left[ \pi_{r, r} (p') \right] \leq \pi_{r, s}(p) \]
Because $\pi_{r,r}$ is a concave unimodal function, we have that
\begin{equation}
    \sup_{p' \in [0, p)}\left[ \pi_{r, r} (p') \right] = \begin{cases}
        \pi_{r,r}(p)       & p \leq \prstar \\
        \pi_{r,r}(\prstar) & p > \prstar
    \end{cases}\label{eq:maximizer-of-retailers-payoff}\end{equation}
\subsubsection{Equilibria where $p\leq \prstar$}
Consider the first case of \eqref{eq:maximizer-of-retailers-payoff} where $\sup_{p' \in [0, p)}\left[ \pi_{r, r} (p') \right] = \pi_{r,r}(p)$. Condition I then simply becomes $\pi_{r,r}(p)\leq \pi_{r,s}(p)$, which by construction is equivalent to $p\leq \prind$. Thus, we have found a continuum of equilibria $p\in \left[\max\{0, \psind\}, \min\{1, \prind,\prstar, \psstar\}\right]$, so long as the interval is nonempty. Note that $\psind \geq 0$ and $\prstar\leq 1$ always, so we can drop those.
\paragraph{Conditions under which the interval is nonempty.} We showed in \eqref{eq:psind-less-than-prind} that $\psind < \prind$ always, and $\psind \leq \prstar\iff \alpha\leq \alpharstar$. Thus the interval of prices is nonempty if and only if $\alpha \leq \alpharstar$.
\paragraph{Conditions under which $\prind$ is the upper bound.}
Recall that $\prind\leq \prstar\iff \alpha\leq \alphardagger$, and $\prind \leq \psstar\iff \alpha \leq \alphasstar$. Thus, we need $\alpha\leq \min\{\alphardagger, \alphasstar\}$ in order for $\prind$ to be the upper bound of the interval. Note also that $\alphasstar\leq \alphardagger\iff c_s \leq \csstar$.
\paragraph{Conditions under which $\psstar$ is the upper bound.}
We have that $\psstar \leq \prstar\iff \alpha\leq\alphaopt$ and $\psstar\leq \psind \iff\alpha\geq \alphasstar$, thus this case occurs if $\alpha\in[\alphasstar, \alphaopt]$. Also, recall that $\alphasstar\leq \alphaopt \iff c_s\leq \csstar$.

\paragraph{Conditions under which $\prstar$ is the upper bound.} Since we need both $\alpha>\alphaopt$ and $\alpha>\alphardagger$ in order for $\prstar$ to b the upper bound, this occurs whenever $\alpha\in \left[\max\left\{\alphaopt, \alphardagger \right\}, \alpharstar\right]$.

\paragraph{Conclusion and summary of equilibria.}
We have so far found the following equilibria:

\noindent\textbf{Case (i)}, $c_s \leq \csstar$:
\begin{equation}
    p\in
    \begin{cases}
        \left[\psind, \prind\right]  & \alpha
        \leq \alphasstar                                                     \\
        \left[\psind, \psstar\right] & \alphasstar\leq \alpha \leq \alphaopt \\
        \left[\psind, \prstar\right] & \alphaopt\leq \alpha \leq \alpharstar \\
    \end{cases}
    \label{eq:appendix-b2-case-i-part-1}\end{equation}

\noindent\textbf{Case (ii)}, $c_s \geq \csstar$:
\begin{equation}
    p\in
    \begin{cases}
        \left[\psind, \prind\right]  & \alpha
        \leq \alphardagger                                                       \\
        \left[\psind, \prstar\right] & \alphardagger\leq \alpha \leq \alpharstar \\
    \end{cases}
    \label{eq:appendix-b2-case-ii-part-1}\end{equation}
As we can see from \eqref{eq:appendix-b2-case-i-part-1} and \eqref{eq:appendix-b2-case-ii-part-1}, there only exists an equilibrium $(\psstar, s)$ when $c_s\leq \csstar$, which is indeed where the $\csstar$ notation comes from.
\subsubsection{Equilibria where $p\geq \prstar$}
\label{appendix:equilibria-p-geq-prstar}
Now, consider the second case of \eqref{eq:maximizer-of-retailers-payoff} where $\sup_{p' \in [0, p)}\left[ \pi_{r, r} (p') \right] = \pi_{r,r}(\prstar)$. The largest price for which $\pi_{r, s}(p)$ is equal to $\pi_{r, r}(\pastar)$ was defined in Section~\ref{sec:additional_notation} to be $p^\dagger$. Thus for an equilibrium in this case, we need that $p\in \left[\max\{0, \prstar, \psind\}, \min \{1, p^\dagger, \psstar\}\right]$. Note that $p^\dagger\leq 1$ and $\prstar, \psind \geq 0$ always, so we can drop those conditions.

Recall that $\psind \geq \prstar \iff \alpha \geq \alpharstar$. We won't explicitly simplify the condition $\psstar\leq p^\dagger$, but note in light of Lemma \ref{lemma:ptilde-ordering} we have
\[\psstar\leq \pdagger \iff \pi_{r,r}(\prstar)=\pi_{r,s}(\pdagger)\leq \pi_{r,s}(\psstar)\]

\paragraph{Interval is $[\psind, \min\left\{\psstar,p^\dagger\right\}]$.} Recall that $\psind\leq \psstar$ always, and $\alpha\in [\alpharstar, \alphasdagger] \implies \psind \leq p^\dagger$ per Corollary \ref{cor:alphasdagger-ordering}. Therefore, the interval is feasible as long as $\alpha\in\left[\alpharstar, \alphasdagger\right]$, which we showed above is always nonempty.

\paragraph{Interval is $[\prstar, \psstar]$.} Recall that $\prstar\leq \psstar \iff \alpha \geq \alphaopt$, so $[\prstar, \psstar]$ is nonempty whenever $\alpha \in \left[ \alphaopt, \alpharstar\right]$. Note we showed in \eqref{eq:alphaopt-alpharstar} that $\alphaopt\leq \alpharstar$ always.

\paragraph{Interval is $[\prstar, p^\dagger]$.} We have
$[\prstar, p^\dagger]$ is nonempty so long as $\alpha \in \left[\alphardagger, \alpharstar\right]$. Note we showed in \eqref{eq:alphardagger-alpharstar} that $\alphardagger\leq \alpharstar$ always.

\paragraph{Equilibria when $c_s \geq \csstar$.} We've so far shown that $p \in [\prstar, \psstar]$ or $p \in [\prstar, p^\dagger]$ could be an equilibrium if $\alpha \in \left[\min\left\{\alphaopt, \alphardagger\right\}, 1-c_s\right]$. In \eqref{eq:appendix-b2-case-i-part-1} and \eqref{eq:appendix-b2-case-ii-part-1}, we subdivided the equilibria into two cases, when $c_s \leq \csstar$ and $c_s \geq \csstar$. Though we have not explicitly simplified the condition $\psstar \leq p^\dagger$, it would be prudent to consider what equilibria exist when $\alpha$ is between $\alphardagger$ and $\alphaopt$.

Suppose $c_s \geq \csstar$ and $\alpha \in \left[\alphaopt, \alphardagger\right]$. Then, we claim this implies $p^\dagger < \psstar$. Suppose for contradiction $\psstar\leq \pdagger$. Then, since $\alpha<\alphardagger\leq\alphasstar$, we have $\prind\leq\psstar< \pdagger$, which implies
\[\pdagger>\prind \implies \pi_{r,r}(\pdagger)>\pi_{r,s}(\pdagger)=\pi_{r,r}(\prstar)\]
contradicting the optimality of $\prstar$. Now recall that earlier, we found that for all $\alpha\leq \alphardagger$, the equilibria are $p\in \left[\psind, \prind\right]$. However, we have that
\[\tilde{p}\leq p^\dagger\leq \prind \iff \pi_{r,s}(p^\dagger)\geq \pi_{r,s}(\prind)\iff \pi_{r,r}(\prstar)\geq \pi_{r,r}(\prind)\]
which always holds, thus we have found no new equilibria in this region of parameter space since $[\psind, \prind]\supseteq [\prstar, \pdagger]$.

\paragraph{Equilibria when $c_s \leq \csstar$.} Now suppose $c_s \leq \csstar$ and $\alpha\in \left[\alphardagger, \alphaopt\right]$, we claim this implies $\psstar\leq \pdagger$. Suppose for contradiction $\pdagger<\psstar$. Since $\alpha\geq\alphardagger$, we have $\prstar\leq\pdagger<\psstar$, but this contradicts the fact that $\alpha\leq \alphaopt \iff \psstar\leq \prstar$. Thus, in this region of parameter space, we have found equilibria $p\in [\prstar, \psstar]$, but note that $\alpha \leq \alphaopt \leq \alpharstar$ which implies $\psind\leq \prstar$. Therefore we have found no new equilibria once again in this region once again because before the equilibria we found were $p\in [\psind, \prstar]\supseteq [\psstar,\prstar]$.

\paragraph{Conclusion and summary of equilibria.} Combining with the equilibria from the first case from \eqref{eq:appendix-b2-case-i-part-1} and \eqref{eq:appendix-b2-case-ii-part-1}, we have the following shared-price equilibria where the independent seller fulfills --

\noindent\textbf{Case (i)}, $c_s \leq \csstar$:
\begin{equation}
    p\in
    \begin{cases}
        \left[\psind, \prind\right]                               & \alpha
        \leq \alphasstar                                                                                    \\
        \left[\psind, \psstar\right]                              & \alphasstar\leq \alpha \leq \alphaopt   \\
        \left[\psind, \min\left\{\psstar,p^\dagger\right\}\right] & \alphaopt\leq \alpha \leq \alphasdagger \\
    \end{cases}
    \label{eq:nash-equilibria-case-i}\end{equation}

\noindent\textbf{Case (ii)}, $c_s \geq \csstar$:
\begin{equation}
    p\in
    \begin{cases}
        \left[\psind, \prind\right]                               & \alpha
        \leq \alphardagger                                                                                      \\
        \left[\psind, \min\left\{\psstar,p^\dagger\right\}\right] & \alphardagger\leq \alpha \leq \alphasdagger \\
    \end{cases}
    \label{eq:nash-equilibria-case-ii}\end{equation}

Note that $\alpha\geq \alphardagger$ whenever $\pdagger$ is a possible equilibrium, which is sufficient for $\pdagger$ to be well-defined per Appendix \ref{appendix:key-prices-relative-ordering}.

\subsection{Equilibria where $p_r < p_s$}
\label{appendix:unequal-price-retailer}
Condition IV is always satisfied because for all $\varepsilon > 0$, $p_s + \varepsilon > p_s > p_r$, the independent seller achieves a payoff of $0$ both at price $p_s$ and price $p_s + \varepsilon$. Together, conditions I and II imply that the retailer must set price $\prstar$ so as to maximize their payoff function. Otherwise, if they are not at the optimal price, they would be able to increase their payoff by changing their price.

Furthermore, condition II requires that the retailer would not achieve a higher payoff by deviating to some price $p_r > p_s$ so as to let the independent seller fulfill demand in equilibrium. Formally, we need that
\[ \pi_{r,s}(p_s)\leq \pi_{r,r}(\prstar) \]
However, by construction this is equivalent to $p_s\geq p^\dagger$ (note that the smaller root is less than $\tilde{p}$ and therefore $\prstar$ by Lemma \ref{lemma:ptilde-ordering}). Note that we also must have that $p_s\geq \prstar$, and we have $\prstar \leq p^\dagger \iff \alpha \geq \alphardagger$.

We move on to condition III now. Recall that for any $p_s > p_r$ we have that $\pi_s(p_r, p_s) = 0$. Thus, condition III simplifies to
\[\max_{p_s'\in [0, \prstar]}\left[\pi_s(\prstar, p_s')\right]\leq 0\]
The maximum of the independent seller's payoff on the interval $[0, \prstar]$ is attained at
\[\arg\max_{p_s'\in [0, \prstar]}\left[\pi_s(\prstar, p_s')\right]  = \begin{cases}
        \prstar & \prstar \leq \psstar \\
        \psstar & \prstar > \psstar
    \end{cases}\]
because $\pi_s(\cdot, p_s')$ is maximized at $\psstar$. Consider the second case when $\prstar > \psstar>\psind$. Then, there always exists some price in $(\psind, \prstar)$ where the independent seller can make positive payoff so they would prefer to undercut the independent seller. Thus there are no equilibria where $\prstar > \psstar$.

On the other hand, $\arg\max_{p_s'\in [0, \prstar]}\left[\pi_s(\prstar, p_s')\right] = \prstar$ if $\prstar \leq \psstar$, which is equivalent to $\alpha \geq \alphaopt$. In order for this to be an equilibrium, we need further that $\psind \geq \prstar\iff \alpha\geq \alpharstar$ so that the independent seller does not prefer to undercut the retailer.

We showed in \eqref{eq:alphardagger-alpharstar} that $\alphardagger< \alpharstar$, so $\alpha \geq \alpharstar$ is sufficient to imply that $\prstar < p^\dagger$. Additionally, recall from \eqref{eq:alphaopt-alpharstar} that $\alphaopt\leq \alpharstar$. In summary, we have a continuum of equilibria $(\prstar, p_s \geq  p^\dagger)$ whenever $\alpha \geq\alpharstar$.

\subsection{Equilibria where $p_s < p_r$}
\label{appendix:unequal-price-is}

In this case, condition II is always satisfied; the retailer does not fulfill any demand either way as, for all $\varepsilon > 0$, $p_r + \varepsilon > p_r > p_s$. As in the previous case, conditions III and IV together imply that the independent seller must set price $\psstar$. Condition IV also requires that the independent seller does not prefer to let the retailer fulfill demand in equilibrium, in which case they would earn a payoff of $0$. To satisfy this, we need only that the independent seller sets a price $p_s \geq c_s$, which always holds for price $\psstar$.

Thus, all that remans to check is condition I. Simplifying the condition, we need that
\[\max_{p_r'\in [0,\psstar]}\left[\pi_{r,r}(p_r')\right]\leq \pi_{r,s}(\psstar)\]
Once again, there are two cases for the choice of $p_r'$ that maximizes the retailer's payoff function on the interval $[0, \psstar]$ --
\[\arg\max_{p_r'\in [0, \psstar]}\left[\pi_r(p_r', \psstar)\right]  = \begin{cases}
        \psstar & \psstar \leq \prstar \\
        \prstar & \psstar > \prstar
    \end{cases}\]
Consider the second case where $\arg\max_{p_r'\in [0, \psstar]}\left[\pi_r(p_r', \psstar)\right]  =\prstar $. Recall that $\psstar > \prstar \iff \alpha > \alphaopt$. Recalling Lemma \ref{lemma:ptilde-ordering}, condition I then simplifies to
\begin{align}
    \pi_{r,r}(\prstar)
     & \leq \pi_{r,s}(\psstar)\nonumber \\
    \pi_{r,s}(p^\dagger)
     & \leq \pi_{r,s}(\psstar)\nonumber \\
    \psstar
     & \leq p^\dagger\nonumber
\end{align}
Note that $\alpha >\alphasdagger\implies p^\dagger< \psind< \psstar$ by construction of $\alphasdagger$, thus $\alpha \leq \alphasdagger$ is necessary for condition I to hold. Thus, we have a continuum of equilibria $(p_r > \psstar, \psstar)$ whenever $\alpha \in \left[\alphaopt, \alphasdagger\right]$ and $\psstar\leq p^\dagger$ holds. Additionally, recall that if $c_s \geq \csstar$, we showed above that $\psstar\leq p^\dagger$ can only hold when $\alpha \geq\alphardagger$.

Now consider the first case where $\arg\max_{p_r'\in [0, \psstar]}\left[\pi_r(p_r', \psstar)\right]  =\psstar $. Recall that by construction, $\psstar \leq \prstar \iff \alpha \leq \alphaopt$. On the other hand, we have that condition I becomes
\[\pi_{r,r}(\psstar)\leq \pi_{r,s}(\psstar) \iff \psstar\leq \prind \iff \alpha\geq \alphasstar\]
Thus, we have a continuum of equilibria $(p_r > \psstar, \psstar)$ whenever $\alpha \in \left[\alphasstar, \alphaopt\right]$. Note that it's not always the case that $\alphasstar \leq \alphaopt$, but when this holds, such equilibria exist.

\section{Refining and interpreting equilibria - proofs}
\label{appendix:equilibria-refinements}
\subsection{Admissibility}
\label{appendix:admissibility}
For the independent seller, playing $p_s< \psind$ is weakly dominated by playing a price $p_s = 1$, because their payoffs are equal if the retailer sets a price $p_r\leq p_s$ but for $p_r\in (p_s, 1)$ we have
\[\pi_{s,s}(p_s) = \pi_s(p_r, p_s) \leq \pi_s(p_r, 1) = 0\]
because they achieve negative payoff when fulfilling demand below their breakeven price. Playing a price $p_s > \psstar$ is dominated by playing $\psstar$ because, for all $p_r > p_s$,
\[\pi_{s,s}(p_s) = \pi_{s}(p_r, p_s) \leq \pi_{s}(p_r, \psstar) = \pi_{s,s}(\psstar) \]
where the inequality holds because $p_s > \psstar$. For any price $p_r \leq p_s$, we have
\[0 = \pi_{s,r}(p_s) = \pi_s(p_r, p_s)\leq \pi_s(p_r, \psstar)\]
which implies weak dominance as desired. However, consider for instance two prices $\psind<p_{s_1}< p_{s_2}<\psstar$, we claim neither dominates the other. If $p_r>p_{s_2}$,
\[\pi_{s,s}(p_{s_1}) = \pi_s(p_r, p_{s_1}) < \pi_s(p_r, p_{s_2}) = \pi_{s,s}(p_{s_2})\]
because $p_{s_1}< p_{s_2}<\psstar$. On the other hand if $p_r \in (p_{s_1}, p_{s_2})$,
\[\pi_{s,s}(p_{s_1}) = \pi_s(p_r, p_{s_1}) > \pi_s(p_r, p_{s_2}) = 0\]
Thus, in any region of parameter space, the only admissible strategies for the independent seller are $p_s\in (\psind, \psstar)$, except for a technical edge case when $\alpha\geq 1-c_s \implies \psind\geq 1$ which we will address later.

A similar result holds for the retailer, except it's not always the case that $\prind\leq\prstar$ (which occurs precisely when $\alpha \leq \alphardagger$ by construction). We claim that $p_r = \min\left\{\prind, \prstar\right\}$ dominates all $p_r' < p_r$. Indeed, for $p_s \leq p_r'$, the retailer's payoff is the same. For $p_s \geq p_r$, the retailer's payoff is
\[\pi_{r,r}(p_r') = \pi_r(p_r', p_s) \leq \pi_r(p_r, p_s) = \pi_{r,r}(p_r)\]
where the inequality holds because $p_r'< p_r\leq \prstar$. For $p_s \in (p_r', p_r)$, we have
\[\pi_{r,r}(p_r') = \pi_r(p_r', p_s) \leq \pi_r(p_r, p_s) = \pi_{r,s}(p_s)\]
where the inequality holds because $p_r' < p_s < p_r\leq \prind$.

Similarly, $p_r = \max\left\{\prind, \prstar\right\}$ dominates all $p_r' > p_r$. Indeed, for $p_s\leq  p_r$, the retailer's payoff is the same. For $p_s \geq p_r'$, the retailer's payoff is
\[\pi_{r,r}(p_r') = \pi_r(p_r', p_s) \leq \pi_r(p_r, p_s) = \pi_{r,r}(p_r)\]
where the inequality holds because $p_r'> p_r\geq \prstar$. For $p_s \in (p_r, p_r')$, we have
\[\pi_{r,s}(p_s) = \pi_r(p_r', p_s) \leq \pi_r(p_r, p_s) = \pi_{r,r}(p_r)\]
where the inequality holds because $p_r'> p_s >  p_r\geq \prind$. However, we can show fairly easily that for any $\min\left\{\prind, \prstar\right\}<p_{r_1}< p_{r_2}<\max\left\{\prind, \prstar\right\}$, neither weakly dominates the other. Suppose $\prind<\prstar$, then for $p_s > p_{r_2}$ we have
\[\pi_r(p_{r_1}, p_s) = \pi_{r,r}(p_{r_1}) \leq \pi_r(p_{r_2}, p_s) = \pi_{r,r}(p_{r_2})\]
because $p_{r_1}< p_{r_2}<\prstar$. While, for, $p_s\in (p_{r_1}, p_{r_2})$, we have
\[\pi_{r,s}(p_{r_2}) = \pi_r(p_{r_2}, p_s) \leq\pi_r(p_{r_1}, p_s) =  \pi_{r,r}(p_{r_1}) \]
because $\prind< p_{r_1}< p_{r_2}$. Reversing the directions of the inequalities proves that there are no dominant strategies for any $\prstar<p_{r_1}<p_{r_2}<\prind$.

With this in mind, we will step through each of the Nash equilibria that we found in \eqref{eq:nash-equilibria-case-i} and \eqref{eq:nash-equilibria-case-ii} and remove those strategies that are weakly dominated.

First consider an equilibrium in the regime where $\alpha\leq \min\left\{\alphasstar, \alphardagger\right\}$. Here, the Nash equilibrium is $p\in [\psind, \prind]$. Because the retailer's admissible strategies fall in the interval $p_r\in [\prind, \prstar]$, the only admissible equilibrium price is $p = \prind$.

Now, suppose that $\alpha \in \left[\alphasstar, \alphaopt\right]$, in which case the Nash equilibria are $(p_r \geq \psstar, p_s = \psstar)\cup (p\in [\psind, \psstar])$. We showed that in this region of parameter space, we have $\psind\leq \psstar\leq \prstar, \prind$. Notably, this implies that the entire continuum of shared-price equilibria $p\in [\psind, \psstar]$ are inadmissible because it is entirely disjoint from the admissible region for the retailer, $p_r \in \left[\min\left\{\prind, \prstar\right\}, \max\left\{\prind, \prstar\right\}\right]$. Thus, the admissible equilibria are $(p_r \in \left[\prind, \prstar\right], p_s = \psstar)$ if $\alpha \in \left[\alphasstar, \alphardagger\right]$ and $(p_r \in \left[\prstar, \prind\right], p_s = \psstar)$ if $\alpha \in \left[\alphardagger, \alphaopt\right]$.

Now, suppose $\psstar\leq p^\dagger$ and $\alpha \in \left[\max\left\{\alphardagger, \alphaopt\right\},\alpharstar\right]$. As before, the Nash equlibria are $(p_r\geq \psstar, p_s = \psstar)\cup (p \in\left[\psind,\psstar\right])$. However, this time, we have $\psind\leq\prstar\leq \psstar\leq p^\dagger\leq \prind$. Here, the retailer's admissible prices are $p_r\in [\prstar, \prind]$, so the remaining admissible equilibria are $(p_r\in [\psstar, \prind], p_s = \psstar)\cup (p \in\left[\prstar,\psstar\right])$.

The arguments of the previous paragraph did not depend on the relative ordering of $p^\dagger$ and $\psstar$, and we know that $\prind \geq p^\dagger\geq \prstar$ because $\alpha \geq \alphardagger$. Thus the admissible equilibria when $\alpha \in \left[\max\left\{\alphardagger, \alphaopt\right\}, \alpharstar\right]$ and $p^\dagger \leq \psstar$ are simply the shared prices $p \in\left[\prstar,p^\dagger\right]$.

Consider the case when $\alpha \in \left[\alpharstar, \alphasdagger\right]$, where the equilibria are $(p_r = \prstar, p_s\geq p^\dagger) \cup (p \in [\psind, \min\{p^\dagger, \psstar\}])$. In this case, $\prstar\leq \psind \leq \psstar, p^\dagger\leq \prind$ always. If $\psstar\leq p^\dagger$ also, then the only admissible equilibria are $p\in [\psind, \psstar]$ as prices $p_s \geq \psstar$ are inadmissible. On the other hand, if $p^\dagger\leq \psstar$, the admissible equilibria are $(p_r = \prstar, p_s\in [p^\dagger, \psstar])\cup (p\in [\psind, p^\dagger])$.

Now suppose that $\alpha \in [\alphasdagger, 1-c_s]$, here the only Nash equilibrium is $(p_r = \prstar, p_s\geq p^\dagger)$. In this region of parameter space, we always have that $p^\dagger\leq \psind\leq \psstar\leq 1$ by construction of $\alphasdagger$, so the admissible Nash equilibria are $(p_r = \prstar, p_s\in [\psind, \psstar])$.

However, when $\alpha \geq 1-c_s \geq \alphasdagger$, we have that $\psind \geq 1$ and the Nash equilibrium is again given by $(p_r = \prstar, p_s\geq p^\dagger)$. In this limit, setting $p_s = 1$ weakly dominates setting any $p_s' < 1$ because $\psind \geq 1 > p_s'$ which implies that the independent seller would achieve negative payoff for each unit sold at price $p_s'$ and would prefer to achieve 0 payoff by setting $p_s = 1$. Thus, the only admissible equilibrium when $\alpha \geq 1-c_s$ is $(p_r = \prstar, p_s=1)$. Note that this edge case is technical and unimportant, as no matter what price the independent seller sets, the result remains the same -- when $\alpha$ is high, there is an equilibrium where the retailer fulfills demand.

Summarizing, we have narrowed our focus to the following admissible equilibria:

\noindent\textbf{Shared equilibria}, irrespective of ordering of $c_s$ and $\csstar$:
\begin{equation}
    (p_r, p_s) \in \begin{cases}
        (p_s, [\psind, p^\dagger]) \cup (\prstar,[p^\dagger, \psstar])\footnotemark & \left[\alpharstar\leq \alpha\leq \alphasdagger\right] \land [p^\dagger\leq \psstar] \\
        (p_s, [\psind, \psstar])                                                    & \left[\alpharstar\leq \alpha\leq \alphasdagger\right]\land [\psstar\leq p^\dagger]  \\
        (\prstar, [\psind, \psstar])                                                & \alphasdagger\leq \alpha \leq 1-c_s                                                 \\
        (\prstar, 1)                                                                & 1-c_s\leq \alpha                                                                    \\
    \end{cases}
    \label{eq:admissible-equilibria-shared}
\end{equation}
\footnotetext{This is the only admissible equilibrium that is relatively Pareto suboptimal. See Appendix \ref{appendix:pareto_optimality_proof} for further explanation.\label{footnote:pareto-suboptimal}}

\noindent\textbf{Case (i)}, $c_s \leq \csstar$:
\begin{equation}
    (p_r, p_s)\in \begin{cases}
        (\prind, \prind)                                                     & \alpha
        \leq 1 - 2c_r +c_s                                                                                                                                                 \\
        ([\prind, \prstar], \psstar)                                         & \alphasstar\leq \alpha \leq \alphardagger                                                   \\
        ([\prstar, \prind], \psstar)                                         & \alphardagger\leq \alpha \leq \alphaopt                                                     \\
        (p_s,\left[\prstar,\psstar\right])\cup ( [\psstar, \prind], \psstar) & \left[\alphaopt\leq \alpha \leq \alpharstar\right]\land \left[\psstar \leq p^\dagger\right] \\
        (p_s, \left[\prstar,p^\dagger\right])                                & \left[\alphaopt\leq \alpha \leq \alpharstar\right]\land \left[p^\dagger\leq \psstar\right]  \\
    \end{cases}
    \label{eq:admissible-equilibria-i}
\end{equation}

\noindent\textbf{Case (ii)}, $c_s \geq \csstar$:
\begin{equation}
    (p_r, p_s)\in\begin{cases}
        (\prind, \prind)                                                     & \alpha
        \leq \alphardagger                                                                                                                                                     \\
        (p_s,\left[\prstar,\psstar\right])\cup ( [\psstar, \prind], \psstar) & \left[\alphardagger\leq \alpha \leq \alpharstar\right]\land \left[\psstar \leq p^\dagger\right] \\
        (p_s, \left[\prstar,p^\dagger\right])                                & \left[\alphardagger\leq \alpha \leq\alpharstar\right]\land \left[p^\dagger\leq \psstar\right]   \\
    \end{cases}
    \label{eq:admissible-equilibria-ii}
\end{equation}

\subsection{Relative Pareto optimality}
\label{appendix:pareto_optimality_proof}

Consider the equilibria in \eqref{eq:admissible-equilibria-shared}, \eqref{eq:admissible-equilibria-i}, and \eqref{eq:admissible-equilibria-ii}, we will remove any that are Pareto suboptimal relative to other admissible Nash equilibria. There is one Pareto-suboptimal admissible equilibrium (indicated in the first case of \eqref{eq:admissible-equilibria-shared} by Footnote \ref{footnote:pareto-suboptimal}), when $\alpharstar\leq \alpha \leq \alphasdagger$ and $p^\dagger\leq \psstar$, in which case the retailer fulfilling demand at $\prstar$ yields them the same payoff as the independent seller fulfilling demand at $p^\dagger$. However, the latter earns the independent seller positive payoff while the former earns the independent seller a payoff of $0$. Thus, the equilibrium $(p_r = \prstar, p_s\in [p^\dagger, \psstar])$ is Pareto suboptimal relative to the shared-price equilibrium $p = p^\dagger$.

When $\alpha \leq \max\left\{\alphardagger, \alphaopt\right\}$ or $\alpha \geq 1-c_s$, there is only one admissible equilibrium. When $\alphasdagger\leq \alpha \leq 1-c_s$, despite there being a continuum of equilibria in some of these cases, they all result in the same outcome with the retailer fulfilling demand at $\prstar$. Similarly, when $\alphasstar\leq \alpha\leq \alphaopt$, all of the equilibria result in the independent seller fulfilling demand at $\psstar$. In the remaining cases, all equilibria are Pareto optimal, because the retailer prefers to push the price lower towards $\prstar$ and/or $\tilde{p}$ (their optimal price when the independent seller fulfills demand) and the independent seller prefers to push the price higher towards $\psstar$.

\section{Optimizing the referral fee - proofs}
\subsection{Nash equilibrium refinements}
\label{appendix:sequential_game_refinements}
First, we justify the claims made in Section \ref{sec:alpha-game-setup} about relative Pareto optimality and admissibility of equilibrium strategies for the sequential game.

Suppose there are two Nash equilibria of the sequential game $(\alpha_1, p_r, p_s)$ and $(\alpha_2, p_r', p_s')$ with $\alpha_1 < \alpha_2$. Because these are both equilibria, they must have the same payoff to the retailer. However, because the independent seller's payoff is monotonically decreasing in $\alpha$, $(\alpha_2, p_r', p_s')$ is less Pareto optimal than $(\alpha_1, p_r, p_s)$. Thus, Pareto optimality yields us the natural refinement that the retailer will choose the minimum $\alpha$ necessary to attain their maximum possible equilibrium payoff.

Let $\alpha_1 < \alpha_2$, we will produce a strategy for the independent seller in which the retailer achieves a higher payoff with $\alpha_1$ and another strategy in which the retailer achieves a higher payoff with $\alpha_2$. Consider the strategy: ``choose price $p_{s_1}$ if $\alpha \leq \frac{\alpha_1 + \alpha_2}{2}$, otherwise choose price $p_{s_2}$'' where $p_{s_1}, p_{s_2}$ are chosen such that $\pi_{r,s}(p_{s_1}, \alpha_1)> \pi_{r,s}(p_{s_2}, \alpha_2)$. Switching the strategy to play $p_{s_2}$ when $\alpha \leq \frac{\alpha_1 + \alpha_2}{2}$ and $p_{s_1}$ otherwise concludes the proof that both $\alpha_1, \alpha_2$ are admissible.

\subsection{Investigating equilibrium strategy profiles}
\label{appendix:equilibrium-strategy}
Despite the preponderance of possible equilibrium strategy profiles, examining \eqref{eq:body-effective-equilibria-i} and \eqref{eq:body-effective-equilibria-ii} shows there are only a few possible admissible, Pareto optimal outcomes in terms of who fulfills demand and what price they set. The only region of parameter space in which $\rho$ nontrivially affects the equilibrium outcomes is when $\alpha \in \left[\max\left\{\alphardagger, \alphaopt\right\}, \alphasdagger\right]$, because in this case there are a continuum of prices $p \in \left[\max\left\{\prstar, \psind\right\},\min\left\{\psstar, p^\dagger\right\}\right]$ at which the independent seller could fulfill demand in equilibrium. Intuitively, this means that though $\rho$ could be very complicated in general, this is the only region in which this generality actually has an effect on the payoffs and the real-world outcomes.

We'll denote as $\pireq(\rho(\alpha), c_r, c_s)$ the equilibrium payoff function for the retailer with strategy $\rho$ and $\piseq$ the analogous function for the independent seller. With this notation, we then have $\alphastar \equiv \min_\alpha[\arg\max_\alpha \pireq(\rho(\alpha), c_r, c_s)]$, and our eventual goal is to compute this $\alphastar$. Even with $c_r, c_s, \rho$ held constant these are still complicated piecewise functions of $\alpha$, so as a first step we will prove continuity of the equilibrium payoffs in $\alpha$.

\begin{proposition}
    For all $\rho$ that are continuous on $\alpha\in \left(\max\left\{\alphardagger, \alphaopt\right\}, \alphasdagger\right)$,\footnote{Throughout this paper we consider only pure strategies. However, consider for a moment mixed strategies $\mu_\alpha(p)$ such that for each $\alpha$, $\mu_\alpha$ is a valid density over equilibrium prices. Then, the analysis easily generalizes if we define $\mathbb{E_{\mu_\alpha}}[p]\equiv \rho(\alpha)$.} $\pireq$ and $\piseq$ are continuous in $\alpha$.
    \label{thm:continuity-equilibrium-payoff}
\end{proposition}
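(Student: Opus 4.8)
The plan is to exploit the explicit description of the admissible, Pareto-optimal equilibrium outcomes in \eqref{eq:body-effective-equilibria-i} and \eqref{eq:body-effective-equilibria-ii}, which present $\pireq$ and $\piseq$ as piecewise functions of $\alpha$: on each region the equilibrium payoff is obtained by substituting the prescribed equilibrium price into $\pi_{r,r}$, $\pi_{r,s}$, or $\pi_{s,s}$. Continuity then reduces to two claims: (i) each individual piece is continuous in $\alpha$, and (ii) adjacent pieces agree at the threshold fees $\alphasstar$, $\alphaopt$, $\alphardagger$, $\alphasdagger$ that separate the regions. I would prove both claims region by region, treating Case (i) ($c_s\leq\csstar$) and Case (ii) ($c_s\geq\csstar$) in parallel, the only structural difference being whether the seller-optimal-price region survives. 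Regions that happen to be empty for a given cost configuration (e.g. when $\alphasstar<0$) simply drop the corresponding boundary and cause no difficulty.

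For claim (i), I would first record that the key prices are continuous in $\alpha$: $\prind=c_r/(1-\alpha)$ and $\psind=c_s/(1-\alpha)$ are manifestly continuous, $\prstar$ is independent of $\alpha$, and $\psstar(\alpha)$, $\pdagger(\alpha)$ are continuous by the unimodality of Corollary~\ref{cor:unimodality}: each is the unique solution of a first-order condition (respectively, the unique root past $\tilde p$ of $\pi_{r,s}(p,\alpha)=\pi_{r,r}(\prstar)$) whose two sides are strictly monotone and jointly continuous in $(p,\alpha)$, so the implicit function theorem (or Berge's maximum theorem) yields continuity of the maximizer/root. Since $\pi_{r,r}$, $\pi_{r,s}$, $\pi_{s,s}$ are jointly continuous in $(p,\alpha)$ (polynomial in $\alpha$, continuous in $p$ through $q$), each piece of $\pireq$ and $\piseq$ is a composition of continuous maps, hence continuous. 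On the continuum region $\alpha\in\left(\max\{\alphardagger,\alphaopt\},\alphasdagger\right)$ the equilibrium price is $\rho(\alpha)$, so the payoffs are $\pi_{r,s}(\rho(\alpha),\alpha)$ and $\pi_{s,s}(\rho(\alpha),\alpha)$, continuous by the assumed continuity of $\rho$ on that interval.

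Claim (ii) is where the real work lies, and it hinges on the fact that each threshold fee is defined precisely so that the relevant prices coincide there: at $\alpha=\alphasstar$ we have $\prind=\psstar$ (definition \eqref{eq:alphasstar-introduction}); at $\alpha=\alphaopt$, $\psstar=\prstar$; at $\alpha=\alphardagger$, $\prind=\prstar=\pdagger$ (Lemma~\ref{lemma:prind-prstar-pdagger-ordering}); and at $\alpha=\alphasdagger$, $\psind=\pdagger$ with $\pi_{r,s}(\pdagger)=\pi_{r,r}(\prstar)$ and $\pi_{s,s}(\psind)=0$ (Corollary~\ref{cor:alphasdagger-ordering}). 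Substituting each coincidence into the two adjacent payoff expressions makes them agree; for instance at $\alphasdagger$ the continuum value $\pi_{r,s}(\pdagger)$ equals the next region's value $\pi_{r,r}(\prstar)$, while the seller's payoff matches $\pi_{s,s}(\pdagger)=\pi_{s,s}(\psind)=0$. The delicate point is matching at the two endpoints of the continuum region, where $\rho$ is only assumed continuous on the open interval. Here I would show that the interval $\left[\max\{\prstar,\psind\},\min\{\psstar,\pdagger\}\right]$ collapses to the single point $\prstar$ at the lower endpoint and to $\pdagger=\psind$ at the upper endpoint — each endpoint coincidence above drives both interval bounds to the same value. Since the bounds are continuous in $\alpha$ and $\rho(\alpha)$ lies between them, a squeeze argument forces $\rho(\alpha)$ to the common limit independently of its behavior inside, and joint continuity of the payoffs then produces the matching one-sided limit.

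The main obstacle is precisely this endpoint matching for the continuum region: one must verify, using the fee orderings from Appendix~\ref{appendix:relative-ordering-fees} (e.g. $\alphardagger\leq\alpharstar$, $\alphaopt\leq\alpharstar$, and the $\csstar$ dichotomy controlling the ordering of $\alphardagger$ and $\alphaopt$), that at each endpoint \emph{both} bounds of the price interval genuinely converge to one point, so that the squeeze applies and continuity holds uniformly over all continuous $\rho$. Once the interval collapse is established, the remaining boundary checks are routine substitutions of the price coincidences into the payoff formulas.
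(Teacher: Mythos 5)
Your proposal is correct and follows essentially the same route as the paper's proof: per-piece continuity of $\pireq,\piseq$ followed by boundary matching at the threshold fees, with the continuum endpoints at $\max\{\alphardagger,\alphaopt\}$ and $\alphasdagger$ handled by a squeeze argument exploiting the collapse of the interval $\left[\max\{\prstar,\psind\},\min\{\psstar,\pdagger\}\right]$ to a single point. Your extra care in justifying continuity of $\psstar(\alpha)$ and $\pdagger(\alpha)$ via the first-order conditions and unimodality is a sound refinement of a step the paper leaves implicit, but it does not change the structure of the argument.
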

\begin{proof}
    Examine the equilibrium prices in \eqref{eq:body-effective-equilibria-i} and \eqref{eq:body-effective-equilibria-ii}. Within each of the pieces of the function, $\pireq, \piseq$ are constant as there is only one price at which demand will be fulfilled in equilibrium, except for when $\alpha \in\left(\max\left\{\alphardagger, \alphaopt\right\}, \alphasdagger\right)$. In this interval, we have
    \[\pireq(\alpha, c_r, c_s, \rho) = \pi_{r,s}(p_s = \rho(\alpha); c_r, c_s),\qquad\piseq(\alpha, c_r, c_s, \rho) = \pi_{s,s}(p_s = \rho(\alpha); c_r, c_s)\]
    because $\rho$ is continuous in $\alpha$ on this interval by assumption and $\pi_{r,s}, \pi_{s,s}$ are continuous functions of the price $p_s$, this is a composition of continuous functions which is continuous.

    Thus, all that remains to check is that at each of the transitions between pieces, the lower limit is the same as the upper limit of $\pireq$ and $\piseq$. We know by construction that $\prind = \psstar$ when $\alpha = \alphasstar$. However, the remaining transitions are not quite as clear, as they depend on the value of $\rho$ when $\alpha = \max\left\{\alphardagger, \alphaopt\right\}$ and $\alpha = \alphasdagger$. We will handle each of these cases in turn.

    Now suppose that $c_s \geq \csstar$. When $\alpha = \alphardagger$,$\prstar = p^\dagger$ by construction (and $\prstar < p^\dagger$ for $\alpha > \alphardagger$). Because $\rho$ yields an admissible, Pareto optimal equilibrium price configuration for each $\alpha$ and $\pi_{r,s}$ is continous, the squeeze theorem tells us that
    \begin{equation}\pi_{r,s}(\prstar) = \lim_{\alpha \downarrow \alphardagger} \pi_{r,s}(\prstar) \leq\lim_{\alpha \downarrow \alphardagger} \pireq(\rho(\alpha)) \leq \lim_{\alpha \downarrow \alphardagger} \pi_{r,s}(p^\dagger) = \pi_{r,s}(\prstar)\label{eq:squeeze-theorem-application}\end{equation}
    Furthermore, recall that at $\alpha = \alphardagger$, $\prind = \prstar$, implying continuity because
    \begin{equation}
        \lim_{\alpha \uparrow\alphardagger} \pireq(\rho(\alpha)) = \lim_{\alpha \uparrow \alphardagger} \pi_{r,s}(\prind) = \pi_{r,s}(\prind) = \pi_{r,s}(\prstar) = \lim_{\alpha \downarrow \alphardagger} \pireq(\rho(\alpha))
    \end{equation}
    A similar application of the squeeze theorem shows continuity when $c_s \leq \csstar$ as $\alpha \to \alphaopt$, because by construction $\psstar \leq \prstar \iff \alpha \leq \alphaopt$.

    By construction, $\alphasdagger$ is the referral fee at which $\psind = p^\dagger$. Because $p^\dagger$ was defined satisfy $\pi_{r,s}(p^\dagger) = \pi_{r,r}(\prstar)$, and above $\alphasdagger$ the equlibrium outcome is the retailer fulfilling demand at $\prstar$, we can apply the squeeze theorem a final time similarly to show continuity as $\alpha \to \alphasdagger$.
\end{proof}
\begin{corollary}
    For $\alpha\in \left[\max\left\{\alphardagger, \alphaopt\right\}, \alphasdagger\right]$, let $\underline{\rho}(\alpha) \equiv \max\left\{\prstar, \psind\right\}$ and $\overline{\rho}(\alpha) \equiv \min\left\{\psstar, p^\dagger\right\}$. Then, $\pireq(\overline{\rho}(\alpha))\leq \pireq(\rho(\alpha)) \leq \pireq(\underline{\rho}(\alpha))$, and $\piseq(\underline{\rho}(\alpha))\leq \piseq(\rho(\alpha)) \leq \piseq(\overline{\rho}(\alpha))$.
    \label{thm:bounded-equilibrium-payoffs}
\end{corollary}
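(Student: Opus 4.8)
The plan is to reduce the corollary to two one-dimensional monotonicity facts about the payoff functions $\pi_{r,s}$ and $\pi_{s,s}$, one for the retailer and one for the seller. The first step is to pin down the form of the equilibrium payoffs on the relevant region. By the equilibrium-outcome characterization \eqref{eq:body-effective-equilibria-i} and \eqref{eq:body-effective-equilibria-ii}, throughout the continuum region $\alpha \in [\max\{\alphardagger, \alphaopt\}, \alphasdagger]$ every admissible, Pareto optimal equilibrium has the independent seller fulfilling demand at a price $\rho(\alpha) \in [\underline{\rho}(\alpha), \overline{\rho}(\alpha)] = [\max\{\prstar, \psind\}, \min\{\psstar, \pdagger\}]$. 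Hence $\pireq(\rho(\alpha)) = \pi_{r,s}(\rho(\alpha))$ and $\piseq(\rho(\alpha)) = \pi_{s,s}(\rho(\alpha))$, and it suffices to show that $\pi_{r,s}$ is nonincreasing and $\pi_{s,s}$ is nondecreasing on this interval; the two sandwiches then follow by evaluating at the endpoints.

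For the retailer, I would use that $\pi_{r,s}(p) = \alpha p q(p)$ is a positive multiple of the revenue $pq(p)$, which is strictly concave with unique maximizer $\tilde{p}$, and in particular unimodal (Corollary \ref{cor:unimodality}). By Lemma \ref{lemma:ptilde-ordering} we have $\prstar > \tilde{p}$, so the left endpoint obeys $\underline{\rho}(\alpha) = \max\{\prstar, \psind\} \geq \prstar > \tilde{p}$; the whole interval therefore lies in the decreasing branch $(\tilde{p}, 1]$ of $\pi_{r,s}$. Monotonicity then gives $\pi_{r,s}(\overline{\rho}(\alpha)) \leq \pi_{r,s}(\rho(\alpha)) \leq \pi_{r,s}(\underline{\rho}(\alpha))$, which is precisely the asserted bound on $\pireq$.

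For the seller, the analogous fact is that $\pi_{s,s}(p) = ((1-\alpha)p - c_s)q(p)$ is strictly concave and unimodal with unique maximizer $\psstar$ (well-defined throughout the continuum region). Here the right endpoint obeys $\overline{\rho}(\alpha) = \min\{\psstar, \pdagger\} \leq \psstar$, so the whole interval lies in the increasing branch $[0, \psstar]$ of $\pi_{s,s}$, and monotonicity yields $\pi_{s,s}(\underline{\rho}(\alpha)) \leq \pi_{s,s}(\rho(\alpha)) \leq \pi_{s,s}(\overline{\rho}(\alpha))$, the asserted bound on $\piseq$.

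The only genuine content is the two endpoint comparisons $\underline{\rho}(\alpha) \geq \prstar > \tilde{p}$ and $\overline{\rho}(\alpha) \leq \psstar$, both of which are immediate from Lemma \ref{lemma:ptilde-ordering} together with the definitions of $\underline{\rho}$ and $\overline{\rho}$; everything else is just unimodality applied on the correct monotone branch, so I do not anticipate a real obstacle. The one point worth a sentence of care is that $\rho(\alpha)$ genuinely lands inside $[\underline{\rho}(\alpha), \overline{\rho}(\alpha)]$ (guaranteed by the equilibrium characterization) and that, should the interval degenerate to a single point, all three quantities coincide and the inequalities hold with equality.
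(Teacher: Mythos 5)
Your proposal is correct and follows essentially the same route as the paper's own proof: both observe that the admissible equilibrium interval $\left[\max\left\{\prstar, \psind\right\},\min\left\{\psstar, p^\dagger\right\}\right]$ is contained in $\left[\tilde{p}, \psstar\right]$ via Lemma \ref{lemma:ptilde-ordering}, and then invoke concavity/unimodality of $\pi_{r,s}$ (maximized at $\tilde{p}$) and $\pi_{s,s}$ (maximized at $\psstar$) to conclude that $\pi_{r,s}$ is decreasing and $\pi_{s,s}$ is increasing across the interval. Your added remarks --- that $\rho(\alpha)$ lands in the interval by the equilibrium characterization, and that a degenerate interval yields equalities --- are sound but merely make explicit what the paper leaves implicit.
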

\begin{proof}
    Because $\rho$ is an equilibrium strategy profile, we know that for all $\alpha\in\left[\max\left\{\alphardagger, \alphaopt\right\}, \alphasdagger\right]$, $\rho(\alpha)\in \left[\max\left\{\prstar, \psind\right\},\min\left\{\psstar, p^\dagger\right\}\right]\subset \left[\tilde{p}, \psstar\right]$ per Lemma \ref{lemma:ptilde-ordering}. However, note that $\tilde{p}$ is the maximizer of the concave function $\pi_{r,s}$ and $\psstar$ is the maximizer of the concave function $\pi_{s,s}$. Thus, for any given $\alpha \in\left[\max\left\{\alphardagger, \alphaopt\right\}, \alphasdagger\right]$, the retailer achieves their maximum (minimum) payoff by choosing the lowest (highest) possible equilibrium price and the independent seller achieves their maximum (minimum) payoff by choosing the highest (lowest) possible equilibrium price.
\end{proof}

\subsection{Lower bounding the retailer's payoff}
\label{appendix:lower-bounding-retailer-payoff}
Following the argument from Corollary \ref{thm:bounded-equilibrium-payoffs}, we know given $\alpha\in \left[\max\left\{\alphardagger, \alphaopt\right\}, \alphasdagger\right]$, the retailer's payoff is minimized at the maximum possible price. In other words, the worst-case $\rho$ would be one that always chooses the maximum possible price when there are multiple equilibrium prices, precisely $\overline{\rho}(\alpha) \equiv \min\left\{\psstar, p^\dagger\right\}$ for $\alpha\in \left[\max\left\{\alphardagger, \alphaopt\right\}, \alphasdagger\right]$. Formally, for all $\rho$, we have
\[\max_\alpha \pireq(\alpha, c_r, c_s, \rho)\geq \max_\alpha \pireq(\alpha, c_r, c_s, \overline{\rho})\]
Thus, we turn our attention to maximizing the retailer's equilibrium payoff in the worst case with $\rho = \overline{\rho}$. First, we notice that since for all $\alpha$ we have
\begin{equation}\pi_{r,s}(\prind)=\pi_{r,r}(\prind) \leq \pi_{r,r}(\prstar) \label{eq:payoff-prind-prstar}\end{equation}
the maximizer of $\pireq$ is never in a piece where the independent seller fulfills demand at $\prind$. This follows because the retailer's equilibrium payoff is never greater than their payoff when setting the fee above $\alphasdagger$ and fulfilling demand themselves.


Because we are working with $\overline{\rho}$, the only remaining equilibrium outcomes that could maximize the retailer's payoff are ones where the independent seller fulfills demand at $\psstar$ or $p^\dagger$ and one where the retailer fulfills demand at $\prstar$. However, by construction $\pi_{r,s}(p^\dagger) = \pi_{r,r}(\prstar)$. Thus, all that remains is to understand if or when the retailer can achieve a greater payoff than $\pi_{r,r}(\prstar)$ in an equilibrium where the independent seller fulfills demand at $\psstar$. To this end, we introduce $\overline{\alpha}$ as the relatively Pareto-optimal maximizer of $\pi_{r,s}(\psstar)$, defined formally as follows --
\[\overline{\alpha}\equiv \min_\alpha\left[ \arg\max_\alpha\pi_{r,s}(\psstar,\alpha)\right]\]
First, we would like to understand if $\overline{\alpha}$ is in a region of paramter space where $\overline{\rho}(\overline{\alpha})=\psstar$. Indeed, we have the following --
\begin{lemma}
    If $\pi_{r,s}(\psstar, \overline{\alpha}) \geq \pi_{r,s}(p^\dagger)$, then $\overline{\alpha}\in [\alphasstar, \alphasdagger]$ and $\pireq(\overline{\alpha}, c_r, c_s, \overline{\rho})=\pi_{r,s}(\psstar,\overline{\alpha})$.
    \label{thm:alpha-in-support}
\end{lemma}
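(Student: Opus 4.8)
The plan is to read the hypothesis as a price inequality, squeeze $\overline{\alpha}$ between $\alphasstar$ and $\alphasdagger$, and then simply look up the equilibrium outcome in \eqref{eq:body-effective-equilibria-i}--\eqref{eq:body-effective-equilibria-ii}. Throughout I will lean on Lemma \ref{lemma:ptilde-ordering}, which places $\prstar, \psstar, p^\dagger$ all strictly to the right of $\tilde{p}$, together with the unimodality of $\pi_{r,s}$ (Corollary \ref{cor:unimodality}), so that $\pi_{r,s}$ is strictly decreasing on $[\tilde{p},1]$. Since $\pi_{r,s}(p^\dagger) = \pi_{r,r}(\prstar)$ by definition of $p^\dagger$ (the right-hand side of the hypothesis is therefore just $\pi_{r,r}(\prstar)$, which is defined regardless of whether $p^\dagger$ is), this machinery yields the central translation: whenever $p^\dagger$ exists, $\pi_{r,s}(\psstar,\alpha) \geq \pi_{r,r}(\prstar)$ holds if and only if $\psstar \leq p^\dagger$.

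For the lower bound $\overline{\alpha} \geq \alphasstar$ I would compare the two payoffs at the common price $\psstar$: a one-line computation gives $\pi_{r,r}(\psstar) - \pi_{r,s}(\psstar) = ((1-\alpha)\psstar - c_r)q(\psstar)$, which is positive exactly when $\psstar > \prind$, i.e. exactly when $\alpha < \alphasstar$. Hence for every $\alpha < \alphasstar$ we have $\pi_{r,s}(\psstar,\alpha) < \pi_{r,r}(\psstar) \leq \pi_{r,r}(\prstar)$, so no such $\alpha$ can satisfy the hypothesis; as $\overline{\alpha}$ does, $\overline{\alpha} \geq \alphasstar$ (vacuously true if $\alphasstar \leq 0$). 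For the upper bound $\overline{\alpha} \leq \alphasdagger$ I would argue by contradiction: if $\overline{\alpha} > \alphasdagger$ then, using $\alphasdagger > \alpharstar$ and $\pi_{r,s}(\psind(\alpharstar)) = (\prstar - c_s)q(\prstar) > (\prstar - c_r)q(\prstar) = \pi_{r,r}(\prstar)$, Corollary \ref{cor:alphasdagger-ordering} applies with $\hat{\alpha} = \alpharstar$ and forces $\psind > p^\dagger$ at $\overline{\alpha}$; since $\psstar > \psind$ by Lemma \ref{lemma:ptilde-ordering}, this gives $\psstar > p^\dagger$ and so $\pi_{r,s}(\psstar, \overline{\alpha}) < \pi_{r,r}(\prstar)$, contradicting the hypothesis.

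Finally, to evaluate $\pireq(\overline{\alpha}, \overline{\rho})$ I would split on $\overline{\alpha}$ versus $\alphaopt$. If $\alphasstar \leq \overline{\alpha} \leq \alphaopt$, nonemptiness of this interval forces $\alphasstar \leq \alphaopt$, equivalently $c_s \leq \csstar$, so we are in Case (i) and \eqref{eq:body-effective-equilibria-i} reports the outcome $(\psstar, s)$ directly, giving $\pireq = \pi_{r,s}(\psstar, \overline{\alpha})$. Otherwise $\overline{\alpha} > \alphaopt$, and $\overline{\alpha} \geq \max\{\alphaopt, \alphardagger\}$ (immediate in Case (i) since $\alphardagger \leq \alphaopt$, and in Case (ii) from $\overline{\alpha} \geq \alphasstar \geq \alphardagger$), so $\overline{\alpha}$ lies in the continuum interval $[\max\{\alphaopt, \alphardagger\}, \alphasdagger]$; there $\overline{\rho}(\overline{\alpha}) = \min\{\psstar, p^\dagger\}$, and because the hypothesis already supplied $\psstar \leq p^\dagger$ this collapses to $\psstar$, again giving $\pireq = \pi_{r,s}(\psstar, \overline{\alpha})$.

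The main obstacle I anticipate is the bookkeeping over when $p^\dagger$ is defined and when Corollary \ref{cor:alphasdagger-ordering} is licensed: the equivalence ``$\psind \leq p^\dagger \iff \alpha \leq \alphasdagger$'' only holds for $\alpha$ above a threshold, so I must confirm I invoke it only where $\overline{\alpha} \geq \alpharstar$ (which the contradiction hypothesis $\overline{\alpha} > \alphasdagger > \alpharstar$ supplies) and that $p^\dagger$ exists wherever I write $\psstar \leq p^\dagger$ (guaranteed since those invocations sit in the continuum region, where $\alpha \geq \alphardagger$). Once the ordering $\alphasstar \leq \alphaopt \iff c_s \leq \csstar$ and the case split are in place, the rest is routine.
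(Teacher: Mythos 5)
Your proof is correct in substance, and for the upper bound and the evaluation of $\pireq$ it follows essentially the paper's route: your central translation ($\pi_{r,s}(\psstar,\alpha)\geq \pi_{r,r}(\prstar)\iff \psstar\leq p^\dagger$, via unimodality and Lemma \ref{lemma:ptilde-ordering}) is exactly the paper's \eqref{eq:psstar-pdagger-condition}; your contradiction through Corollary \ref{cor:alphasdagger-ordering} with $\hat{\alpha}=\alpharstar$ (licensed by $\pi_{r,s}(\psind(\alpharstar))=(\prstar-c_s)q(\prstar)>\pi_{r,r}(\prstar)$) is the same mechanism the paper uses to get $\psind>p^\dagger>\psstar$ above $\alphasdagger$; and your case split on which side of $\alphaopt$ the fee $\overline{\alpha}$ falls mirrors the paper's closing line (``either $\overline{\rho}(\overline{\alpha})=\psstar$ when $\overline{\alpha}\in[\alphasstar,\alphaopt]$ or $\overline{\rho}(\overline{\alpha})=\min\{\psstar,p^\dagger\}=\psstar$''), just spelled out with the extra verification that $\overline{\alpha}>\alphaopt$ places you in the continuum region. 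Where you genuinely diverge is the lower bound: the paper argues indirectly that $\overline{\alpha}<\alphasstar$ would make $(\prind,s)$ fail to be a Nash equilibrium of the staying subgame (via optimality of $\psstar$ for $\pi_{s,s}$), contradicting the equilibrium derivation, whereas you compute $\pi_{r,r}(\psstar)-\pi_{r,s}(\psstar)=((1-\alpha)\psstar-c_r)q(\psstar)>0$ for $\alpha<\alphasstar$ and conclude $\pi_{r,s}(\psstar,\overline{\alpha})<\pi_{r,r}(\psstar)\leq\pi_{r,r}(\prstar)$ outright. Your version is shorter, purely computational, never refers back to the equilibrium catalogue, and applies uniformly across Cases (i) and (ii); the paper's indirect version buys nothing extra here.

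One small hole in the upper bound: you invoke $\psstar>\psind$ from Lemma \ref{lemma:ptilde-ordering}, but that inequality is stated only for $\alpha<1-c_s$; if $\overline{\alpha}\geq 1-c_s$ then $\psstar=1\leq\psind$ and your chain $\psstar>\psind>p^\dagger$ breaks down. The paper disposes of this boundary case separately, and the fix is one line: for $\overline{\alpha}\geq 1-c_s$ one has $\psstar=1$, hence $\pi_{r,s}(\psstar,\overline{\alpha})=\overline{\alpha}\,q(1)=0<\pi_{r,r}(\prstar)$, contradicting the hypothesis immediately. With that patch, and given your (correct) observation that the hypothesis should be read with right-hand side $\pi_{r,r}(\prstar)$ so that nothing depends on $p^\dagger$ being defined at small $\overline{\alpha}$, the argument is complete.
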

\begin{proof}
    To prove the upper bound, suppose for contradiction that $\overline{\alpha} > \alphasdagger$. Then, since $\psstar, \pdagger>\tilde{p}$ per Lemma \ref{lemma:ptilde-ordering}, we have
    \begin{equation}\pi_{r,s}(\psstar, \overline{\alpha})\geq \pi_{r,s}(\pdagger)\iff \pdagger\geq\psstar \label{eq:psstar-pdagger-condition}\end{equation}
    If $\overline{\alpha}\geq1-c_s$, then $\psstar=1$ so this is an immediate contradiction. Otherwise for $\overline{\alpha}\in\left(\alphasdagger,1-c_s\right)$, we have $\psind>\pdagger\geq\psstar$ which is a contradiction per Lemma \ref{lemma:ptilde-ordering}.

    For the lower bound, suppose for contradiction $\overline{\alpha}<\alphasstar$. Then, $\pi_{r,s}(\psstar, \overline{\alpha})\geq\pi_{r,r}(\prstar)\geq\pi_{r,r}(\prind)=\pi_{r,s}(\prind)$. By the optimality of $\psstar$ we have $\pi_{s,s}(\psstar, \overline{\alpha})> \pi_{s,s}(\prind)$ where the inequality holds strictly because $\overline{\alpha}<\alphasstar$ strictly. It follows from this that at $\overline{\alpha}$, $(\prind, s)$ cannot be a Nash equilibrium of the staying subgame, which is a contradiction per our earlier derivation. This concludes the proof that $\overline{\alpha}\in [\alphasstar, \alphasdagger]$.

    The second part of the lemma follows immediately because either $\overline{\rho}(\overline{\alpha})=\psstar$ when $\overline{\alpha}\in [\alphasstar,\alphaopt]$ or $\overline{\rho}(\overline{\alpha})=\min\left\{\psstar, \pdagger\right\} =\psstar$ per \eqref{eq:psstar-pdagger-condition}.
\end{proof}

Because of its dependence on $\psstar$, finding a simple closed form for $\overline{\alpha}$ is not possible in general, though we can still proceed with our goal of understanding when $\pi_{r,s}(\psstar, \overline{\alpha})\geq \pi_{r,r}(\prstar)$ by noting
\begin{equation}\pi_{r,s}(\psstar, \overline{\alpha})\geq \pi_{r,r}(\prstar)\iff \exists \alpha \text{ st } \pi_{r,s}(\psstar, \alpha)\geq \pi_{r,r}(\prstar)\label{eq:overline-alpha-equivalent-condition}\end{equation}
which is far simpler. Using this equivalent characterization, let's pause and prove the following sufficient condition:
\begin{lemma}
    If $c_s \leq \csstar$, $\pi_{r,s}(\psstar, \overline{\alpha}) \geq \pi_{r,s}(p^\dagger)$ always.
    \label{lemma:csstar-implies}
\end{lemma}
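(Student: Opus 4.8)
The plan is to invoke the equivalent characterization in \eqref{eq:overline-alpha-equivalent-condition}, which reduces the claim to exhibiting a \emph{single} fee $\alpha$ at which $\pi_{r,s}(\psstar, \alpha) \geq \pi_{r,r}(\prstar)$. Since $\pi_{r,s}(\pdagger) = \pi_{r,r}(\prstar)$ holds by construction of $\pdagger$, and since $\overline{\alpha}$ is by definition a maximizer of $\alpha \mapsto \pi_{r,s}(\psstar,\alpha)$, producing any such witnessing fee immediately gives $\pi_{r,s}(\psstar, \overline{\alpha}) = \max_\alpha \pi_{r,s}(\psstar, \alpha) \geq \pi_{r,r}(\prstar) = \pi_{r,s}(\pdagger)$, which is exactly the statement. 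So the entire argument collapses to choosing the right test fee.

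The crux is that choice. The naive candidate $\alpha = \alphasstar$ fails in the wrong direction: at that fee $\alphasstar = 1 - c_r/\psstar$, so $\pi_{r,s}(\psstar, \alphasstar) = (\psstar - c_r)q(\psstar) = \pi_{r,r}(\psstar) \leq \pi_{r,r}(\prstar)$ by optimality of $\prstar$, which is useless. Instead I would test $\alpha = \alphaopt$. By \eqref{eq:alphaopt-derivation} this is precisely the fee at which $\psstar = \prstar$, so the seller's optimal price collides with the retailer's; note also $\alphaopt = 1 - c_s/c_r \in (0,1)$ and $\alphaopt \leq 1 - c_s$ (using $c_r < 1$), so $\psstar$ is well-defined there. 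Evaluating both payoffs at the common price $\prstar$ turns the comparison into a one-line inequality.

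Concretely, at $\alpha = \alphaopt = 1 - c_s/c_r$ I would compute
\[\pi_{r,s}(\psstar, \alphaopt) = \alphaopt\,\prstar\,q(\prstar) = \left(1 - \tfrac{c_s}{c_r}\right)\prstar\,q(\prstar), \qquad \pi_{r,r}(\prstar) = (\prstar - c_r)\,q(\prstar).\]
Dividing by $q(\prstar) > 0$, the desired inequality $\pi_{r,s}(\psstar, \alphaopt) \geq \pi_{r,r}(\prstar)$ is equivalent to $(1 - c_s/c_r)\prstar \geq \prstar - c_r$, i.e.\ to $c_s\,\prstar \leq c_r^2$, i.e.\ to $c_s \leq c_r^2/\prstar = \csstar$. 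This is exactly the hypothesis, so the witnessing fee $\alpha = \alphaopt$ succeeds, and the lemma follows via the reduction of the first paragraph.

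The only genuine obstacle is recognizing that $\alphaopt$ is the fee where $\psstar$ and $\prstar$ coincide; once that observation is made, the hypothesis $c_s \leq \csstar$ drops out of the algebra verbatim. Everything else is bookkeeping: the equivalent characterization of $\overline{\alpha}$ from \eqref{eq:overline-alpha-equivalent-condition}, the defining identity $\pi_{r,s}(\pdagger) = \pi_{r,r}(\prstar)$, and clearing the common positive factor $q(\prstar)$.
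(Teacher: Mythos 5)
Your proof is correct, and it shares the paper's high-level skeleton --- reduce via the equivalence \eqref{eq:overline-alpha-equivalent-condition} to exhibiting a single witness fee at which $\pi_{r,s}(\psstar,\alpha)\geq \pi_{r,r}(\prstar)$ --- but your witness and verification are genuinely different from the paper's. The paper tests $\alpha=\alphardagger$, at which $\prind=\prstar=\pdagger$, and argues structurally: $c_s\leq \csstar$ gives $\alphardagger\leq \alphaopt$ by \eqref{eq:selleroptcost-derivation-ii}, hence $\tilde{p}<\psstar\leq \prstar$ by Lemma \ref{lemma:ptilde-ordering}, and unimodality of $\pi_{r,s}$ (Corollary \ref{cor:unimodality}) then yields $\pi_{r,s}(\psstar,\alphardagger)\geq \pi_{r,s}(\prstar,\alphardagger)=\pi_{r,s}(\pdagger,\alphardagger)=\pi_{r,r}(\prstar)$. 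You instead test $\alpha=\alphaopt$, where $\psstar(\alphaopt)=\prstar$ per \eqref{eq:alphaopt-derivation} (the first-order conditions \eqref{eq:prstar-foc} and \eqref{eq:psstar-foc} coincide there), and verify by direct computation: after clearing $q(\prstar)>0$, the desired inequality reduces to $c_s\prstar\leq c_r^2$, which is the hypothesis $c_s\leq \csstar$ verbatim. Your route is more elementary --- it needs neither Lemma \ref{lemma:ptilde-ordering} nor unimodality, only the collision of the two optimal prices at $\alphaopt$ --- and it is sharper in one respect: it shows the witness $\alphaopt$ works \emph{if and only if} $c_s\leq \csstar$, making transparent exactly what the hypothesis buys. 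What the paper's choice of $\alphardagger$ buys is structural reuse: the configuration $\prind=\prstar=\pdagger$ at $\alphardagger$ recurs in the continuity argument of Proposition \ref{thm:continuity-equilibrium-payoff} and in pinning down $\alphastar(\overline{\rho})=\alphardagger$ in the complementary case of \eqref{eq:alphastar-cases-underline-rho}. Your side remark disqualifying $\alphasstar$ as a witness is also accurate under the body's definition $\alphasstar=1-c_r/\psstar$ from \eqref{eq:alphasstar-introduction}, though it is not needed for the proof.
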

\begin{proof}
    Consider $\alpha = \alphardagger$, at which $\prind=\prstar=\pdagger$. Since $c_s\leq \csstar\implies \alphardagger\leq \alphaopt$, we have $\prstar \geq \psstar>\tilde{p}$. Thus,
    \[\pi_{r,s}(\psstar,\alphardagger)\geq \pi_{r,s}(\prstar,\alphardagger)=\pi_{r,s}(\pdagger, \alphardagger)=\pi_{r,r}(\prstar)\]
    which is sufficient to conclude the proof per \eqref{eq:overline-alpha-equivalent-condition}.
\end{proof}

The proof of Lemma \ref{lemma:csstar-implies} breaks down when $c_s \geq \csstar$ because we have now that $\alphaopt\leq \alphardagger$. When $c_s \geq \csstar$, there sometimes exists an $\alpha$ satisfying \eqref{eq:overline-alpha-equivalent-condition} and sometimes does not. Finally, we note that the equilibrium fee is particularly simple in the following case --

\begin{lemma}
    Suppose $\pi_{r,s}(\psstar, \alpha)\leq \pi_{r,r}(\prstar)$. Then, $\alphastar(c_r, c_s, \overline{\rho})=\alphardagger$.
\end{lemma}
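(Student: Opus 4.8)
The plan is to show that, viewed as a function of $\alpha$, the retailer's equilibrium payoff $\pireq(\alpha, \overline{\rho})$ never exceeds the constant $\pi_{r,r}(\prstar)$ and first attains this value exactly at $\alphardagger$. The hypothesis says $\pi_{r,s}(\psstar, \alpha) \le \pi_{r,r}(\prstar)$ for every $\alpha$, i.e.\ it is the negation of the existence clause in \eqref{eq:alphastar-cases-underline-rho}, so the statement is precisely that display's ``else'' branch. By Lemma~\ref{lemma:csstar-implies}, if $c_s \le \csstar$ then $\pi_{r,s}(\psstar, \overline{\alpha}) \ge \pi_{r,s}(\pdagger) = \pi_{r,r}(\prstar)$, which forces equality under the hypothesis; I will use this to reduce to the Case (ii) structure \eqref{eq:body-effective-equilibria-ii}, in which $\pireq(\alpha, \overline{\rho})$ is the three-piece function equal to $\pi_{r,r}(\prind)$ on $\alpha \le \alphardagger$, to $\pi_{r,s}(\min\{\psstar, \pdagger\}, \alpha)$ on $\alphardagger \le \alpha \le \alphasdagger$, and to $\pi_{r,r}(\prstar)$ on $\alpha \ge \alphasdagger$ (continuity across the joins is Proposition~\ref{thm:continuity-equilibrium-payoff}).

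Next I would bound each piece by $\pi_{r,r}(\prstar)$, which is independent of $\alpha$. On the first piece $\prind \le \prstar$, so strict unimodality of $\pi_{r,r}$ (Corollary~\ref{cor:unimodality}) gives $\pi_{r,r}(\prind) \le \pi_{r,r}(\prstar)$. On the middle piece, if $\pdagger \le \psstar$ then $\pi_{r,s}(\min\{\psstar,\pdagger\},\alpha) = \pi_{r,s}(\pdagger) = \pi_{r,r}(\prstar)$ by the defining equation of $\pdagger$, while if $\psstar \le \pdagger$ the payoff is $\pi_{r,s}(\psstar, \alpha) \le \pi_{r,r}(\prstar)$ directly by hypothesis. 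The third piece equals $\pi_{r,r}(\prstar)$. Hence $\pireq(\alpha, \overline{\rho}) \le \pi_{r,r}(\prstar)$ everywhere and the bound is attained, so $\max_\alpha \pireq(\alpha, \overline{\rho}) = \pi_{r,r}(\prstar)$.

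To obtain the minimal maximizer, observe that $\prind = c_r/(1-\alpha)$ is strictly increasing in $\alpha$ and equals $\prstar$ only at $\alphardagger$ (by definition of $\alphardagger$ together with Lemma~\ref{lemma:prind-prstar-pdagger-ordering}); thus for $\alpha < \alphardagger$ we have $\prind < \prstar$, and strict unimodality gives $\pi_{r,r}(\prind) < \pi_{r,r}(\prstar)$, so the maximum is not attained below $\alphardagger$. At $\alpha = \alphardagger$, where $\prind = \prstar = \pdagger$, the payoff equals $\pi_{r,r}(\prind) = \pi_{r,r}(\prstar)$, the global maximum. Since $\alphastar(c_r,c_s,\overline{\rho})$ is by definition the smallest $\alpha$ achieving the maximal equilibrium payoff, this yields $\alphastar(c_r, c_s, \overline{\rho}) = \alphardagger$.

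The main obstacle is the reduction to Case (ii) and the accompanying boundary bookkeeping: when $c_s \le \csstar$ the outcome map \eqref{eq:body-effective-equilibria-i} inserts an extra ``$(\psstar, s)$'' piece on $[\alphasstar, \alphaopt]$, part of which lies below $\alphardagger$, and there $\pireq = \pi_{r,s}(\psstar, \alpha)$ could a priori reach $\pi_{r,r}(\prstar)$ before $\alphardagger$ and spoil the minimality claim. I would close this gap by noting, exactly as in the proof of Lemma~\ref{lemma:csstar-implies}, that $\pi_{r,s}(\psstar, \alphardagger) \ge \pi_{r,r}(\prstar)$, so under the hypothesis equality holds precisely at $\alphardagger$; by unimodality of $\alpha \mapsto \pi_{r,s}(\psstar, \alpha)$ its first maximizer $\overline{\alpha}$ then cannot lie strictly below $\alphardagger$, so no earlier maximizer of $\pireq$ appears and the extra piece is harmless. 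Everything else---monotonicity of $\prind$, the defining identity of $\pdagger$, and unimodality of the payoff functions---is routine given the results already established.
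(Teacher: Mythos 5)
Your backbone is the same as the paper's proof: under $\overline{\rho}$ the equilibrium payoff is the piecewise function equal to $\pi_{r,s}(\prind)=\pi_{r,r}(\prind)$ below $\alphardagger$, to $\pi_{r,s}(\min\{\psstar,\pdagger\},\alpha)$ on the continuum region, and to $\pi_{r,r}(\prstar)$ above $\alphasdagger$; each piece is bounded by $\pi_{r,r}(\prstar)$ (the middle one via the defining identity $\pi_{r,s}(\pdagger)=\pi_{r,r}(\prstar)$ together with the hypothesis), and the bound is first attained at $\alphardagger$ because $\pi_{r,r}(\prind)\leq\pi_{r,r}(\prstar)$ holds with equality iff $\prind=\prstar$ iff $\alpha=\alphardagger$. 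That part is correct, and you deserve credit for flagging a genuine subtlety that the paper's three-line proof silently skips: in Case (i) the outcome map \eqref{eq:body-effective-equilibria-i} has a $(\psstar,s)$ piece on $[\alphasstar,\alphaopt]$, part of which lies below $\alphardagger$, where $\pireq=\pi_{r,s}(\psstar,\alpha)$ could a priori equal $\pi_{r,r}(\prstar)$ and defeat the minimality of $\alphardagger$.

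Your patch for that subtlety is where the gap lies. You invoke ``unimodality of $\alpha\mapsto\pi_{r,s}(\psstar,\alpha)$,'' but no such fact is established anywhere: Corollary~\ref{cor:unimodality} gives unimodality of $\pi_{r,s}$ in the \emph{price} at fixed $\alpha$, whereas here $\psstar(\alpha)$ moves with $\alpha$ and the composite $\alpha\mapsto\alpha\,\psstar q(\psstar)$ multiplies an increasing factor by a decreasing one (since $\psstar>\tilde p$ and $pq(p)$ decreases past $\tilde p$); the paper proves only continuity of this map and explicitly declines to characterize $\overline{\alpha}$ in closed form. Moreover, even granting unimodality, your inference that the first maximizer ``cannot lie strictly below $\alphardagger$'' needs a \emph{unique} maximizer: a unimodal function with a flat top could satisfy $\pi_{r,s}(\psstar,\alpha')=\pi_{r,r}(\prstar)$ at some $\alpha'<\alphardagger$, consistent with everything you assert, giving $\alphastar\leq\alpha'<\alphardagger$. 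The clean repair is cheaper and avoids unimodality entirely: rerun the proof of Lemma~\ref{lemma:csstar-implies} with strict inequalities. If $c_s<\csstar$ then $\alphardagger<\alphaopt$, so at $\alpha=\alphardagger$ we have $\tilde p<\psstar<\prstar$ strictly, and strict monotonicity of $\pi_{r,s}(\cdot,\alphardagger)$ past $\tilde p$ gives
\[\pi_{r,s}(\psstar,\alphardagger)>\pi_{r,s}(\prstar,\alphardagger)=\pi_{r,s}(\pdagger,\alphardagger)=\pi_{r,r}(\prstar),\]
contradicting the hypothesis. Hence the hypothesis of the lemma \emph{forces} $c_s\geq\csstar$, so only the Case (ii) structure can occur (at the boundary $c_s=\csstar$ the fees $\alphasstar$, $\alphardagger$, $\alphaopt$ coincide and the extra piece degenerates to the single fee $\alphardagger$), the problematic region below $\alphardagger$ never exists, and your Case (ii) argument then completes the proof exactly as the paper's does.
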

\begin{proof}
    Notice that \eqref{eq:payoff-prind-prstar} holds with equality if and only if $\prind=\prstar \iff \alpha=\alphardagger$. Thus, $\alphardagger$ is the smallest $\alpha$ such that $\pireq(\alpha, c_r, c_s, \overline{\rho})=\pi_{r,r}(\prstar)$. Furthermore, for all $\alpha > \alphardagger$, we have $\pi_{r,s}(\psstar, \alpha)\leq \pi_{r,r}(\prstar)$ by assumption\footnote{Combining this with \eqref{eq:psstar-pdagger-condition} also implies that $\forall \alpha\geq \alphardagger$, $\pireq(\alpha, c_r, c_s, \overline{\rho})=\pi_{r,r}(\prstar)$}. Thus, $\alphardagger$ is indeed the minimum fee such that the retailer attains their maximum payoff, concluding the proof.
\end{proof}

Combining the lemmas, we can conclude that
\begin{equation}
    \max_\alpha \pireq(\alpha, c_r, c_s, \rho)\geq \max_\alpha \pireq(\alpha, c_r, c_s, \overline{\rho}) = \begin{cases}
        \pi_{r,s}(\psstar, \overline{\alpha}) & \exists \alpha \text{ s.t. } \pi_{r,s}(\psstar, \alpha)> \pi_{r,r}(\prstar) \\
        \pi_{r,r}(\prstar)                    & \text{else}
    \end{cases}
    \label{eq:lower-bound-retailer-payoff}
\end{equation}
We have shown that $\alphastar(\overline{\rho}) = \overline{\alpha}$ in the first case and $\alphastar(\overline{\rho}) = \max\left\{\alphardagger, \alphaopt\right\}$ otherwise. Ideally, we would be able to directly translate this to a tight bound on $\alphastar\geq \overline{\alpha}$ for all $\rho$ in the first case, however it is still possible that there exists some $\rho$ and some $\alpha'<\overline{\alpha}$ such that
\begin{equation}
    \max_\alpha\pi_{r,s}(\rho(\alpha)) = \pi_{r,s}(\rho(\alpha')) > \pi_{r,s}(\rho(\overline{\alpha}))
    \label{eq:counterxample-bounding-alpha-star}
\end{equation}
For instance, \eqref{eq:counterxample-bounding-alpha-star} could hold if $\rho(\alpha') = \prstar$ for some well-chosen $\alpha'<\overline{\alpha}$, but for all other $\alpha$, we set $\rho = \overline{\rho}$. If we impose some assumptions on the functional form of $\rho$, we may be able to translate \eqref{eq:lower-bound-retailer-payoff} into a tighter bound on $\alphastar$. However, we do not investigate further tightening the bound in this work as we have at least shown that the lower bound $\alphastar\geq \max\left\{\alphardagger, \alphaopt\right\}$ is tight whenever we are in the second case of \eqref{eq:lower-bound-retailer-payoff}.

With this, the maximum payoff that the independent seller could achieve in equilibrium of the sequential game is either by fulfilling demand at $p^\dagger$ or $\psstar$. If it is at $\psstar$, clearly they would prefer the lowest possible $\alpha$. If it is at $p^\dagger$ for some $\alpha'\geq \max\left\{\alphardagger, \alphaopt\right\}$, we have
\[\pi_{s,s}\left(p^\dagger, \alpha'\right)\leq \pi_{s,s}\left(p^\dagger, \max\left\{\alphardagger, \alphaopt\right\}\right)\leq \pi_{s,s}\left(\psstar, \max\left\{\alphardagger, \alphaopt\right\}\right)\]
This leaves us with the result that the independent seller's payoff in equilibrium of the sequential game is bounded by
\[\piseq(\rho(\alphastar))\leq \pi_{s,s}\left(\psstar, \max\left\{\alphardagger, \alphaopt\right\}\right)\]
which is loose for similar reasons as our bound on $\alphastar$.

\subsection{Upper bounding the retailer's payoff}
\label{appendix:upper-bounding-retailer-payoff}

Analogous to the previous subsection, we know that the upper bound for the retailer's payoff will be with the strategy profile $\underline{\rho}\equiv \max\left\{\prstar, \psind\right\}$. The retailer's maximum payoff either occurs when the independent seller fulfills demand at $\prstar$ or $\psind$, or when the retailer fulfills demand themselves at $\prstar$.

Recall that $\psind\leq \prstar \iff \alpha \leq \alpharstar$. Notice that $\pi_{r,s}(\prstar,\alpha)=\alpha\prstar q(\prstar)$ is increasing in $\alpha$ because $\prstar $ is independent of $\alpha$. Thus, it immediately follows that
\begin{equation}\max_{\alpha\in \left[\max\left\{\alphaopt, \alphardagger\right\}, \alpharstar\right]}\pireq(\alpha, c_r, c_s, \underline{\rho})= \max_{\alpha\in \left[\max\left\{\alphaopt, \alphardagger\right\}, \alpharstar\right]}\pi_{r,s}(\prstar, \alpha) = \pi_{r,s}(\prstar, \alpharstar)\label{eq:max-pireq-underlinerho-i}\end{equation}

At $\alpharstar = 1-\frac{c_s}{\prstar}$ per \eqref{eq:alpharstar-derivation} we have
\[\pi_{r,s}(\prstar, \alpharstar)=\left(1-\frac{c_s}{\prstar}\right)\prstar q(\prstar) = (\prstar-c_s)q(\prstar)>(\prstar-c_r)q(\prstar)= \pi_{r,r}(\prstar)\]
which implies the retailer always prefers to set $\alpharstar$ rather than any $\alpha\geq \alphasdagger$ where they fulfill demand themselves.

Now let's turn our attention to the other half of the interval, $\left[\alpharstar, \alphasdagger\right]$, on which the equilibrium is that the independent seller fulfills demand at $\psind$. Indeed, on this interval we claim that $\pi_{r,s}(\psind, \alpha)$ is decreasing in $\alpha$. Indeed, notice $\psind = \frac{c_s}{1-\alpha}$ is increasing in $\alpha$, and at $\alpharstar$, $\psind=\prstar>\tilde{p}$ per Lemma \ref{lemma:ptilde-ordering}. Thus, for all $\alpha\geq \alpharstar$, we have $\psind\geq \prstar>\tilde{p}$, and since $\pi_{r,s}$ is unimodal per Corollary \ref{cor:unimodality}, this is sufficient to conclude the proof of the claim. In light of this, we have
\begin{equation}\max_{\alpha\in \left[\alpharstar, \alphasdagger\right]}\pireq(\alpha, c_r, c_s, \underline{\rho})= \max_{\alpha\in \left[\alpharstar, \alphasdagger\right]}\pi_{r,s}(\psind, \alpha) = \pi_{r,s}(\psind, \alpharstar)\end{equation}
Combining this with \eqref{eq:max-pireq-underlinerho-i} is sufficient to conclude that
\[\max_\alpha \pireq(\alpha, c_r,c_s,\underline{\rho}) = \pireq(\alpharstar, c_r, c_s, \underline{\rho}) = \pi_{r,s}(\prstar, \alpharstar) = \pi_{r,s}(\psind, \alpharstar)\]

Just as the previous case, this is unfortunately not sufficient to conclude that $\alphastar\leq \alpharstar$ in general; the best we can do is conclude that $\alphastar\leq \alphasdagger$. For the independent seller, it results only in the trivial bound that $\piseq(\alpha, c_r,c_s, \rho)\geq 0$, where the bound is tight because when $\alpha = \alpharstar$ and $\rho = \underline{\rho}$, the independent seller is fulfilling demand at $\psind$ and achieves a payoff of $0$.

\end{document}